\newcommand{\R}{\mathbb{R}}
\newcommand{\N}{\mathbb{N}}
\newcommand{\Q}{\mathbb{Q}}
\newcommand{\ex}{\operatorname{ex}}
\newcommand{\minex}{\operatorname{minex}}
\newcommand{\problem}{\textsc{Multi-Sink }\allowbreak\mbox{\ensuremath{\alpha}\textsc{-Commodity }}\allowbreak\textsc{Flow}}
\newcommand{\optset}{S^{\ast}}
\newcommand{\greedyset}{S^{\textrm{G}}}
\newcommand{\nagreedyset}{\tilde{S}^{\textrm{G}}}
\newcommand{\U}{U}
\newcommand{\I}{\mathcal{I}}
\newcommand{\lastk}{\bar{k}}
\newcommand{\oneto}[1]{[#1]}
\newcommand{\setf}{\mathcal{F}}
\newcommand{\e}{\mathrm{e}}
\newcommand{\srfuncs}{\tilde{\setf}}
\newcommand{\augfuncs}{\setf}
\newcommand{\rqfuncs}{\setf}
\newcommand{\gafuncs}{\tilde{\setf}}
\renewcommand{\vec}[1]{\boldsymbol{#1}}
\crefname{hypothesis}{Hypothesis}{Hypotheses}
\title{Unified Greedy Approximability Beyond Submodular Maximization\thanks{An extended abstract of this paper appeared in \cite{DisserWeckbecker/20}.\funding{Supported by DFG grant DI 2041/2.}}}
\author{Yann Disser\thanks{TU Darmstadt
		(\email{\{disser$|$weckbecker\}@mathematik.tu-darmstadt.de}).}
	\and David Weckbecker\footnotemark[2]}
\begin{document}

\maketitle

% REQUIRED
\begin{abstract}
We consider classes of objective functions of cardinality-constrained
maximization problems for which the greedy algorithm guarantees a
constant approximation. We propose the new class of $\gamma$-$\alpha$-augmentable
functions and prove that it encompasses several important subclasses,
such as functions of bounded submodularity ratio, $\alpha$-augmentable
functions, and weighted rank functions of an independence system of
bounded rank quotient -- as well as additional objective functions
for which the greedy algorithm yields an approximation. For this general
class of functions, we show a tight bound of~$\frac{\alpha}{\gamma}\cdot\frac{\mathrm{e}^{\alpha}}{\mathrm{e}^{\alpha}-1}$
on the approximation ratio of the greedy algorithm that tightly interpolates
between bounds from the literature for functions of bounded submodularity
ratio and for $\alpha$-augmentable functions. In particular, as a
by-product, we close a gap in {[}Math.Prog., 2020{]} by obtaining
a tight lower bound for $\alpha$-augmentable functions for all $\alpha\geq1$.
For weighted rank functions of independence systems, our tight bound
becomes~$\frac{\alpha}{\gamma}$, which recovers the known bound
of $1/q$ for independence systems of rank quotient at least~$q$.
\end{abstract}

% REQUIRED
\begin{keywords}
greedy algorithm, approximation ratio, cardinality-constrained maximization, independence system, submodularity ratio, augmentability
\end{keywords}

% REQUIRED
\begin{MSCcodes}
68W25, 90C27, 68Q25
\end{MSCcodes}

\section{Introduction}

We consider cardinality-constrained maximization problems of the form
\begin{align*}
	\max & \,\,f(X)\\
	\mathrm{s.t.} & \,\,|X|\leq k\\
	& \,\,X\subseteq\U,
\end{align*}
with a \textit{monotone} objective function $f\colon2^{\U}\to\mathbb{R}_{\geq0}$
over a finite ground set $U$. Additional constraints of the form
$X\in\mathcal{X}$ can be modeled by the monotone objective $f'(X):=\max\{f(Y)|Y\in2^{X}\cap\mathcal{X}\}$.
In this way, every combinatorial, cardinality-constrained maximization
problem with monotone objective can be captured, and we adopt this
framework throughout the paper.\footnote{Note that the objective function $f$ may be computationally hard
	to evaluate. If we assume that the greedy algorithm has oracle access
	to $f$, it requires $O(|U|k)$ queries to the oracle.} For example, the maximum weighted matching problem on a graph $G=(V,E)$
with edge weights~$w\colon E\to\mathbb{R}_{\geq0}$ yields the objective
function $\mbox{\ensuremath{f(X\!\subseteq\!E)=\max\{\sum_{e\in M}w(e)|M\subseteq X,M\text{ is a matching in }G\}}}$.

We focus on the performance of the\textit{ greedy algorithm}. This algorithm
iteratively produces a solution $\mbox{\ensuremath{\greedyset_{f,k}=\{x_{1},\dots,x_{k}\}}}$
with 
\[
x_{i}\in{\textstyle \arg\max_{x\in U\setminus\{x_{1},\dots,x_{i-1}\}}}f(\{x_{1},\dots,x_{i-1}\}\cup\{x\}),
\]
for all $i\in\oneto k:=\{1,\dots,k\}$, i.e., it adds elements such
that the increase in objective value is maximized in each step. The
greedy algorithm is inherently incremental and may be regarded as
the most natural approach for incrementally building up infrastructures
that support changing active solutions (in the sense of the definition
$f'(X)$ above). While this algorithm is widely used in practical
applications, greedy solutions can be arbitrarily far away from optimal
(e.g., for the knapsack problem). A natural question in this context
is, for which objective functions $f$ the greedy algorithm gives
a good solution. We are interested in characterizing these objective
functions.

Note that we consider the \textit{adaptive }greedy solution $\smash{\greedyset_{f,k}}$
as opposed to the \textit{non-adaptive} greedy solution $\smash{\nagreedyset_{f,k}:=\greedyset_{f,\min\{k,\bar{k}\}}}$,
where $\lastk\in\oneto{|U|}$ is the smallest cardinality such that
$\mbox{\ensuremath{f(\greedyset_{\lastk}\cup\{x\})=f(\greedyset_{\lastk})}}$
for all $\mbox{\ensuremath{x\in U\setminus\greedyset_{\lastk}}}$.
In other words, the non-adaptive greedy algorithm terminates as soon
as it cannot improve the solution further. This non-adaptive variant
of the greedy algorithm has often been considered in the early literature~(e.g.,
\cite{Jenkyns1976,Korte1978,Nemhauser1978,Nemhauser1978II}). Note,
that for submodular functions, i.e., functions with $\mbox{\ensuremath{f(X\cup Y)+f(X\cap Y)\leq f(X)+f(Y)}}$
for all $X,Y\subseteq U$, there is no difference between these two
variants, and for our purposes both variants are interchangeable in
the following sense.

Formally, we measure the quality of the greedy algorithm on a set
of objectives~$\setf$ by the approximation ratio $\sup_{f\in\setf}\max_{k\in[|U_{f}|]}\!f(\optset_{f,k})/f(X_{f,k})$,
where $U_{f}$ is the ground set of the function $f\in\mathcal{F}$,
$\optset_{f,k}\!\in\!\arg\max_{X\subseteq U:|X|\leq k}f(X)$ denotes
an optimum solution of cardinality at most~$k$, and $\smash{X_{f,k}\in\{\greedyset_{f,k},\nagreedyset_{f,k}\}}$
refers to the \mbox{(non-)adaptive} greedy solution of cardinality~$k$.
We claim that the approximation ratios of both variants of the greedy
algorithm coincide. To see this, observe that the non-adaptive setting
is more restrictive, and that every lower bound instance in the non-adaptive
setting can be made adaptive by introducing additional elements that
add a vanishingly small but positive objective value when added to
every solution. This implies that all our bounds on the approximation
ratio of the (adaptive) greedy algorithm immediately apply to both
variants.

From now on, we write $\greedyset_{k}:=\greedyset_{f,k}$ and $\optset_{k}:=\optset_{f,k}$,
whenever $f$ is clear from the context. In these terms, we are interested
in characterizing the set of objectives for which the greedy algorithm
has a bounded approximation ratio. Known examples include the objectives
of maximum (weighted) $\mbox{(b-)matching}$, maximum (weighted) coverage,
and many more~\cite{BernsteinDisserGrossHimburg/20,Bian2017,Das2018,Krause2011,Williamson2011},
and we additionally introduce a multi-commodity flow problem (Section~\ref{sec:lower_bound_alph-aug=000026separation}),
where the greedy algorithm yields an approximation.

A well-known class of functions for which the greedy algorithm has
a bounded approximation ratio of (exactly) $\frac{\e}{\e-1}$ are
the monotone, submodular functions~\cite{Nemhauser1978}. This class
includes the maximum coverage problem, but fails to capture many other
greedily approximable settings. See Figure~\ref{fig:classes} along
with the following.

Das and Kempe~\cite{Das2018} introduced the class of functions of
bounded \textit{submodularity ratio }as a generalization of submodular
functions. Importantly, its definition depends on the greedy solutions
for different cardinalities. We adapt and weaken the definition from~\cite{Das2018}
for consistency, by restricting ourselves to greedy solutions and
by minimizing over all cardinalities.
\begin{defn}
	[\cite{Das2018}]The \textit{weak submodularity ratio }of $f\colon2^{\U}\rightarrow\R_{\geq0}$
	is (using $\frac{0}{0}:=1$)\label{def:submodularity_ratio}
	\[
	\gamma(f):=\min_{X\in\{\greedyset_{0},\dots,\greedyset_{\lastk}\},Y\subseteq\U\setminus X}\frac{\sum_{y\in Y}(f(X\cup\{y\})-f(X))}{f(X\cup Y)-f(X)}\in[0,1].
	\]
\end{defn}

Das and Kempe~\cite{Das2018} showed an upper bound of $\frac{\e^{\gamma}}{\e^{\gamma}-1}$
on the approximation ratio of the greedy algorithm for the set of
all monotone functions with submodularity ratio at least $\gamma>0$,
and Bian et al.~\cite{Bian2017} extended this to a tight bound that
is additionally parameterized by the curvature of the objective. Since
submodular functions have submodularity ratio $1$, this bound generalizes
the submodular bound. Crucially, it is easy to verify that these
results carry over to the set~$\gafuncs_{\gamma}$ of all monotone
functions with \textit{weak} submodularity ratio at least $\gamma>0$.\footnote{Here and throughout we use the notation $\tilde{\mathcal{F}}$ as
	opposed to $\mathcal{F}$ to refer to a function class based on a
	\textit{weak} definition.}

Another generalization of submodularity was proposed by Bernstein
et al.~\cite{BernsteinDisserGrossHimburg/20}. We extend the definition
by a weakened variant in order to bring it more in line with Definition~\ref{def:submodularity_ratio}.
\begin{defn}
	[\cite{BernsteinDisserGrossHimburg/20}]The function $f\colon2^{\U}\rightarrow\R_{\geq0}$
	is \textit{(weakly) $\alpha$-augmentable} for $\alpha\geq1$, if, for
	every $X\subseteq U$ ($X\in\{\greedyset_{0},\dots,\greedyset_{\lastk}\}$)
	and $Y\subseteq\U$ with $Y\nsubseteq X$, there exists an element
	$y\in Y\setminus X$ with
	\[
	f(X\cup\{y\})-f(X)\geq\frac{f(X\cup Y)-\alpha f(X)}{\left|Y\right|}.
	\]
\end{defn}

Bernstein et al.~showed that the greedy algorithm has an approximation
ratio of at most $\alpha\cdot\frac{\e^{\alpha}}{\e^{\alpha}-1}$ on
the set~$\augfuncs_{\alpha}$ of monotone, $\alpha$-augmentable
functions, for $\alpha\geq1$, and that this bound is tight for $\alpha\in\{1,2\}$
and in the limit~$\alpha\to\infty$. Since submodular functions are
$1$-augmentable, this bound again generalizes the submodular bound.
The class of $\alpha$-augmentable problems captures the objective
of the maximum (weighted) $\alpha$-dimensional matching problem,
which is not submodular. In this paper, we introduce a natural $\alpha$-commodity
flow variant that is $\alpha$-augmentable, and we prove a tight lower
bound on the approximation ratio for all $\alpha\geq1$.

Another well-known setting, besides submodularity, where the greedy
algorithm has a bounded approximation ratio, are weighted rank functions
of independence systems of bounded rank quotient~\cite{KorteVygen2012}.
An \textit{independence system} is a tuple $(\U,\I\subseteq2^{\U})$,
where $\I$ is closed under taking subsets and $\emptyset\in\I$.
For a given weight\linebreak function \mbox{$w\colon\U\to\mathbb{R}_{\geq0}$}, the
\textit{weighted rank function} of $(\U,\I)$ is given by\linebreak \mbox{$f(X)=\max\{\sum_{x\in Y}w(x)|Y\in\I\cap2^{X}\}$}.
The \textit{rank quotient }of an independence system $(\U,\I)$ is $q(\U,\I):=\min_{X\subseteq\U}\min_{B,B'\in\mathcal{B}(X)}|B|/|B'|$,
where we set~\mbox{$\frac{0}{0}:=1$}, and the set~$\mathcal{B}(X)$
of all \textit{bases }of some set $X\subseteq\U$ is defined to be the
set of inclusion-wise maximal subsets of $\I\cap2^{X}$, i.e., \mbox{$\mathcal{B}(X):=\{B\in\I\cap2^{X}|\forall x\in X\setminus B\colon B\cup\{x\}\notin\I\}$}.
Jenkyns~\cite{Jenkyns1976} and Korte and Hausmann~\cite{Korte1978}
showed that the greedy algorithm has an approximation ratio of exactly
$1/q$ on the set~$\rqfuncs_{q}$ of all weighted rank functions
of independence systems with rank quotient at least $q>0$.\footnote{Note that we abuse notation, since, e.g., $\rqfuncs_{\alpha}\neq\rqfuncs_{q}$
	for $\alpha=q=1$. However, the set of functions we are referring
	to will always be clear by the naming of the indices.}

\paragraph{Our results.}

Our goal is to unify and to generalize the above classes of functions
on which the greedy algorithm has a bounded approximation ratio. To
this end, we first observe that each one of the classes $\srfuncs_{\gamma}$,
$\augfuncs_{\alpha}$, and~$\rqfuncs_{q}$ uniquely captures greedily
approximable objectives (cf.~Figure~\ref{fig:classes} and Propositions~\ref{prop:separating_alpha_aug_lite},
\ref{prop:separating_submod_ratio}, \ref{prop:separating_rank_quot},
and \ref{prop:separating_alpha_aug}). In particular, we construct
a natural $\alpha$-augmentable variant of multi-commodity flow that
does not have bounded (weak) submodularity ratio (for~$\alpha\in\mathbb{N}\setminus\{1\}$)
and cannot be expressed as the maximization of a weighted rank function.
Besides the $\alpha$-dimensional matching problem, to our knowledge,
the problem introduced in Section~\ref{sec:lower_bound_alph-aug=000026separation}
is the only other natural $\alpha$-augmentable problem to date.
\begin{prop}
	For every $\gamma,q\in(0,1)$ and $\alpha\geq1$, it holds that\label{prop:separation}
	\[
	\srfuncs_{\gamma}\nsubseteq(\augfuncs_{\alpha}\cup\rqfuncs_{q})\quad\mathrm{and}\quad\augfuncs_{\alpha}\nsubseteq(\srfuncs_{\gamma}\cup\rqfuncs_{q})\quad\mathrm{and}\quad\rqfuncs_{q}\nsubseteq(\srfuncs_{\gamma}\cup\augfuncs_{\alpha}).
	\]
\end{prop}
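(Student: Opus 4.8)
The plan is to prove the three non-inclusions separately, for each one exhibiting a single explicit function (or a small parametrized family) that lies in the left-hand class but in neither of the two classes on the right. For each candidate function $f\colon 2^U\to\R_{\geq 0}$ I would need to verify two things: that $f$ lies in the claimed class (checking the defining inequality of weak submodularity ratio, weak $\alpha$-augmentability, or that $f$ is the weighted rank function of an independence system of rank quotient $\geq q$), and that $f$ violates the membership conditions for the other two classes. The second part is the substantive one, and the natural tool is the approximation ratio: if the greedy algorithm on some instance achieves a ratio strictly worse than $\frac{\e^\gamma}{\e^\gamma-1}$, then $f\notin\srfuncs_\gamma$ (by the upper bound of Das--Kempe quoted in the excerpt, which is tight), worse than $\alpha\cdot\frac{\e^\alpha}{\e^\alpha-1}$ rules out $\augfuncs_\alpha$, and worse than $1/q$ rules out $\rqfuncs_q$. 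So the core idea is: pick functions on which greedy performs very badly (ratio approaching something large, or unbounded over a family), yet which still satisfy the structural definition of the target class.

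Concretely, for $\rqfuncs_q\nsubseteq(\srfuncs_\gamma\cup\augfuncs_\alpha)$ I would take the weighted rank function of an independence system whose rank quotient is exactly some fixed $q_0<\gamma$ small enough, engineered so that greedy attains ratio $1/q_0$; since $1/q_0 > \frac{\e^\gamma}{\e^\gamma-1}$ for $q_0$ small and $1/q_0$ can also be pushed above $\alpha\cdot\frac{\e^\alpha}{\e^\alpha-1}$ for any fixed $\alpha$ by taking $q_0$ smaller still, this simultaneously excludes both classes. (If a single $q_0$ must work for all $\gamma,q\in(0,1)$ and all $\alpha\ge1$ at once, one instead uses a family indexed by a parameter $n$ whose greedy ratio tends to infinity, so eventually it beats any fixed threshold; the standard ``staircase'' independence-system lower-bound construction of Jenkyns / Korte--Hausmann does exactly this.) For $\augfuncs_\alpha\nsubseteq(\srfuncs_\gamma\cup\rqfuncs_q)$ I would use the $\alpha$-commodity flow problem the authors construct in Section~\ref{sec:lower_bound_alph-aug=000026separation} (for $\alpha\in\N\setminus\{1\}$), which by design is $\alpha$-augmentable; that it is not a weighted rank function and has no bounded weak submodularity ratio is presumably shown there and in Propositions~\ref{prop:separating_alpha_aug_lite}--\ref{prop:separating_alpha_aug}, so here I would just cite those. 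For $\srfuncs_\gamma\nsubseteq(\augfuncs_\alpha\cup\rqfuncs_q)$ I would take a function with small weak submodularity ratio $\gamma_0$ on which greedy achieves ratio close to $\frac{\e^{\gamma_0}}{\e^{\gamma_0}-1}$; making $\gamma_0$ small pushes this above $1/q$, and a separate argument (or the same instance adjusted) shows it is not $\alpha$-augmentable --- e.g.\ because $\alpha$-augmentability of a monotone normalized function forces a submodularity-ratio-type lower bound that a sufficiently ``non-submodular'' weighted function violates.

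The main obstacle I anticipate is the second direction, $\augfuncs_\alpha\nsubseteq\srfuncs_\gamma$ for \emph{all} $\gamma\in(0,1)$: an $\alpha$-augmentable function automatically has weak submodularity ratio at least $1/\alpha$ (rearranging the two definitions: $\alpha$-augmentability gives an element $y$ with $f(X\cup\{y\})-f(X)\ge (f(X\cup Y)-\alpha f(X))/|Y|$, which is weaker than what one wants), so one cannot hope that \emph{every} $\alpha$-augmentable function escapes $\srfuncs_\gamma$ for tiny $\gamma$ --- the separating instance must be chosen with care so that its weak submodularity ratio is genuinely $0$ (or the supremum of the relevant quotient is $0$), which is why the $\alpha$-commodity flow construction, with its large ``all-or-nothing'' jumps in objective value, is the right vehicle. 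Verifying that this construction really has weak submodularity ratio $0$ --- i.e.\ exhibiting a greedy set $X$ and a set $Y$ with $\sum_{y\in Y}(f(X\cup\{y\})-f(X))=0$ but $f(X\cup Y)-f(X)>0$ --- is the delicate computation, and I expect it to be handled in the dedicated section rather than here; in this proof I would assemble the pieces and cite it.
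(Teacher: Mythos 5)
There is a genuine gap: your central exclusion tool --- ``greedy ratio worse than the class's upper bound implies non-membership'' --- cannot work across the full parameter range, because the witness must itself lie in the left-hand class, and that membership caps its greedy ratio below the thresholds you need. Concretely, for $\rqfuncs_{q}\nsubseteq(\srfuncs_{\gamma}\cup\augfuncs_{\alpha})$ your witness must have rank quotient at least $q$ (an independence system with rank quotient $q_{0}<q$, as you propose, is simply not in $\rqfuncs_{q}$, and no family with ratio tending to infinity can stay inside $\rqfuncs_{q}$ either), so its greedy ratio is at most $1/q$; for $q$ close to $1$ this is below both $\frac{\e^{\gamma}}{\e^{\gamma}-1}$ and $\alpha\cdot\frac{\e^{\alpha}}{\e^{\alpha}-1}$, and the ratio argument excludes nothing. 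Symmetrically, for $\srfuncs_{\gamma}\nsubseteq(\augfuncs_{\alpha}\cup\rqfuncs_{q})$ any witness in $\srfuncs_{\gamma}$ has ratio at most $\frac{\e^{\gamma}}{\e^{\gamma}-1}$, which is below $1/q$ whenever $q$ is small, so you cannot exclude $\rqfuncs_{q}$ this way; and ``making $\gamma_0$ small'' destroys membership in $\srfuncs_{\gamma}$ for the given $\gamma$. The paper avoids this entirely: membership in $\srfuncs_{\gamma}$, $\augfuncs_{\alpha}$, $\rqfuncs_{q}$ is a structural property not characterized by the greedy ratio, and the separations are proved by tiny explicit functions whose structural conditions are violated directly --- $f^{\gamma}$ on a two-element ground set (on which greedy is \emph{optimal}, so a ratio argument is hopeless, yet it has weak submodularity ratio exactly $\gamma$, fails $\alpha$-augmentability at $X=\emptyset$, $Y=\U$, and is not the weighted rank function of \emph{any} independence system), and $f^{q}$, a weighted rank function of rank quotient $\geq q$ with weak submodularity ratio $0$ and an explicit violation of $\alpha$-augmentability (Propositions~\ref{prop:separating_submod_ratio} and~\ref{prop:separating_rank_quot}).

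A second problem is your ``main obstacle'' paragraph: the claim that every $\alpha$-augmentable function automatically has weak submodularity ratio at least $1/\alpha$ is false --- and if it were true it would contradict both the proposition you are proving (it would force $\augfuncs_{\alpha}\subseteq\srfuncs_{\gamma}$ for $\gamma\leq1/\alpha$) and your own next sentence asking for a witness with ratio ``genuinely $0$''. The augmentability inequality only yields information when $f(X\cup Y)>\alpha f(X)$; when $f(X\cup Y)\leq\alpha f(X)$ all singleton marginals may vanish while $f(X\cup Y)-f(X)>0$, which is exactly what the paper's flow instance exploits to get weak submodularity ratio $0$ (Proposition~\ref{prop:separating_alpha_aug_lite}, extended to all $\alpha\geq1$ via $F_{1,\alpha,k}$ in Proposition~\ref{prop:separating_alpha_aug}). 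Your plan for that middle separation is to cite these results, which is fine, but the two separations you propose to prove yourself rest on the flawed ratio strategy and would need to be replaced by direct structural counterexamples of the kind the paper constructs.
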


This motivates the following definition to consolidate all three classes.

\begin{defn}
	The function $f\colon2^{\U}\rightarrow\R_{\geq0}$ is (weakly) $\gamma$-$\alpha$-\textit{augmentable}
	for $\gamma\in(0,1]$ and $\alpha\geq\gamma$ if, for all sets $X\subseteq\U$
	($X\in\{\greedyset_{0},...,\greedyset_{\lastk}\}$) and all $Y\subseteq\U$
	with $Y\nsubseteq X$, there exists $y\in Y$ with
	\[
	f(X\cup\{y\})-f(X)\geq\frac{\gamma f(X\cup Y)-\alpha f(X)}{|Y|}.
	\]
\end{defn}

Note that we need to consider the weak variant of this definition
if we hope to encompass the class $\srfuncs_{\gamma}$, which enforces
its defining property only for ``greedy sets'', however, any upper
bound on the approximation ratio immediately carries over to the same
bound in the stronger definition. Also note that $\gamma$-$\alpha$-augmentability
only requires $\alpha\geq\gamma$, unlike $\alpha$-augmentability
where $\alpha\geq1$. This is in line with the definitions of $\alpha$-augmentability
where $\gamma=1$ and of the submodularity ratio where $\alpha=\gamma$.
We let $\gafuncs_{\gamma,\alpha}$ denote the set of all weakly $\gamma$-$\alpha$-augmentable
functions. The first part of our main result is that this set encompasses
all functions in $\srfuncs_{\gamma}\cup\augfuncs_{\alpha}\cup\rqfuncs_{q}$
and captures additional functions (cf.~Figure~\ref{fig:classes}).
Formally, we show the following (cf.~Propositions~\ref{prop:gamma-alpha_unifies_alpha_and_submod-ratio}
and~\ref{prop:F_separates_gam-alph-aug}).
\begin{thm}
	\label{thm:gam-alph-aug_weaker_than_rest}For every $\gamma,q\in(0,1]$,
	every $\gamma'\in(0,1)$, every $\alpha\geq1$, and every $\alpha'\geq\gamma'$,
	it holds that
	\[
	\gafuncs_{\gamma,\max\{\alpha,1/q\}}\supseteq\srfuncs_{\gamma}\cup\augfuncs_{\alpha}\cup\rqfuncs_{q}\qquad\mathrm{and}\qquad\gafuncs_{\gamma',\alpha'}\nsubseteq\srfuncs_{\gamma}\cup\augfuncs_{\alpha}\cup\rqfuncs_{q}.
	\]
\end{thm}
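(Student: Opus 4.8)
The plan is to prove the two parts of Theorem~\ref{thm:gam-alph-aug_weaker_than_rest} separately, establishing the inclusion $\gafuncs_{\gamma,\max\{\alpha,1/q\}}\supseteq\srfuncs_{\gamma}\cup\augfuncs_{\alpha}\cup\rqfuncs_{q}$ by checking each of the three source classes in turn, and then constructing an explicit witness function for the strict separation $\gafuncs_{\gamma',\alpha'}\nsubseteq\srfuncs_{\gamma}\cup\augfuncs_{\alpha}\cup\rqfuncs_{q}$.

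For the inclusion, first take $f\in\srfuncs_{\gamma}$, i.e.\ with weak submodularity ratio at least $\gamma$. Fix a greedy set $X$ and $Y\nsubseteq X$. By definition of $\gamma(f)$, for $Y':=Y\setminus X$ we have $\sum_{y\in Y'}(f(X\cup\{y\})-f(X))\geq\gamma\,(f(X\cup Y)-f(X))$, so by averaging there is $y\in Y'$ with $f(X\cup\{y\})-f(X)\geq\frac{\gamma(f(X\cup Y)-f(X))}{|Y'|}\geq\frac{\gamma f(X\cup Y)-\gamma f(X)}{|Y|}\geq\frac{\gamma f(X\cup Y)-\alpha f(X)}{|Y|}$, using $|Y'|\leq|Y|$ and monotonicity of $f$ to handle the case $f(X\cup Y)=f(X)$, and $\alpha\geq\gamma$ for the last step; hence $f\in\gafuncs_{\gamma,\alpha}\subseteq\gafuncs_{\gamma,\max\{\alpha,1/q\}}$. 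Next take $f\in\augfuncs_{\alpha}$: the defining inequality $f(X\cup\{y\})-f(X)\geq\frac{f(X\cup Y)-\alpha f(X)}{|Y|}$ immediately implies the $\gamma$-$\alpha$-augmentable inequality with the same $\alpha$, since $\gamma\leq1$ gives $\gamma f(X\cup Y)\leq f(X\cup Y)$ and thus $\gamma f(X\cup Y)-\alpha f(X)\leq f(X\cup Y)-\alpha f(X)$; so $f\in\gafuncs_{\gamma,\alpha}\subseteq\gafuncs_{\gamma,\max\{\alpha,1/q\}}$. Finally take $f\in\rqfuncs_{q}$, the weighted rank function of an independence system $(\U,\I)$ with rank quotient at least $q$. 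Here the work is to show that $f$ is (weakly) $\gamma$-$(1/q)$-augmentable; I expect it suffices to show it is in fact $(1/q)$-augmentable in the unweighted-style sense and then apply the $\gamma\leq1$ slackening as above. Concretely, fix $X$ and $Y\nsubseteq X$, let $A\in\mathcal{B}(X)$ realize $f(X)$ and let $B\in\mathcal{B}(X\cup Y)$ realize $f(X\cup Y)$; a standard exchange argument for independence systems (as in Jenkyns / Korte--Hausmann) shows that a $q$-fraction of the weight of $B$ can be ``reached'' from $A$ by adding single elements of $Y\setminus X$, so that $\sum_{y\in Y\setminus X}(f(X\cup\{y\})-f(X))\geq q\,f(X\cup Y)-f(X)$ — equivalently $\max_{y}(f(X\cup\{y\})-f(X))\geq\frac{q f(X\cup Y)-f(X)}{|Y|}$, which after dividing by $q$ is exactly the $(1/q)$-augmentable inequality. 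This exchange step is the main obstacle of the inclusion direction and should be carried out carefully, but it is essentially the classical rank-quotient argument repackaged.

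For the separation, the goal is a function in $\gafuncs_{\gamma',\alpha'}$ that lies in none of $\srfuncs_{\gamma}$, $\augfuncs_{\alpha}$, $\rqfuncs_{q}$. Since Proposition~\ref{prop:separation} already separates the three source classes pairwise, and the paper announces (Proposition~\ref{prop:F_separates_gam-alph-aug}) that some additional objective — the $\alpha$-commodity flow problem of Section~\ref{sec:lower_bound_alph-aug=000026separation} — witnesses this, the cleanest route is to invoke that construction: the $\alpha$-commodity flow objective is $\gamma'$-$\alpha'$-augmentable for suitable parameters (indeed it is already known to be $\alpha'$-augmentable, hence $\gamma'$-$\alpha'$-augmentable by the slackening above), it is shown in Proposition~\ref{prop:separating_alpha_aug} and the surrounding propositions not to have bounded weak submodularity ratio and not to be a weighted rank function, and one checks it is not $\alpha$-augmentable for the relevant range either (or picks parameters making the flow problem lie outside $\augfuncs_\alpha$). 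Thus $f\notin\srfuncs_{\gamma}\cup\augfuncs_{\alpha}\cup\rqfuncs_{q}$ while $f\in\gafuncs_{\gamma',\alpha'}$, completing the proof. The delicate point here is bookkeeping over the parameter ranges — ensuring a single instance simultaneously escapes all three classes for the given $\gamma,q,\alpha$ — which is why the argument is deferred to the dedicated propositions; modulo those, the theorem follows by assembling the pieces.
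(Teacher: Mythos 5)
Your handling of $\srfuncs_{\gamma}$ and $\augfuncs_{\alpha}$ matches the paper (Proposition~\ref{prop:gamma-alpha_unifies_alpha_and_submod-ratio}), but the $\rqfuncs_{q}$ inclusion is where your sketch breaks. You aim to prove full $(1/q)$-augmentability for an \emph{arbitrary} set $X$ via a ``standard exchange argument'' yielding $\sum_{y\in Y\setminus X}(f(X\cup\{y\})-f(X))\geq q\,f(X\cup Y)-f(X)$, but this statement is false. Take the function $f^{q}$ of Proposition~\ref{prop:separating_rank_quot} with $\lceil\alpha\rceil=1$, $m=2$, $n=3$ (so the rank quotient is $2/3$), and $X=A$, $Y=B$: every single-element marginal gain is $0$, while $q\,f(X\cup Y)-f(X)=\tfrac{2}{3}\cdot 6-3=1>0$; even the existential form $f(X\cup\{y\})-f(X)\geq\frac{f(X\cup Y)-\frac{1}{q}f(X)}{|Y|}=\frac{6-4.5}{3}>0$ fails at this $X$. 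The inclusion $\rqfuncs_{q}\subseteq\gafuncs_{\gamma,\gamma/q}$ therefore holds only in the \emph{weak} sense, and any proof must exploit that $X=\greedyset_{k}$ is a greedy set: the paper's argument rests on Lemma~\ref{lem:adding_element_to_greedy_set} (for a greedy set, any element with positive marginal gain contributes its full weight) and on applying the Jenkyns/Korte--Hausmann bound not to $(\U,\I)$ but to an auxiliary independence system on $\greedyset_{k}\cup\tilde{Y}$, where $\tilde{Y}$ keeps only the elements of the optimal completion heavier than the best single augmentation. None of this greedy-specific machinery appears in your sketch, and your closing ``after dividing by $q$'' step also goes the wrong way: dividing $\max_{y}\Delta_{y}\geq\frac{q f(X\cup Y)-f(X)}{|Y|}$ by $q$ bounds $\frac{1}{q}\max_{y}\Delta_{y}$, not $\max_{y}\Delta_{y}$, so it does not yield the $(1/q)$-augmentability inequality even where the intermediate bound holds.

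The separation half also has a gap: the multi-sink flow objective cannot be the witness. It is $\alpha_{0}$-augmentable for its own integer parameter $\alpha_{0}$, so to escape $\augfuncs_{\alpha}$ you must take $\alpha_{0}>\alpha$; but then nothing places it in $\gafuncs_{\gamma',\alpha'}$ for the \emph{given} $\alpha'$, which may be as small as $\gamma'<1$ -- indeed, combining Theorem~\ref{thm:lower_bound_mcflows} with the upper bound of Theorem~\ref{thm:gam-alph-aug_upper_bound}, no instance with parameter $\alpha_{0}$ can lie in $\gafuncs_{\gamma',\alpha'}$ once $\alpha_{0}\frac{\e^{\alpha_{0}}}{\e^{\alpha_{0}}-1}>\frac{\alpha'}{\gamma'}\cdot\frac{\e^{\alpha'}}{\e^{\alpha'}-1}$. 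Moreover, the flow objective's unbounded-submodularity-ratio argument is only given for $\alpha_{0}\in\mathbb{N}\setminus\{1\}$. The paper's actual witness is the explicitly constructed function $F_{\gamma',\alpha',k}$ of Section~\ref{subsec:lower_bounds}: Lemma~\ref{lem:F_monotone_and_gam-alph-aug} shows $F_{\gamma',\alpha',k}\in\gafuncs_{\gamma',\alpha'}$, and Proposition~\ref{prop:F_separates_gam-alph-aug} shows it is not $\alpha$-augmentable for any $\alpha\geq1$ (precisely because $\gamma'<1$), and, for $k$ large, has weak submodularity ratio below $\gamma$ and is not a weighted rank function. Your proposal defers exactly this construction (and points to the wrong one), so as written it establishes neither half of the theorem beyond the two easy inclusions.
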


Note that $\alpha'$ and $\gamma'$ in Theorem \ref{thm:gam-alph-aug_weaker_than_rest}
do not depend on $\alpha$, $\gamma$ and $q$. The second part of
our main result is a tight bound on the approximation ratio of the
greedy algorithm on $\gafuncs_{\gamma,\alpha}$ (cf.~Theorems~\ref{thm:gam-alph-aug_upper_bound} and~\ref{thm:general_lower_bound}).
\begin{thm}
	The approximation ratio of the greedy algorithm on the class $\gafuncs_{\gamma,\alpha}$
	of monotone, weakly $\gamma$-$\alpha$-augmentable functions, with
	$\gamma\in(0,1]$ and $\alpha\geq\gamma$, is exactly\label{thm:tight_bound}
	\[
	\frac{\alpha}{\gamma}\cdot\frac{\mathrm{e}^{\alpha}}{\mathrm{e}^{\alpha}-1}.
	\]
\end{thm}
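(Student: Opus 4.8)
The plan is to establish the two directions separately, as stated in the two referenced theorems. For the upper bound on $\gafuncs_{\gamma,\alpha}$, I would adapt the standard greedy-analysis template used for submodular, $\alpha$-augmentable, and bounded-submodularity-ratio functions. Let $f\in\gafuncs_{\gamma,\alpha}$, fix a cardinality $k$, write $g_i := f(\greedyset_i)$ for the greedy values, and let $o := f(\optset_k)$. Applying the defining inequality with $X=\greedyset_i$ and $Y=\optset_k$ (which satisfies $Y\nsubseteq X$ unless greedy has already reached optimum, a trivial case) yields an element $y\in\optset_k$ with $f(\greedyset_i\cup\{y\})-g_i \ge (\gamma\, o - \alpha\, g_i)/|\optset_k| \ge (\gamma\, o - \alpha\, g_i)/k$; since the greedy choice is at least as good as adding $y$, this gives the recurrence $g_{i+1} - g_i \ge (\gamma\, o - \alpha\, g_i)/k$. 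Rearranging, $\gamma\, o - \alpha\, g_{i+1} \le (1 - \alpha/k)(\gamma\, o - \alpha\, g_i)$, and iterating from $g_0\ge 0$ gives $\gamma\, o - \alpha\, g_k \le (1-\alpha/k)^k \gamma\, o \le \e^{-\alpha}\gamma\, o$ (using $1-x\le \e^{-x}$; one should be slightly careful when $\alpha>k$ so that the factor is negative, but then the bound is immediate). Hence $g_k \ge \frac{\gamma}{\alpha}(1-\e^{-\alpha})\, o$, i.e.\ $o/g_k \le \frac{\alpha}{\gamma}\cdot\frac{\e^{\alpha}}{\e^{\alpha}-1}$, which is the claimed ratio. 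I expect this direction to be essentially routine; the only subtlety is handling the edge cases ($\optset_k\subseteq\greedyset_i$, and the regime $\alpha>k$) and verifying the weak-vs.\ strong definition issue — but the paper already notes that upper bounds for the weak variant carry over, and the sets $X=\greedyset_i$ appearing in the recurrence are exactly the greedy sets allowed by the weak definition.

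For the lower bound, the task is to construct, for every $\gamma\in(0,1]$ and $\alpha\ge\gamma$ and every $\varepsilon>0$, a weakly $\gamma$-$\alpha$-augmentable instance on which the greedy algorithm achieves ratio at least $\frac{\alpha}{\gamma}\cdot\frac{\e^{\alpha}}{\e^{\alpha}-1}-\varepsilon$. The natural approach is to take a known tight lower-bound construction for $\alpha$-augmentable functions (the paper advertises closing exactly this gap for all $\alpha\ge 1$, so presumably a fresh construction is given) and rescale the objective by a factor $1/\gamma$, or otherwise "dampen" the marginal gains by $\gamma$, so that the augmentability parameter pair becomes $(\gamma,\alpha)$ while the greedy-to-optimum ratio picks up the extra factor $1/\gamma$. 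Concretely, one builds a ground set partitioned into a "greedy trap" of $k$ elements and an "optimal" set of $k$ elements (plus possibly auxiliary elements forcing adaptivity), with an objective $f$ defined via explicit formulas on unions so that: (i) at each greedy step the algorithm is (weakly) forced to pick a trap element because it yields marginal gain tied to the $(1-\alpha/k)^i$ geometric decay, (ii) the optimum set has value scaling like $\frac{\alpha}{\gamma}\cdot\frac{\e^{\alpha}}{\e^{\alpha}-1}$ times the final greedy value as $k\to\infty$, and (iii) the $\gamma$-$\alpha$-augmentability inequality $f(X\cup\{y\})-f(X)\ge(\gamma f(X\cup Y)-\alpha f(X))/|Y|$ is verified for all relevant $X,Y$.

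The main obstacle will be step (iii) of the lower-bound construction: checking the defining inequality for \emph{all} pairs $(X,Y)$ (or, in the weak version, all greedy $X$ and all $Y$) is typically the most delicate and case-heavy part of such tightness proofs, because one must simultaneously keep the greedy trajectory pinned to the geometric decay and keep the optimum large, all while never violating augmentability. I would organize this by first writing $f$ as a maximum (or sum) of simple building-block functions — e.g.\ a weighted-coverage-like term plus a correction term — for which augmentability is easy to check blockwise, then argue the inequality for $f$ via a worst-case reduction to the extremal sets $Y\in\{\optset_k, \text{prefixes of the greedy order}\}$. A secondary obstacle is matching the $\varepsilon$ loss: the $(1-\alpha/k)^k\to\e^{-\alpha}$ convergence forces $k\to\infty$, so the construction must be a family indexed by $k$, and one must confirm augmentability holds \emph{uniformly} in $k$ rather than only in the limit. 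Once the family is in hand, taking $k$ large gives the matching lower bound, and together with the upper bound above this proves the theorem. Finally, I would note the two sanity checks the paper emphasizes: setting $\gamma=1$ recovers Bernstein et al.'s $\alpha\cdot\frac{\e^\alpha}{\e^\alpha-1}$ (now tight for all $\alpha\ge1$), and on $\rqfuncs_q$ one has $\alpha=1/q,\gamma=1$ together with a sharper $\e^\alpha/(\e^\alpha-1)\to 1$ collapse in that special case, recovering the bound $1/q$.
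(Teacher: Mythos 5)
Your upper-bound half is essentially the paper's own argument (Theorem~\ref{thm:gam-alph-aug_upper_bound}): apply the weak $\gamma$-$\alpha$-augmentability inequality with $X=\greedyset_{i}$ and $Y=\optset_{k}$, obtain the recurrence $\delta_{i+1}\geq\frac{\gamma}{k}\bigl(f(\optset_{k})-\frac{\alpha}{\gamma}\sum_{j\leq i}\delta_{j}\bigr)$, and iterate; the paper handles the only genuine edge case ($k>\lastk$, where the weak definition no longer supplies the inequality for $\greedyset_{i}$ with $i>\lastk$) by applying the condition once at $\greedyset_{\lastk}$, where all marginal gains vanish, giving the stronger bound $f(\greedyset_{k})\geq\frac{\gamma}{\alpha}f(\optset_{k})$. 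Your sketch covers this adequately.

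The lower bound, however, has a genuine gap. Your one concrete proposal -- take a tight $\alpha$-augmentable instance and ``rescale the objective by a factor $1/\gamma$'' -- cannot work: the defining inequality $f(X\cup\{y\})-f(X)\geq\frac{\gamma f(X\cup Y)-\alpha f(X)}{|Y|}$ is invariant under multiplying $f$ by any positive constant, so rescaling never changes the pair $(\gamma,\alpha)$ for which a function is augmentable, nor the greedy-to-optimum ratio. What is actually needed (and what the paper supplies as its ``critical function'' $F_{\gamma,\alpha,k}$) is a structural gadget that makes the optimum set's value grow strictly superlinearly in its cardinality relative to single-element gains, while keeping the greedy trap's gains pinned to the geometric decay $\xi_{i}=\frac{1}{k}\bigl(\frac{k-\alpha}{k}\bigr)^{i-1}$: the paper uses a convex quadratic $h$ with $h(0)=0$, $h(1)=1$, $h(k)=k/\gamma$ applied to $|B\cap X'|$, gated by the indicator $|\{b_{1}\}\cap X'|$ so that before $b_{1}$ is chosen no other $b$-element contributes, multiplied by the damping factor $1-\alpha\sum_{a_{i}\in A\cap X'}\xi_{i}$, all wrapped in a maximization over $X'\subseteq X$ to force monotonicity. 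Verifying the greedy pick order and, above all, weak $\gamma$-$\alpha$-augmentability of this function uniformly in $k$ (Lemma~\ref{lem:F_monotone_and_gam-alph-aug}, which splits into $X\subseteq A$ and $X=A\cup\{b_{1},\dots,b_{i}\}$ and uses convexity of $h$ in the form $\frac{|Y'|(h(i+1)-h(i))}{h(i+|Y'|)-h(i)}\geq\gamma$) is precisely the content your plan defers, and your ``dampen the marginal gains by $\gamma$'' is too vague to substitute for it. So the theorem is not proved by the proposal as it stands; the missing idea is the $h$-based construction (adapted from Bian et al.) combined with the $\alpha$-dependent geometric weights. As a minor point, your closing remark about recovering $1/q$ on $\rqfuncs_{q}$ via a ``collapse'' of $\e^{\alpha}/(\e^{\alpha}-1)$ is not how the paper argues; the $\alpha/\gamma$ bound for weighted rank functions is a separate result (Theorem~\ref{thm:IS_bound}) relying on Lemma~\ref{lem:adding_element_to_greedy_set}, not a limit of the general bound.
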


Importantly, this bound recovers exactly the known bound for functions
of\linebreak bounded submodularity ratio, since~\mbox{$\srfuncs_{\gamma}\subseteq\gafuncs_{\gamma,\gamma}$},
as well as the known bound for\linebreak \mbox{$\alpha$-augmentable} functions, since~$\augfuncs_{\alpha}\subseteq\gafuncs_{1,\alpha}$.
In that sense, our new bound interpolates tightly between these two
bounds and generalizes them. In addition, our tight lower bound for~$\gafuncs_{1,\alpha}$
is obtained with an $\alpha$-augmentable function. This means that,
in particular, we are able to close the gap left in~\cite{BernsteinDisserGrossHimburg/20},
by showing a tight lower bound for $\alpha$-augmentable objectives,
for all~$\alpha\geq1$ (cf.~Propositions~\ref{prop:approx-ratio_of_F}
and~\ref{prop:F_1,alph,k_is_alph-aug}).
\begin{cor}
	The approximation ratio of the greedy algorithm on the class~$\augfuncs_{\alpha}$
	of monotone, $\alpha$-augmentable functions is exactly $\alpha\cdot\frac{\e^{\alpha}}{\e^{\alpha}-1}$
	for all $\alpha\geq1$.\label{cor:augmentable_lower_bound}
\end{cor}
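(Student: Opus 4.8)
The corollary is the specialization of Theorem~\ref{thm:tight_bound} to the case $\gamma=1$, together with the structural facts $\augfuncs_\alpha\subseteq\gafuncs_{1,\alpha}$ (from Theorem~\ref{thm:gam-alph-aug_weaker_than_rest}) and the observation that the lower-bound instances witnessing tightness of Theorem~\ref{thm:tight_bound} for $\gamma=1$ are in fact $\alpha$-augmentable. So the plan has two halves. For the \emph{upper bound}, I would simply invoke Theorem~\ref{thm:gam-alph-aug_upper_bound}: since every $\alpha$-augmentable function is weakly $1$-$\alpha$-augmentable, the greedy algorithm achieves ratio at most $\frac{\alpha}{1}\cdot\frac{\e^\alpha}{\e^\alpha-1}=\alpha\cdot\frac{\e^\alpha}{\e^\alpha-1}$ on $\augfuncs_\alpha$. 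Nothing new is needed here.

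The substance is the \emph{matching lower bound}, which is exactly what closes the gap from~\cite{BernsteinDisserGrossHimburg/20} (which only had tightness for $\alpha\in\{1,2\}$ and $\alpha\to\infty$). The plan is to exhibit, for each $\alpha\geq1$ and each cardinality $k$, an explicit monotone function $f=\augfuncs_{1,\alpha,k}$ on a suitable ground set $U$ such that (i) $f$ is $\alpha$-augmentable in the strong (not merely weak) sense, and (ii) the ratio $f(\optset_k)/f(\greedyset_k)$ approaches $\alpha\cdot\frac{\e^\alpha}{\e^\alpha-1}$ as $k\to\infty$. I would take these $f$ to be precisely the functions already constructed for the general lower bound in Theorem~\ref{thm:general_lower_bound} specialized to $\gamma=1$: the general construction produces, for the pair $(\gamma,\alpha)=(1,\alpha)$, a function whose greedy/optimum gap tends to $\frac{\alpha}{\gamma}\cdot\frac{\e^\alpha}{\e^\alpha-1}=\alpha\cdot\frac{\e^\alpha}{\e^\alpha-1}$, so property (ii) is immediate from Proposition~\ref{prop:approx-ratio_of_F}. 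The one genuine thing to verify is property~(i): that this specific construction, which \emph{a priori} is only guaranteed to be weakly $1$-$\alpha$-augmentable, actually satisfies the \emph{stronger} $\alpha$-augmentability inequality — for all $X\subseteq U$, not just for greedy sets. This is the content of Proposition~\ref{prop:F_1,alph,k_is_alph-aug}, and it is where the real work lies: one must analyze, for an arbitrary $X$ and an arbitrary $Y\nsubseteq X$, the best single-element gain $\max_{y\in Y\setminus X}\bigl(f(X\cup\{y\})-f(X)\bigr)$ and show it is at least $\bigl(f(X\cup Y)-\alpha f(X)\bigr)/|Y|$, exploiting the combinatorial/flow structure of the construction (likely a layered coverage- or flow-type gadget tuned so that after $i$ greedy steps the marginal gains decay geometrically at rate governed by $\alpha$).

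I expect the \textbf{main obstacle} to be exactly this upgrade from the weak to the strong augmentability property in the lower-bound instance. The general lower bound in Theorem~\ref{thm:general_lower_bound} only needs the weak property (it only ever evaluates the function on greedy prefixes and on optima), so there is genuine content in checking that no ``bad'' set $X$ outside the greedy trajectory violates the stronger inequality; this typically requires a case distinction on how $X$ interacts with the layers of the gadget, and a careful choice of the element $y\in Y\setminus X$ realizing the bound. Once Proposition~\ref{prop:F_1,alph,k_is_alph-aug} is in hand, the corollary follows by combining it with Proposition~\ref{prop:approx-ratio_of_F} for the lower bound and Theorem~\ref{thm:gam-alph-aug_upper_bound} (via $\augfuncs_\alpha\subseteq\gafuncs_{1,\alpha}$) for the upper bound, giving the exact value $\alpha\cdot\frac{\e^\alpha}{\e^\alpha-1}$ for every $\alpha\geq1$.
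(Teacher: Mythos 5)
Your proposal is correct and follows exactly the paper's own route: the upper bound via $\augfuncs_{\alpha}\subseteq\gafuncs_{1,\alpha}$ and Theorem~\ref{thm:gam-alph-aug_upper_bound}, and the lower bound by combining Proposition~\ref{prop:F_1,alph,k_is_alph-aug} (strong $\alpha$-augmentability of $F_{1,\alpha,k}$) with Proposition~\ref{prop:approx-ratio_of_F} and letting $k\to\infty$. You also correctly identify that the real content lies in upgrading weak to strong augmentability for the lower-bound instance, which is precisely what the paper's Proposition~\ref{prop:F_1,alph,k_is_alph-aug} provides (note only that the instance is the explicit set function $F_{1,\alpha,k}$ on $A\cup B$, not a flow gadget).
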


Finally, we are also able to show a tight bound of $\alpha/\gamma$
for $\gamma$-$\alpha$-augmentable, weighted rank functions on independence
systems (cf.~Propositions~\ref{prop:upper-bound_gamma-alpha-aug_ind-sys}
and~\ref{prop:lower-bound_gamma-alpha-aug_ind-sys}). Since $\rqfuncs_{q}\subseteq\gafuncs_{1,1/q}$
(by~Theorem~\ref{thm:gam-alph-aug_weaker_than_rest}), our bound
recovers exactly the known bound of $1/q$ for the approximation ratio
of the greedy algorithm when the rank quotient is bounded from below
by $q>0$. This means that the class of monotone, weakly $\gamma$-$\alpha$-augmentable
functions truly unifies and generalizes the three classes $\srfuncs_{\gamma}$,
$\augfuncs_{\alpha}$, and $\rqfuncs_{q}$ of greedily approximable
functions (cf.~Figure~\ref{fig:classes}). Note that, in particular,
the lower bound is tight already for $\alpha$-augmentable functions,
which implies a tight bound of $\alpha$ for the approximation ratio
of the greedy algorithm on $\alpha$-augmentable weighted rank functions.
\begin{thm}
	Let $\rqfuncs_{\mathrm{IS}}:=\bigcup_{q\in(0,1]}\rqfuncs_{q}$ be
	the set of weighted rank functions on some independence system. The
	approximation ratio of the greedy algorithm on the class $\gafuncs_{\gamma,\alpha}\cap\rqfuncs_{\mathrm{IS}}$,
	with $\gamma\in(0,1]$ and $\alpha\geq\gamma$, is exactly~$\frac{\alpha}{\gamma}$.\label{thm:IS_bound}
\end{thm}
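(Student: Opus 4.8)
The theorem consists of a matching upper and lower bound of $\alpha/\gamma$ for the approximation ratio of the greedy algorithm on $\gafuncs_{\gamma,\alpha}\cap\rqfuncs_{\mathrm{IS}}$, corresponding to Propositions~\ref{prop:upper-bound_gamma-alpha-aug_ind-sys} and~\ref{prop:lower-bound_gamma-alpha-aug_ind-sys}; we sketch how we would prove each.

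For the \emph{lower bound} we would first record the elementary inclusion $\gafuncs_{1,\alpha/\gamma}\subseteq\gafuncs_{\gamma,\alpha}$: if $y\in Y$ satisfies $f(X\cup\{y\})-f(X)\ge\bigl(f(X\cup Y)-(\alpha/\gamma)f(X)\bigr)/|Y|$, then scaling the right-hand side by $\gamma\le 1$ only weakens it where it is positive and keeps it nonpositive otherwise, so the same $y$ witnesses $\gamma$-$\alpha$-augmentability. Hence it suffices to produce, for every $\beta\ge1$ and $\varepsilon>0$, a weighted rank function in $\gafuncs_{1,\beta}$ on which greedy has ratio at least $\beta-\varepsilon$, and to apply this with $\beta=\alpha/\gamma$; for $\gamma=1$ this is precisely the lower bound of Corollary~\ref{cor:augmentable_lower_bound}. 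The instance we have in mind is a Jenkyns-type blocking system with a ``free'' gadget that makes adaptive greedy behave like the non-adaptive one: a blocking element $a$ of weight $1$, elements $b_1,\dots,b_n$ of weight $\beta/n$ such that $\{a\}$ and every subset of $\{b_1,\dots,b_n\}$ is independent but no $\{a,b_i\}$ is, and a gadget $c_1,\dots,c_{n-1}$ of tiny weight that is independent with everything. Greedy takes $a$ and then the $c_i$ (which have positive marginal, whereas each $b_i$ has marginal $0$ next to $a$), so it ends with value $\approx1$, while $\optset_n=\{b_1,\dots,b_n\}$ has value $\beta$; weak $1$-$\beta$-augmentability is then checked directly at the greedy prefixes, where, once $a$ has been picked, $f(X)\ge1$ pushes the right-hand side of the defining inequality below the marginal of a gadget element or below $0$, using that $\beta$ equals the blocking ratio.

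For the \emph{upper bound} we would start from a structural reduction: for a weighted rank function $f$ with weights $w$, the adaptive greedy algorithm coincides in value with the classical independence-system greedy --- process the ground set $u_1,\dots,u_n$ in non-increasing weight order and add an element iff it keeps the current set independent --- truncated after $k$ additions. In particular every greedy prefix $\greedyset_i$ is independent, and is even a basis (inclusion-maximal independent subset) of the weight-prefix $U_j=\{u_1,\dots,u_j\}$ for every $j$ below the weight-rank of the next greedy pick; moreover adding to such a basis any element of $U_j$ outside it gives marginal $0$, by a standard exchange argument using that the basis was built in order of non-increasing weight. We would then run a Korte--Hausmann/Jenkyns-style weight-level decomposition, writing $w(S)=\sum_{j}(w_j-w_{j+1})\,|S\cap U_j|$ and comparing the traces of $\greedyset_k$ and $\optset_k$ level by level. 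On a level $j$ where $\greedyset_k\cap U_j=\greedyset_i$ is a basis of $U_j$, applying weak $\gamma$-$\alpha$-augmentability at $\greedyset_i$ with $Y$ a basis of $U_j$ extending $\optset_k\cap U_j$ forces the returned element to have marginal $0$ by the structural step, so the defining inequality collapses to $\gamma f(\greedyset_i\cup Y)\le\alpha f(\greedyset_i)$ and hence $\gamma\,w(\optset_k\cap U_j)\le\alpha\,w(\greedyset_k\cap U_j)$; on levels above the weight-rank of $x_{k+1}$, the trace of $\greedyset_k$ is all of $\greedyset_k$ and $|\greedyset_k\cap U_j|=k\ge|\optset_k\cap U_j|$, which suffices since $\gamma\le\alpha$.

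We expect the main obstacle to be closing the decomposition when greedy is stopped by the cardinality bound before exhausting the ground set, i.e.\ when $\optset_k$ uses elements lighter than the last greedy pick $x_k$: a naive summation reuses $w(\greedyset_k)$ both for the heavy part of $\optset_k$ and as a crude bound for the light part and loses an additive term, giving only $(\alpha+1)/\gamma$. The fix we would pursue is to carry out the heavy-part comparison against the shorter prefix $\greedyset_p$ with $p=|\optset_k\cap U_j|$ at the right cut-off level, together with the observation that every optimum element lighter than $x_k$ is either blocked by $\greedyset_k$ --- in which case it has smaller weight-rank than $x_{k+1}$ and is already covered by the heavy part --- or a feasible greedy continuation, hence one of $x_{k+1},\dots,x_{\bar k}$, in which case the $k-p$ remaining greedy picks, each of weight at least $w(x_k)$, already dominate it in total weight. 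An alternative route is to pass to the $\le k$-truncated independence system $\{A\in\I:|A|\le k\}$, on which the greedy algorithm terminates by itself so that a single application of weak $\gamma$-$\alpha$-augmentability at the greedy basis gives $\gamma f(\optset_k)\le\alpha f(\greedyset_k)$ directly; here the point is that the truncation preserves weak $\gamma$-$\alpha$-augmentability. Combined with the lower-bound construction, either route yields the tight bound $\alpha/\gamma$.
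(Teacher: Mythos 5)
Your \emph{lower bound} is essentially sound, though it takes a different route than the paper: the paper simply combines the tight $1/q$ lower bound of Jenkyns and Korte--Hausmann on $\rqfuncs_{q}$ with the inclusion $\rqfuncs_{q}\subseteq\gafuncs_{\gamma,\gamma/q}\subseteq\gafuncs_{\gamma,\alpha}$ from Proposition~\ref{prop:gamma-alpha_unifies_alpha_and_submod-ratio}, choosing $q\in\Q$ arbitrarily close to $\gamma/\alpha$, whereas you re-derive a Korte--Hausmann-type blocking instance with an adaptivity gadget and verify weak $1$-$\beta$-augmentability by hand (which does check out, and your inclusion $\gafuncs_{1,\alpha/\gamma}\subseteq\gafuncs_{\gamma,\alpha}$ is correct for monotone functions). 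One slip: the needed statement is \emph{not} ``precisely the lower bound of Corollary~\ref{cor:augmentable_lower_bound}'' for $\gamma=1$ --- that corollary gives $\alpha\cdot\frac{\e^{\alpha}}{\e^{\alpha}-1}$ on $\augfuncs_{\alpha}$ via the function $F_{1,\alpha,k}$, which is not a weighted rank function; what you need is the ratio-$\alpha$ bound \emph{within} $\rqfuncs_{\mathrm{IS}}$, which your gadget (or the paper's citation) supplies.

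The \emph{upper bound}, however, has a genuine gap on both of your routes. Your second route is circular: you correctly observe that on the $k$-truncated system $\{A\in\I:|A|\leq k\}$ the greedy algorithm terminates at $\greedyset_{k}$, but at that point \emph{every} element has zero marginal gain with respect to the truncated rank function $f_{k}$ (addable elements cannot improve by a swap since all greedy picks weigh at least as much, and non-addable elements have zero marginal already for $f$ by Lemma~\ref{lem:adding_element_to_greedy_set}); hence ``$f_{k}$ is weakly $\gamma$-$\alpha$-augmentable at $\greedyset_{k}$'' means exactly $\gamma f_{k}(\greedyset_{k}\cup Y)\leq\alpha f_{k}(\greedyset_{k})$ for all $Y$, and with $Y=\U$ this \emph{is} the inequality $\gamma f(\optset_{k})\leq\alpha f(\greedyset_{k})$ you are trying to prove. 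Weak $\gamma$-$\alpha$-augmentability of $f$ does not transfer to $f_{k}$ at this prefix by any soft argument (for $|X|<k$ it does, but precisely at $|X|=k$ the $f_{k}$-marginals collapse while $f_{k}(X\cup Y)$ need not), so asserting that ``truncation preserves weak $\gamma$-$\alpha$-augmentability'' assumes the conclusion. Your first route is incomplete by your own account, and the proposed patch does not visibly close: the level-wise inequality $\gamma\,w(\optset_{k}\cap U_{j})\leq\alpha\,w(\greedyset_{k}\cap U_{j})$ is only available on levels where the greedy trace is a basis of $U_{j}$, and comparing the heavy part of $\optset_{k}$ against the shorter prefix $\greedyset_{p}$ instead re-introduces witnesses with positive marginal (e.g.\ greedy's own later picks), which brings back exactly the additive $w(\greedyset_{p})$-type loss and the $(\alpha+1)/\gamma$ bound; the case where greedy has more heavy picks than the optimum, and ties at weight $w(x_{k})$, are not handled. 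The paper avoids all of this with a short induction over $k$: if $f(\greedyset_{k})<\frac{\gamma}{\alpha}f(\optset_{k+1})$, weak $\gamma$-$\alpha$-augmentability applied at $\greedyset_{k}$ with $Y=\optset_{k+1}$ produces an element $x$ of strictly positive marginal, Lemma~\ref{lem:adding_element_to_greedy_set} upgrades that marginal to the full weight $w(x)$, and then $f(\greedyset_{k+1})\geq\frac{\gamma}{\alpha}f(\optset_{k})+w(x)\geq\frac{\gamma}{\alpha}\bigl(f(\optset_{k+1}\setminus\{x\})+w(x)\bigr)\geq\frac{\gamma}{\alpha}f(\optset_{k+1})$ by subadditivity of the weighted rank function. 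You would need either to adopt this kind of amortization or to actually prove the truncation-preservation claim, which is tantamount to the theorem itself.
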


\begin{figure}
	\begin{centering}
		\begin{center}
			\scalebox{.8}{
				\begin{tikzpicture}
					%linesep
					\pgfmathsetmacro{\ls}{-2pt}
					% sets
					\draw[thick,postaction={decorate,decoration={text along path,raise=3pt,pre length=7.3cm,text={|\small|greedily approximable}}}] (-.1,0) arc(180:-180:7.45cm and 4.7cm); 	
					\draw[thick,postaction={decorate,decoration={text along path,raise=2pt,pre length=.8cm,text={|\small|submodular}}}] (3.8,.2) arc(180:-180:1.2cm and 1.2cm); 	
					\draw[thick,postaction={decorate,decoration={text along path,raise=2pt,pre length=4.2cm,text={|\small|1-augmentable}}}] (3.45,.4) arc(180:-180:3.1cm and 1.8cm); 	
					\draw[thick,postaction={decorate,decoration={text along path,raise=2pt,pre length=4.6cm,text={|\small|2-augmentable}}}] (3.25,.4) arc(180:-180:4.225cm and 2.3cm); 	
					\draw[thick,postaction={decorate,decoration={text along path,raise=2pt,pre length=5.3cm,text={|\small|{$\alpha$}-augmentable}}}] (3.05,.5) arc(180:-180:5.45cm and 3cm); 	
					\draw[thick,postaction={decorate,decoration={text along path,raise=-9pt,pre length=4.4cm,text={|\small|rank quotient {$q$}}}}] (2.4,-1.55) arc(-180:180:4.9cm and 2cm); 	
					\draw[thick,postaction={decorate,decoration={text along path,raise=3pt,pre length=0cm,text={|\small|weak submod. ratio {$\gamma$}}}}] (1.5,.65) arc(180:-180:3.45cm and 1.8cm); 
					\draw[very thick,dashed,red,postaction={decorate,decoration={text along path,raise=3pt,pre length=7.1cm,text={|\small\color{red}|weakly {$\gamma$}-{$\alpha$}-augmentable}}}] (.7,0) arc(180:-180:6.8cm and 4.2cm); 
					
					% problems
					\node[fill,scale=.8,label={\footnotesize$\textsc{Knapsack}$}] at (1,3) {};
					\node[fill,scale=.8,label={\footnotesize$\textsc{Disjoint Paths}$}] at (2,4) {}; 	\node[fill,scale=.8,label={[align=center]\footnotesize$\textsc{Independent}$ \\[\ls] \footnotesize$\textsc{Set}$}] at (12.6,3.6) {};
					\node[fill,scale=.8,label={\footnotesize$\textsc{Coverage}$}] at (5,.47) {};
					\node[fill,scale=.8,label={[align=center]\footnotesize$1\textsc{-Dim}$ \\[\ls] \footnotesize$\textsc{Matching}$}] at (5,-.7) {};
					\node[fill,circle,red,scale=.7,label={[align=center,red]\footnotesize$\textsc{Multi-Sink}$ \\[\ls] \footnotesize$1\textsc{-Commodity}$ \\[\ls] \footnotesize$\textsc{Flow}$}] at (6.9,.6) {};
					\node[fill,scale=.8,label={[align=center]\footnotesize$\textsc{Bipartite}$ \\[\ls] \footnotesize$\textsc{Matching}$}] at (10.15,-1.1) {};
					\node[fill,scale=.8,label={[align=center]\footnotesize$\textsc{Bridge-}$ \\[\ls] \footnotesize$\textsc{Flow}$}] at (10.55,.5) {};
					\node[fill,circle,red,scale=.7,label={[align=center,red]\footnotesize$\textsc{Multi-Sink}$ \\[\ls] \footnotesize$\alpha\textsc{-Commodity}$ \\[\ls] \footnotesize$\textsc{Flow}$}] at (12.75,.3) {};
					\node[fill,scale=.8,label={[align=center]\footnotesize$\alpha\textsc{-Dim}$ \\[\ls] \footnotesize$\textsc{Matching}$}] at (11.3,-1.7) {};
					\node[fill,circle,red,scale=.7,label={[red]\footnotesize$ F_{1,1,k'} $}] at (8.9,.6) {};
					\node[fill,circle,red,scale=.7,label={[red]\footnotesize$ f^q $}] at (5,-2.8) {};
					\node[fill,circle,red,scale=.7,label={[red]\footnotesize$ F_{1,\alpha,k'} $}] at (12.1,1.8) {};
					\node[fill,circle,red,scale=.7,label={[red]\footnotesize$ f^\gamma $}] at (2.3,.4) {};
					\node[fill,circle,red,scale=.7,label={[red]\footnotesize$ f^{\gamma,\alpha} $}] at (.35,0) {};
					\node[fill,circle,red,scale=.7,label={[red]\footnotesize$ F_{\gamma,\alpha,k'} $}] at (4.3,2.9) {};
			\end{tikzpicture}}
		\end{center}
		\par\end{centering}
	\caption{Relation of the different problem classes. Newly introduced classes
		and problems are marked in red with dashed lines and round nodes. The parameter $k'$ is chosen sufficiently
		large, depending on $\gamma$ and~$\alpha$.\label{fig:classes}}
\end{figure}

\paragraph{Related Work.}

We can view our cardinality-constrained maximization framework as
a special case of maximization over an independence system. In particular,
the cardinality-constraint can be expressed as a uniform matroid constraint
\cite{KorteVygen2012}. From that perspective, the most basic, non-trivial
setting is the maximization of a linear~(i.e., modular) objective
over an independence system. Regarding the approximation ratio of
the greedy algorithm, this classic setting is equivalent to the maximization
of a weighted rank function, as considered in Theorem~\ref{thm:IS_bound}.
This is easy to see by considering the non-adaptive variant of the
greedy algorithm, and by observing that the greedy solution is guaranteed
to remain feasible while the algorithm makes progress~(cf.~Lemma~\ref{lem:adding_element_to_greedy_set}).

In that sense, the perfomance of the greedy algorithm for weighted
rank function maximization has extensively been studied in the past.
Rado~\cite{Rado1942} showed that the greedy algorithm is optimal
for all weight functions if the underlying independence system is
a matroid, and Edmonds~\cite{Edmonds1971} established the reverse
implication. Jenkyns~\cite{Jenkyns1976} extended this result by
showing an upper bound of~$1/q$~for the approximation ratio of
the greedy algorithm on independence systems with rank quotient $q$,
and Korte and Hausmann~\cite{Korte1978} gave a tight lower bound.
Years later, Mestre~\cite{Mestre2006} independently proved this
tight bound for the subclass of $k$-extendible independence systems.
Bouchet~\cite{Bouchet1987} gave a different generalization of the
result by Rado and Edmonds by showing that the greedy algorithm remains
optimal on symmetrical matroids.

Another prominent setting is the maximization of a submodular function
over an independence system. Again, this includes cardinality-constrained
maximization of a submodular objective, which is equivalent to submodular
maximization over a uniform matroid. Nemhauser, Wolsey, and Fisher~\cite{Nemhauser1978II}
showed that the greedy algorithm has a tight approximation ratio of
$\frac{\e}{\e-1}$ for maximizing a monotone, submodular function
under a cardinality-constraint. Krause et al.~\cite{Krause2008}
observed that the approximation ratio is unbounded when maximizing
the minimum of two monotone, submodular functions. Non-monotone submodular
maximization over a cardinality-constraint (and knapsack constraints)
was considered by Lee et al.~\cite{Lee2009}. Feldman et al.~\cite{Feldman2011}
analyzed a variant of the continuous greedy algorithm~\cite{Vondrak2008}
and showed an upper bound on its approximation ratio of $(1/e-o(1))^{-1}$.
This bound for the non-monotone case with cardinality-constraint was
later improved by Buchbinder et al.~\cite{Buchbinder2014} and Ene
and Nguyen~\cite{Ene2016} by further adapting the (continuous) greedy
algorithm. For maximizing a submodular function subject to $k$-extentible
system and $k$-systems constraints, Feldman et al.~\cite{FeldmanHarshawKarbasi/17,FeldmanHarshawKarbasi/20}
considered three variants of the greedy algorithm, a repeated greedy,
a sample greedy and a simultaneous greedy. They were able to show
approximation ratios of $k+O(1)$ for $k$-extendible system constraints
and $k+O(\sqrt{k})$ for $k$-system constraints.

Maximization of a monotone, submodular function over a matroid was
considered by Vondrák~\cite{Vondrak2008} and by Calinescu et al.~\cite{Calinescu2011},
who showed that the continuous greedy algorithm has an approximation
ratio of~$\frac{\e}{\e-1}$ in this setting. Nemhauser, Wolsey, and
Fisher~\cite{Nemhauser1978II}, showed an upper bound of $p+1$ for
the regular greedy algorithm when maximizing over the intersection
of $p$ matroids. A generalization of this upper bound to the setting
of maximizing subject to a $p$-system constraint was later proven
by Calinescu et al.~\cite{Calinescu2011}. Conforti and Cornuejols~\cite{Conforti1984}
gave an upper bound of $p+c$ depending on the curvature $c$ of the
monotone submodular function -- this interpolates between the submodular
bound of~\cite{Nemhauser1978II} ($c=1$) and the linear bound of~\cite{Korte1978}
($c=0$). Vondrák~\cite{Vondrak2010} showed that the continuous
greedy algorithm has an approximation ratio of at most $c\frac{\e^{c}}{\e^{c}-1}$
over an arbitrary matroid, and Sviridenko, Vondrák, and Ward~\cite{Sviridenkko2015}
showed an improved upper bound of $\frac{\e}{\e-c}$ for the approximation
ratio of a modified continuous greedy algorithm over a uniform matroid
(i.e., a cardinality-constraint).

Other variants of the problem setting include the maximization of
a monotone, submodular function over a knapsack constraint~\cite{Sviridenko2004},
and robust submodular maximization~\cite{Anari2019,Orlin2015}.

\section{Weak Submodularity Ratio, $\alpha$-Augmentability, and Independence
	Systems\label{sec:lower_bound_alph-aug=000026separation}}

In this section, we prove Proposition~\ref{prop:separation}, i.e.,
we separate the function classes $\srfuncs_{\gamma}$, $\augfuncs_{\alpha}$,
and $\rqfuncs_{q}$. We start by introducing a natural $\alpha$-commodity
flow problem that models, e.g., production processes where output
is limited by availability of all components. The objective of this
problem is (exactly) $\alpha$-augmentable, but, for $\alpha\in\mathbb{N}\setminus\{1\}$,
does not have a bounded (weak) submodularity ratio and cannot be expressed
as a weighted rank function over an independence system. This problem
also gives a tight lower bound for the approximation ratio of the
greedy algorithm on $\alpha$-augmentable functions, for $\alpha\in\mathbb{N}$.
We will extend this lower bound to all $\alpha\geq1$ in Section~\ref{subsec:lower_bounds},
and thus close a gap left by~\cite{BernsteinDisserGrossHimburg/20}.
\begin{defn}
	For a directed graph $G=(V,E)$ with source $s\in V$, sinks $T\subseteq V$,
	and arc capacities \mbox{$\mu\colon E\rightarrow\mathbb{\mathbb{R}}_{\geq0}$},
	we define an \textit{$s$-$T$-flow} to be a function $\vartheta\colon E\rightarrow\mathbb{R}_{\geq0}$
	that satisfies
	\begin{alignat*}{3}
		\vartheta(e)\leq\mu(e) & \hspace*{1em} &  & \forall e\in E & \hspace{1em} & \textrm{(capacity constraint)},\\
		\ex_{\vartheta}(v)=0 &  &  & \forall v\in V\setminus(\{s\}\cup T) &  & \textrm{(flow conservation)},\\
		\ex_{\vartheta}(t)\geq0 &  &  & \forall t\in T &  & \textrm{(}T\textrm{ are sinks}\textrm{)},
	\end{alignat*}
	where (using $\delta^{+}(v):=(\{v\}\times V)\cap E$, $\delta^{-}(v):=(V\times\{v\})\cap E$)
	the \textit{excess} of a vertex $v\in V$ is defined as
	\[
	\ex_{\vartheta}(v):=\sum\limits _{e\in\delta^{-}(v)}\vartheta(e)-\sum\limits _{e\in\delta^{+}(v)}\vartheta(e).
	\]
\end{defn}

We extend this notion to multi-commodity flows, where each commodity
has an independent capacity function.
\begin{defn}
	Let $\alpha\in\mathbb{N}$ and $G=(V,E)$ be a graph with $s\in V$
	and $T\subseteq V$. Furthermore, let \mbox{$\vec{\mu}=(\mu_{i}\colon E\to\mathbb{R}_{\geq0})_{i\in\oneto{\alpha}}$}
	be capacity functions. \textit{A}\textit{ multicommodity-flow}
	in $G$ w.r.t.~$\vec{\mu}$ is a tuple $\vec{\vartheta}=(\vartheta_{1},...,\vartheta_{\alpha})$,
	where~$\vartheta_{i}$ is an~$\mbox{\ensuremath{s}-\ensuremath{T}-flow}$
	in $G$ with respect to capacities $\mu_{i}$. The \textit{minimum-excess}
	of the sink vertex $t\in T$ in $\vec{\vartheta}$ is
	\[
	\minex_{\vec{\vartheta}}(t):=\min_{i\in\oneto{\alpha}}\ex_{\vartheta_{i}}(t).
	\]
\end{defn}

For convenience, we let $\mu(u,v):=\mu((u,v))$, $\vartheta(u,v):=\vartheta((u,v))$,
and we let $\mbox{\ensuremath{\ex_{\vartheta}(V'):=\sum_{v\in V'}\ex_{\vartheta}(v)}}$
for $V'\subseteq V$, and $\minex_{\vec{\vartheta}}(T'):=\sum_{t\in T'}\minex_{\vec{\vartheta}}(t)$
for $T'\subseteq T$ in the following.

An instance of the problem $\problem$, for $\alpha\in\mathbb{N}$,
is given by a tuple $(G,s,T,\vec{\mu})$, where $\mbox{\ensuremath{G=(V,E)}}$
is a directed graph, $s\in V$ is a source vertex, $T\subseteq V$
contains sink vertices, and \mbox{$\vec{\mu}=(\mu_{i}\colon E\to\mathbb{R}_{\geq0})_{i\in\oneto{\alpha}}$}
are capacity functions. The problem is to find a subset of sinks $X\subseteq T$
with~$|X|=k$ that maximizes the objective function 
\[
f(X)=\max\limits _{\vec{\vartheta}\in\mathcal{M}_{G,\mu}}\minex_{\vec{\vartheta}}(X),
\]
where $\mathcal{M}_{G,\mu}$ denotes the set of all multicommodity-flows
in $G$ w.r.t. capacities~$\vec{\mu}$.
\begin{example}
	For a prototypical application of \problem, consider a factory
	where $k\in\N$ machines are to be built in a set~$T$ of potential
	locations. Each machine produces the same item and needs a number
	$\alpha\in\N$ of different resources. The output of a machine is
	limited by the resource it has available the least. All resources
	are delivered to the machines along different routes within the factory,
	e.g., some liquids might be transported via pipes, other resources
	might be transported on a conveyor belt or on pallets. The objective
	is to determine in which $k$ locations the machines should be constructed
	in order to maximize overall production.
\end{example}

\begin{thm}
	For every $\alpha\in\mathbb{N}$, the objective of $\problem$ is
	monotone and $\alpha$\textup{-augmentable}.\label{thm:mcflow_augmentable}
\end{thm}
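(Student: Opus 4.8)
The plan is to verify the two properties separately. Monotonicity is routine: if $X \subseteq X' \subseteq T$, then any multicommodity-flow $\vec{\vartheta}$ achieving $f(X') = \minex_{\vec{\vartheta}}(X')$ also satisfies $\minex_{\vec{\vartheta}}(X') = \sum_{t \in X'} \minex_{\vec{\vartheta}}(t) \geq \sum_{t \in X} \minex_{\vec{\vartheta}}(t) = \minex_{\vec{\vartheta}}(X)$, since each term $\minex_{\vec{\vartheta}}(t) = \min_{i} \ex_{\vartheta_i}(t)$ is nonnegative because the $T$ are sinks. Hence $f(X') \geq \minex_{\vec{\vartheta}}(X) $, and taking the best flow for $X$ on the right can only help, so actually we argue $f(X) \le f(X')$ directly by restricting the optimal flow for $X'$. (One must be slightly careful that the optimal flow for $X$ need not be optimal for $X'$, but we only need the inequality in one direction, which the restriction argument gives.)

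For $\alpha$-augmentability, fix $X \subseteq T$ and $Y \subseteq T$ with $Y \nsubseteq X$; I must exhibit $y \in Y \setminus X$ (or $y \in Y$, per the weak variant) with $f(X \cup \{y\}) - f(X) \geq \frac{f(X \cup Y) - \alpha f(X)}{|Y|}$. Let $\vec{\vartheta} = (\vartheta_1, \dots, \vartheta_\alpha)$ be an optimal multicommodity-flow for $X \cup Y$, so $f(X \cup Y) = \sum_{t \in X \cup Y} \minex_{\vec{\vartheta}}(t)$. The idea is to use this single flow to serve $X \cup \{y\}$ for each candidate $y$ and average. The key quantitative point is that for any $t$, $\minex_{\vec{\vartheta}}(t) \le \ex_{\vartheta_i}(t)$ for every commodity $i$, whereas $f(X \cup \{y\})$, using the flow $\vec\vartheta$ but now only collecting at $X \cup \{y\}$, is at least $\sum_{t \in X \cup \{y\}} \minex_{\vec\vartheta}(t)$ — no, more is true: since we may reroute, $f(X \cup \{y\})$ is at least the minimum-excess of $X \cup \{y\}$ under \emph{any} flow, in particular under $\vec\vartheta$. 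So $f(X \cup \{y\}) \ge \sum_{t \in X} \minex_{\vec\vartheta}(t) + \minex_{\vec\vartheta}(y)$. Summing the lower bound $f(X\cup\{y\}) - f(X) \ge \minex_{\vec\vartheta}(y) - \big(f(X) - \sum_{t\in X}\minex_{\vec\vartheta}(t)\big)$ over $y \in Y$ and dividing by $|Y|$, the averaging yields
\[
\max_{y \in Y}\big(f(X\cup\{y\}) - f(X)\big) \;\ge\; \frac{\sum_{y\in Y}\minex_{\vec\vartheta}(y)}{|Y|} - \Big(f(X) - \sum_{t\in X}\minex_{\vec\vartheta}(t)\Big).
\]
Since $\sum_{y \in Y}\minex_{\vec\vartheta}(y) \ge \sum_{t \in X\cup Y}\minex_{\vec\vartheta}(t) - \sum_{t\in X}\minex_{\vec\vartheta}(t) = f(X\cup Y) - \sum_{t\in X}\minex_{\vec\vartheta}(t)$, it remains to absorb the "deficit" term $D := f(X) - \sum_{t\in X}\minex_{\vec\vartheta}(t) \ge 0$ into the $\alpha f(X)$ on the right-hand side; this is where the number of commodities enters.

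The main obstacle is exactly controlling $D$ and seeing why a factor of $\alpha$ suffices. The intuition: $\sum_{t\in X}\minex_{\vec\vartheta}(t) = \sum_{t\in X}\min_i \ex_{\vartheta_i}(t) \ge \sum_{t\in X}\big(\sum_i \ex_{\vartheta_i}(t) - (\alpha-1)\max_i \ex_{\vartheta_i}(t)\big)$ is too lossy; instead one uses that for the optimal single-commodity routing achieving $f(X)$, each commodity individually can route at least $f(X)$ worth of flow to $X$ (that is how $f(X)$ is defined as a min over commodities of an $s$-$T$-flow value), so $\sum_i \ex_{\vartheta_i}(X) \ge \alpha f(X)$ is the wrong direction — rather, restricting the $X\cup Y$-flow $\vec\vartheta$ to commodity $i$ gives an $s$-$T$-flow whose excess at $X$ may be small, but $\sum_{t\in X}\ex_{\vartheta_i}(t) + \sum_{y\in Y}\ex_{\vartheta_i}(y) = \ex_{\vartheta_i}(X\cup Y) \ge \sum_{t\in X\cup Y}\minex_{\vec\vartheta}(t) = f(X\cup Y)$. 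Thus for each $i$, $\sum_{t\in X}\minex_{\vec\vartheta}(t) \ge f(X\cup Y) - \sum_{y\in Y}\ex_{\vartheta_i}(y) - \big(\sum_{t\in X}(\ex_{\vartheta_i}(t) - \minex_{\vec\vartheta}(t))\big)$; I expect the clean path is to bound $D \le \sum_{i=1}^{\alpha} \big(f(X) - \ex_{\vartheta_i}(X)\big)_{+}$-type quantity and then, using that collecting flow at $X$ alone with the max over reroutings attains $f(X)$, conclude $\sum_{t\in X}\minex_{\vec\vartheta}(t) \ge f(X) - (\alpha-1)\cdot(\text{something bounded by the per-commodity surplus at }Y)$, so that $D$ contributes at most $(\alpha - 1)\sum_{y\in Y}\minex_{\vec\vartheta}(y)/|Y|$ after averaging — exactly converting the $1\cdot f(X)$ into $\alpha\cdot f(X)$ relative to the $f(X\cup Y)$ term. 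I would formalize this by a careful flow-decomposition / rerouting argument per commodity, which is the technical heart; everything else is averaging and nonnegativity bookkeeping.
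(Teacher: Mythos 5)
Your monotonicity argument is fine, and your averaging skeleton correctly isolates the crux: you need a single multicommodity flow $\vec{\vartheta}$ that simultaneously certifies (almost) $f(X)$ on $X$ and a good total minimum-excess on $Y$, so that $f(X\cup\{y\})-f(X)\geq\minex_{\vec{\vartheta}}(y)-D$ can be summed over $y$. The gap is that you never control the deficit $D=f(X)-\sum_{t\in X}\minex_{\vec{\vartheta}}(t)$, and with your choice of $\vec{\vartheta}$ (an arbitrary optimal flow for $X\cup Y$) it cannot be controlled: since $D$ is not divided by $|Y|$ in your averaged bound, the inequality you need amounts to $(|Y|-1)\,D\leq(\alpha-1)f(X)$, but when excess trades off one-for-one between $X$ and $Y$ an optimal flow for $X\cup Y$ may route everything to $Y$ and starve $X$, giving $D=f(X)$ while $|Y|$ is arbitrarily large. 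So no per-commodity rerouting bound for that particular flow can rescue the computation, and your closing paragraph (where the claimed absorption of $D$ into $(\alpha-1)f(X)$ would have to be proved) is exactly the part that is missing and, in this form, false.

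The paper avoids the deficit altogether by constructing a different flow. It takes a minimal multicommodity flow $\vec{\vartheta}^{X}$ with $\ex_{\vartheta_{i}^{X}}(X)=f(X)$ for every commodity and zero excess outside $X$, adds a super sink $t$ with per-commodity arc capacities $\max\{\ex_{\vartheta_{i}^{X}}(x),\ex_{\vartheta_{i}^{X\cup Y'}}(x)\}$ for $x\in X$ and $\ex_{\vartheta_{i}^{X\cup Y'}}(y)$ for $y\in Y'$, and augments each commodity from $\tilde{\vartheta}_{i}^{X}$ to a maximum $s$-$t$-flow; the resulting flow $\vec{\vartheta}$ satisfies $\minex_{\vec{\vartheta}}(X)=f(X)$ exactly (zero deficit), so $f(X\cup\{y\})-f(X)\geq\minex_{\vec{\vartheta}}(y)$ holds outright. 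The factor $\alpha$ then enters through a per-commodity accounting, not through absorbing a deficit: a ``limiting commodity'' map $g\colon X\to\oneto{\alpha}$ identifies, for each $x\in X$, a commodity that cannot push more excess to $x$, and a max-flow comparison in the auxiliary graph shows that the excess commodity $i$ loses at $Y'$ relative to the optimal flow for $X\cup Y'$ is at most $f(X)-\sum_{x\in g^{-1}(i)}\ex_{\vartheta_{i}^{X\cup Y'}}(x)$; summing over the $\alpha$ commodities yields $\sum_{y\in Y'}\minex_{\vec{\vartheta}}(y)\geq f(X\cup Y')-\alpha f(X)$, after which your averaging step finishes the proof. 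This construction (minimal flow for $X$, super sink with those capacities, augmentation, and the map $g$) is the technical heart you deferred, and without it the argument does not go through.
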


\begin{proof}
	Let $X\subseteq T$ and $t\in T\setminus X$. To prove monotonicity,
	fix some flow $\vec{\vartheta}$ with $\minex_{\vec{\vartheta}}(X)=f(X)$.
	By definition, \mbox{$\minex_{\vec{\vartheta}}(X)\leq\minex_{\vec{\vartheta}}(X\cup\{t\})\leq f(X\cup\{t\})$}
	holds and thus $f$ is monotone.
	
	To show $\alpha$-augmentability, let $(G,s,T,\mu)$ be an instance of
	\textsc{Multi-Sink}\linebreak $\alpha$-\textsc{Commodity Flow}. Let $X,Y\subseteq T$ such that $Y':=Y\setminus X\neq\emptyset$.
	We show that there exists $y\in Y'$ with
	\[
	f(X\cup\{y\})-f(X)\geq\frac{f(X\cup Y')-\alpha f(X)}{\left|Y'\right|}.
	\]
	This suffices because, with
	\[
	\frac{f(X\cup Y')-\alpha f(X)}{\left|Y'\right|}=\frac{f(X\cup Y)-\alpha f(X)}{\left|Y'\right|}\geq\frac{f(X\cup Y)-\alpha f(X)}{\left|Y\right|},
	\]
	$\alpha$-augmentability of the problem follows.
	
	Let $\vec{\vartheta}^{X\cup Y'}=(\vartheta_{1}^{X\cup Y'},...,\vartheta_{\alpha}^{X\cup Y'})$
	be a multicommodity-flow in $G$ that maximises the minimum-exess
	$\minex_{\vec{\vartheta}^{X\cup Y'}}(X\cup Y')$, i.e., $\minex_{\vec{\vartheta}^{X\cup Y'}}(X\cup Y')=f(X\cup Y')$,
	such that $\vartheta_{i}^{X\cup Y'}$ is a maximum $s$-$(X\cup Y')$-flow
	w.r.t. capacity $\mu_{i}$ for all $i\in\oneto{\alpha}$. Such a multicommodity-flow
	can be obtained by augmenting a flow that maximises $\minex_{\vec{\vartheta}^{X\cup Y'}}(X\cup Y')$,
	e.g., with the Edmonds-Karp algorithm (cf.~\cite{KorteVygen2012}).
	Furthermore, we let~$\vec{\vartheta}^{X}=(\vartheta_{1}^{X},...,\vartheta_{\alpha}^{X})$
	be a multicommodity-flow in $G$ with $\ex_{\vartheta_{i}^{X}}(X)=f(X)$
	and $\ex_{\vartheta_{i}^{X}}(T\setminus X)=0$ for all $i\in\oneto{\alpha}$,
	i.e., $\vec{\vartheta}^{X}$ maximises the minimum-excess of the set~$X$ while the values of all flows $\vartheta_{i}^{X}$ are as small
	as possible. This multicommodity-flow can be obtained by reducing
	the flows of a multicommodity-flow that maximises $\minex_{\vec{\vartheta}^{X}}(X\cup Y')$
	along paths of a path decomposition of the flow (cf.~\cite{KorteVygen2012}).
	We define the function~$g\colon X\rightarrow\oneto{\alpha}$, such
	that, for all $x\in X$, no flow $\tilde{\vartheta}$ w.r.t. capacity
	$\mu_{g(x)}$ exists with $\ex_{\tilde{\vartheta}}(x')\geq\ex_{\vartheta_{g(x)}^{X}}(x')$
	for all $x'\in X\setminus\{x\}$ and with $\ex_{\tilde{\vartheta}}(x)>\ex_{\vartheta_{g(x)}^{X}}(x)$.
	This means that the flow $\vartheta_{g(x)}$ is one of the flows limiting
	the value of $\minex_{\vec{\vartheta}^{X}}(x)$. Let $g^{-1}(i)=\{x\in X\mid g(x)=i\}$
	for all $i\in\oneto{\alpha}$ be the preimage of $g$. Obviously
	\begin{equation}
		\bigcup_{i=1}^{\alpha}g^{-1}(i)=X.\label{eq:preimage_whole_X}
	\end{equation}
	
	We add a super sink $t$ to $G$ and let $\tilde{G}=(\tilde{V},\tilde{E})$
	with $\tilde{V}:=V\cup\{t\}$ and \mbox{$\tilde{E}:=E\cup\{(v,t)\mid v\in(X\cup Y')\}$}
	denote the resulting graph. Furthermore, we define the capacity functions
	$\tilde{\mu}_{i}\colon\tilde{E}\rightarrow\mathbb{R}$ for all $i\in\oneto{\alpha}$
	such that, for $(u,v)\in\tilde{E}$,
	\[
	\tilde{\mu}_{i}(u,v):=\begin{cases}
		\mu_{i}(u,v), & \textrm{if }(u,v)\in E,\\
		\max\{\ex_{\vartheta_{i}^{X}}(u),\ex_{\vartheta_{i}^{X\cup Y'}}(u)\}, & \textrm{if }(u,v)\in X\times\{t\},\\
		\ex_{\vartheta_{i}^{X\cup Y'}}(u), & \textrm{if }(u,v)\in Y'\times\{t\}.
	\end{cases}
	\]
	Now we extend the flow $\vec{\vartheta}^{X\cup Y'}$ to a flow $\tilde{\vec{\vartheta}}^{X\cup Y'}$
	in $\tilde{G}$, such that, for all $i\in\oneto{\alpha}$ and $(u,v)\in\tilde{E}$,
	\[
	\tilde{\vartheta}_{i}^{X\cup Y'}(u,v):=\begin{cases}
		\vartheta_{i}^{X\cup Y'}(u,v), & \textrm{if }(u,v)\in E,\\
		\ex_{\vartheta_{i}^{X\cup Y'}}(u), & \mathrm{else},
	\end{cases}
	\]
	holds, and analogously, we extend the flow $\vec{\vartheta}^{X}$
	to a flow $\tilde{\vec{\vartheta}}^{X}$ in $\tilde{G}$. With this
	definition, $\tilde{\vartheta}_{i}^{X\cup Y'}$ is a maximum $s$-$t$-flow
	w.r.t. capacity $\tilde{\mu}_{i}$, because $\vartheta_{i}^{X\cup Y'}$
	is a maximum~\mbox{$s$-$(X\cup Y')$-flow} w.r.t. capacity $\mu_{i}$.
	
	For $i\in\oneto{\alpha}$, let $\tilde{\vartheta}_{i}$ be a maximum
	$s$-$t$-flow w.r.t. capacity $\tilde{\mu}_{i}$ in $\tilde{G}$
	obtained from $\tilde{\vartheta}_{i}^{X}$ by using the Edmonds-Karp
	algorithm. Then its value is as large as the value of the flow $\tilde{\vartheta}_{i}^{X\cup Y'}$.
	We project $\tilde{\vartheta}_{i}$ onto a flow in $G$, i.e., we
	set $\vartheta_{i}:=\tilde{\vartheta}_{i}|_{E}$ for $i\in\oneto{\alpha}$
	and define $\vec{\vartheta}:=(\vartheta_{1},...,\vartheta_{\alpha})$.
	By definition of $g$ and $\vec{\vartheta}^{X}$, we have
	\begin{equation}
		\ex_{\vartheta_{g(x)}}(x)=\ex_{\vartheta_{g(x)}^{X}}(x)\label{eq:restriction_from_g}
	\end{equation}
	for all $x\in X$. Because $\tilde{\vartheta}_{i}$ is a maximum $s$-$t$-flow
	in $\tilde{G}$ w.r.t. capacity $\tilde{\mu}_{i}$, $\vartheta_{i}$
	is a maximum $s$-$(X\cup Y')$-flow w.r.t. capacity $\mu_{i}$ in
	$G$. Since $\vartheta_{i}^{X\cup Y'}$ is also a maximum $s$-$(X\cup Y')$-flow
	w.r.t. capacity $\mu_{i}$, we have 
	\begin{equation}
		\ex_{\vartheta_{i}}(X\cup Y')=\ex_{\vartheta_{i}^{X\cup Y'}}(X\cup Y').\label{eq:f_max_flow}
	\end{equation}
	For all $x\in X$, we know that the excess of $x$ in $\vartheta_{i}$
	is as large as the flow $\tilde{\vartheta}_{i}(x,t)$, i.e., 
	\begin{equation}
		\ex_{\vartheta_{i}}(x)=\tilde{\vartheta}_{i}(x,t)\leq\tilde{\mu}_{i}(x,t)=\max\{\ex_{\vartheta_{i}^{X}}(x),\ex_{\vartheta_{i}^{X\cup Y'}}(x)\}\leq\ex_{\vartheta_{i}^{X}}(x)+\ex_{\vartheta_{i}^{X\cup Y'}}(x).\label{eq:excess_leq_edge_to_supersink}
	\end{equation}
	By maximality of $\vec{\vartheta}^{X}$ and definition of $\vec{\vartheta}$,
	\begin{equation}
		\minex_{\vec{\vartheta}}(X)=\minex_{\vec{\vartheta}^{X}}(X)=f(X)\label{eq:new_flow_full}
	\end{equation}
	holds. Since $X\cap Y'=\emptyset$, we obtain
	\begin{eqnarray}
		& &\ex_{\vartheta_{i}^{X\cup Y'}}(Y')-\textrm{ex}_{\vartheta_{i}}(Y')\nonumber \\
		& = & \ex_{\vartheta_{i}^{X\cup Y'}}(X\cup Y')-\ex_{\vartheta_{i}}(X\cup Y')-\ex_{\vartheta_{i}^{X\cup Y'}}(X)+\ex_{\vartheta_{i}}(X)\nonumber \\
		& \stackrel{\eqref{eq:f_max_flow}}{=} & \ex_{\vartheta_{i}}(X)-\ex_{\vartheta_{i}^{X\cup Y'}}(X)\nonumber \\
		& = & \sum\limits _{x\in X\setminus g^{-1}(i)}(\ex_{\vartheta_{i}}(x)-\ex_{\vartheta_{i}^{X\cup Y'}}(x))+\sum\limits _{x\in g^{-1}(i)}(\ex_{\vartheta_{i}}(x)-\ex_{\vartheta_{i}^{X\cup Y'}}(x))\nonumber \\
		& \stackrel{\eqref{eq:excess_leq_edge_to_supersink},\eqref{eq:restriction_from_g}}{\leq} & \sum\limits _{x\in X\setminus g^{-1}(i)}\ex_{\vartheta_{i}^{X}}(x)+\sum\limits _{x\in g^{-1}(i)}(\ex_{\vartheta_{i}^{X}}(x)-\ex_{\vartheta_{i}^{X\cup Y'}}(x))\nonumber \\
		& = & f(X)-\sum\limits _{x\in g^{-1}(i)}\ex_{\vartheta_{i}^{X\cup Y'}}(x),\label{eq:first_ineq}
	\end{eqnarray}
	where we used minimality of $\vec{\vartheta}^{X}$. Using this we
	can compute
	\begin{eqnarray}
		\minex_{\vec{\vartheta}^{^{X\cup Y'}}}(Y') & = & \sum\limits _{y\in Y'}\min_{i\in\oneto{\alpha}}\{\ex_{\vartheta_{i}^{X\cup Y'}}(y)\}\nonumber \\
		& = & \sum\limits _{y\in Y'}\min_{i\in\oneto{\alpha}}\{\ex_{\vartheta_{i}}(y)+(\ex_{\vartheta_{i}^{X\cup Y'}}(y)-\ex_{\vartheta_{i}}(y))\}\nonumber \\
		& \leq & \sum\limits _{y\in Y'}\Bigl(\min_{i\in\oneto{\alpha}}\{\ex_{\vartheta_{i}}(y)\}+\sum\limits _{i=1}^{\alpha}(\ex_{\vartheta_{i}^{X\cup Y'}}(y)-\ex_{\vartheta_{i}}(y))\Bigr)\nonumber \\
		& = & \minex_{\vec{\vartheta}}(Y')+\sum\limits _{i=1}^{\alpha}(\ex_{\vartheta_{i}^{X\cup Y'}}(Y')-\ex_{\vartheta_{i}}(Y'))\nonumber \\
		& \stackrel{\eqref{eq:first_ineq}}{\leq} & \minex_{\vec{\vartheta}}(Y')+\sum\limits _{i=1}^{\alpha}\bigl(f(X)-\sum\limits _{x\in g^{-1}(i)}\ex_{\vartheta_{i}^{X\cup Y'}}(x)\bigr)\nonumber \\
		& \stackrel{\eqref{eq:preimage_whole_X}}{=} & \minex_{\vec{\vartheta}}(Y')+\alpha f(X)-\sum_{x\in X}\ex_{\vartheta_{g(x)}^{X\cup Y'}}(x).\label{eq:second_ineq}
	\end{eqnarray}
	Finally, because of $X\cap Y'=\emptyset$, we get
	\begin{eqnarray*}
		f(X\cup Y') & = & \minex_{\vec{\vartheta}^{^{X\cup Y'}}}(X)+\minex_{\vec{\vartheta}^{^{X\cup Y'}}}(Y')\\
		& = & \sum\limits _{x\in X}\bigl(\min_{i\in\oneto{\alpha}}\ex_{\vartheta_{i}^{X\cup Y'}}(x)\bigr)+\minex_{\vec{\vartheta}^{X\cup Y'}}(Y')\\
		& \stackrel{\eqref{eq:second_ineq}}{\leq} & \sum\limits _{x\in X}\ex_{\vartheta_{g(x)}^{X\cup Y'}}(x)+\minex_{\vec{\vartheta}}(Y')+\alpha f(X)-\sum_{x\in X}\ex_{\vartheta_{g(x)}^{X\cup Y'}}(x)\\
		& = & \sum\limits _{y\in Y'}\minex_{\vec{\vartheta}}(y)+\alpha f(X),
	\end{eqnarray*}
	which is equivalent to
	\begin{equation}
		\sum\limits _{y\in Y'}\minex_{\vec{\vartheta}}(y)\geq f(X\cup Y')-\alpha f(X).\label{eq:almost_done}
	\end{equation}
	
	Now, we show that $f(X\cup\{y\})-f(X)\geq\minex_{\vec{\vartheta}}(y)$
	for all $y\in Y'$, which completes the proof, because then
	\begin{eqnarray*}
	\left|Y'\right|\bigl(\max\limits _{y\in Y'}f(X\cup\{y\})-f(X)\bigr)&\geq&\sum_{y\in Y'}\bigl(f(X\cup\{y\})-f(X)\bigr) \\
	&\geq&\sum_{y\in Y'}\minex_{\vec{\vartheta}}(y)\stackrel{\eqref{eq:almost_done}}{\geq}f(X\cup Y')-\alpha f(X).
	\end{eqnarray*}
	To show this, take any $y\in Y'$. Since $X\cap Y'=\emptyset$, we
	know that
	\[
	\minex_{\vec{\vartheta}}(X\cup\{y\})=\minex_{\vec{\vartheta}}(X)+\minex_{\vec{\vartheta}}(y)\stackrel{\eqref{eq:new_flow_full}}{=}f(X)+\minex_{\vec{\vartheta}}(y).
	\]
	Furthermore, we have $f(X\cup\{y\})\geq\minex_{\vec{\vartheta}}(X\cup\{y\})$ because $\vec{\vartheta}$ is a multicommodity-flow in $G$.
	Combining these two insights yields \mbox{$f(X\cup\{y\})-f(X)\geq\minex_{\vec{\vartheta}}(y)$}.
\end{proof}

For $\alpha=2$, $\problem$ problem is equivalent to the $\textsc{BridgeFlow}$
problem considered in~\cite{BernsteinDisserGrossHimburg/20}. We
generalize the tight lower bound construction for $\textsc{BridgeFlow}$
to arbitrary $\alpha\in\mathbb{N}$.

To show a lower bound on the approximation ratio of the greedy algorithm,
we construct a family of instances of the $\problem$ problem. For
$k\in\mathbb{N}$, $k\geq2$ we let $x:=\frac{k}{k-1}$. Now, we define
the graphs $G_{k}=(V_{k},E_{k})$ (cf.~Figure~\ref{fig:lowerbound_graph})
via
\begin{eqnarray*}
	V_{k} & := & \{s,v_{1},...,v_{\alpha k},t_{1},...,t_{2\alpha k}\},\\
	E_{k} & := & \bigcup\limits _{i=1}^{\alpha}\mathcal{E}_{k,i},\\
	\mathcal{E}_{k,i} & := & E_{k,i}^{1}\cup E_{k,i}^{\infty}\cup\bigcup_{j=1}^{\alpha k}E_{k,i,j}\cup\bigcup_{j=1}^{\alpha k}E'_{k,i,j},\\
	E_{k,i}^{1} & := & \{(s,t_{(\alpha+i-1)k+1}),...,(s,t_{(\alpha+i)k})\},\\
	E_{k,i}^{\infty} & := & \{(s,t_{\alpha k+1}),...,(s,t_{2\alpha k})\}\setminus E_{k,i}^{1},\\
	E_{k,i,j} & := & \{(s,v_{j}),(v_{j},t_{j})\}\forall j\in\oneto{\alpha k},\\
	E'_{k,i,j} & := & \{(v_{j},t_{(\alpha+i-1)k+1}),...,(v_{j},t_{(\alpha+i)k})\}\forall j\in\oneto{\alpha k},
\end{eqnarray*}
capacity functions $\mu^{k}=(\mu_{1}^{k},...,\mu_{\alpha}^{k})$ with
$\mu_{i}^{k}\colon E^{k}\rightarrow\R$ for $i\in\oneto{\alpha}$
and
\[
\mu_{i}^{k}(e)=\begin{cases}
	1, & \textrm{if }e\in E_{k,i}^{1},\\
	\infty, & \textrm{if }e\in E_{k,i}^{\infty},\\
	x^{\alpha k-j+1}, & \textrm{if }e\in E_{k,i,j}\textrm{ for some }j\in\oneto{\alpha k},\\
	\frac{1}{k}x^{\alpha k-j+1}, & \textrm{if }e\in E'_{k,i,j}\textrm{ for some }j\in\oneto{\alpha k},\\
	0, & \textrm{else.}
\end{cases}
\]
Note that only the arcs in $\mathcal{E}_{k,i}$ allow a flow of commodity
$i$. We define $s$ to be the source vertex and $T:=\{t_{1},...,t_{2\alpha k}\}$
to be the set of sink vertices.

\begin{figure}
	\begin{center}
		\begin{tikzpicture}[yscale=1.2, xscale=.8]
			\tikzstyle{node}=[draw,rectangle, inner sep=2pt,minimum width=10mm,minimum height=6mm] 
			\tikzstyle{edge}=[thick,->]
			\node[node] (s1) at (0,4) {$s$};
			\node[node] (v1) at (6,7) {$v_1$};
			\node[node] (vj) at (6,5) {$v_j$};
			\node[node] (v2) at (6,3) {$v_{\alpha k}$};
			\node[node] (t1) at (12,8) {$t_1$};
			\node[node] (tj) at (12,7) {$t_j$};
			\node[node] (t2) at (12,6) {$t_{\alpha k}$};
			\node[node] (t3) at (12,5) {$t_{\alpha k+1}$};
			\node[node] (t4) at (12,4) {$t_{(\alpha +i-1)k}$};
			\node[node] (t5) at (12,3) {$t_{(\alpha +i-1)k+1}$};
			\node[node] (t6) at (12,2) {$t_{(\alpha +i)k}$};
			\node[node] (t7) at (12,1) {$t_{(\alpha +i)k+1}$};
			\node[node] (t8) at (12,0) {$t_{2\alpha k}$};
			\node (dots) at (6,6.5) {$\vdots$};
			\node (dots) at (6,6.1) {$\vdots$};
			\node (dots) at (6,5.7) {$\vdots$};
			\node (dots) at (6,4.5) {$\vdots$};
			\node (dots) at (6,4.1) {$\vdots$};
			\node (dots) at (6,3.7) {$\vdots$};
			\node (dots) at (12,7.6) {$\vdots$};
			\node (dots) at (12,6.6) {$\vdots$};
			\node (dots) at (12,4.6) {$\vdots$};
			\node (dots) at (12,2.6) {$\vdots$};
			\node (dots) at (12,0.6) {$\vdots$};
			
			\draw[edge] (s1) -- node[above] {$\scriptstyle x^{\alpha k}$} (v1);
			\draw[edge] (s1) -- node[above] {$\scriptstyle x^{\alpha k-j+1}$} (vj);
			\draw[edge] (s1) -- node[above] {$\scriptstyle x$} (v2);
			\draw[edge] (v1) -- node[above,pos=0.75] {$\scriptstyle x^{\alpha k}$} (t1);
			\draw[edge] (v1) -- node[above,pos=0.2] {$\scriptstyle \frac{x^{\alpha k}}{k}$} (t5);
			\draw[edge] (v1) -- node[below=4pt,pos=0.1] {$\scriptstyle \frac{x^{\alpha k}}{k}$} (t6);
			\draw[edge] (vj) -- node[above,pos=0.75] {$\scriptstyle x^{\alpha k-j+1}$} (tj);
			\draw[edge] (vj) -- node[above=1pt,pos=0.25] {$\scriptstyle \frac{x^{\alpha k-j+1}}{k}$} (t5);
			\draw[edge] (vj) -- node[below=7pt,pos=0.08] {$\scriptstyle \frac{x^{\alpha k-j+1}}{k}$} (t6);
			\draw[edge] (v2) -- node[above,pos=0.75] {$\scriptstyle x$} (t2);
			\draw[edge] (v2) -- node[above,pos=0.25] {$\scriptstyle \frac{x}{k}$} (t5);
			\draw[edge] (v2) -- node[below,pos=0.25] {$\scriptstyle \frac{x}{k}$} (t6);
			
			\draw[thick] (s1.63) |- (15,9);
			\draw[thick] (s1.51) |- (14.9,8.92);
			\draw[thick] (s1.41) |- (14.8,8.84);
			\draw[edge] (15,9) |- node[above,pos=0.8] {$1$} (t5);
			\draw[edge] (14.9,8.92) |- node[above,pos=0.75] {$\infty$} (t4);
			\draw[edge] (14.8,8.84) |- node[above,pos=0.7] {$\infty$} (t3);
			\draw[edge] (s1.319) |- node[above,pos=0.75] {$1$} (t6);
			\draw[edge] (s1.309) |- node[above,pos=0.76] {$\infty$} (t7);
			\draw[edge] (s1.297) |- node[above,pos=0.755] {$\infty$} (t8);
		\end{tikzpicture}
	\end{center}
	
	\caption{\label{fig:lowerbound_graph}The graph $G_{k}$ only with edges from
		$\mathcal{E}_{k,i}$ and capacities $\mu_{i}^{k}$.}
\end{figure}

In the following proof we will need the following observation: Using
$x=\frac{k}{k-1}$ and with $n\in\mathbb{N}$ the equation
\begin{eqnarray}
	1+\frac{1}{k}\sum\limits _{j=1}^{n}x^{j} & = & 1+\frac{1}{k}\Bigl(\frac{x^{n+1}-1}{x-1}-1\Bigr)=1+\frac{1}{k}\Bigl(\frac{\bigl(\frac{k}{k-1}\bigr)^{n+1}-1}{\bigl(\frac{k}{k-1}\bigr)-1}-1\Bigr)\nonumber \\
	& = & 1+\frac{1}{k}\Bigl((k-1)\Bigl(\Bigl(\frac{k}{k-1}\Bigr)^{n+1}-1\Bigr)-1\Bigr)\nonumber \\
	& = & 1+\Bigl(\frac{k}{k-1}\Bigr)^{n}-\frac{1}{k}((k-1)+1) = \Bigl(\frac{k}{k-1}\Bigr)^{n}=x^{n}\label{eq:sum_capacity_good_vertices}
\end{eqnarray}
holds.

We will now show in which order the greedy algorithm picks the vertices
from the set $T$. We assume that the tie-breaking works out in our
favor. This can be achieved by introducing small offsets to the capacities.
For better readability we omit this here.
\begin{lem}
	\label{lem:greedy_pick_order}Let $\alpha,k\in\N$. In iteration $\ell\in\oneto{\alpha k}$,
	the greedy algorithm picks sink vertex $t_{\ell}$. Furthermore, a
	multicommodity-flow $\vec{\vartheta}=(\vartheta_{1},...,\vartheta_{\alpha})$
	with maximum minimum-excess of the vertices $\{t_{1},...,t_{\ell}\}$
	for $\ell\in\oneto{\alpha k}$ always satisfies \mbox{$\vartheta_{i}(e)=x^{\alpha k-j+1}$}
	for all $e\in E_{k,i,j}$ with $i\in\oneto{\alpha}$ and $j\in\oneto{\ell}$,
	i.e., all arcs in $E_{k,i,j}$ are fully saturated for all $i\in\oneto{\alpha}$
	and $j\in\oneto{\ell}$.
\end{lem}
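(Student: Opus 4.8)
The plan is to first record the relevant structure of $G_{k}$, then establish the second assertion of the lemma (about maximizing flows) directly, and finally prove the first assertion (the greedy pick order) by induction on~$\ell$. As for the structure: for a good sink $t_{j}$ with $j\in\oneto{\alpha k}$, the only arc entering $t_{j}$ is $(v_{j},t_{j})$, of capacity $x^{\alpha k-j+1}$ for \emph{every} commodity; the only arc entering $v_{j}$ is $(s,v_{j})$, again of capacity $x^{\alpha k-j+1}$ for every commodity; and the only arcs that can carry commodity-$i$ flow out of $v_{j}$ are $(v_{j},t_{j})$ and the arcs $(v_{j},t_{m})\in E'_{k,i,j}$ of capacity $\tfrac1k x^{\alpha k-j+1}$ leading to the bad sinks $t_{m}$ of group~$i$. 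For a bad sink $t_{m}$ of a group $i_{0}$, the only arcs carrying commodity-$i_{0}$ flow into $t_{m}$ are $(s,t_{m})$, of capacity~$1$, and the arcs $(v_{j},t_{m})$, $j\in\oneto{\alpha k}$, whereas for $i\neq i_{0}$ the arc $(s,t_{m})$ has capacity~$\infty$.

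Next I would show that for any set $S\subseteq\{t_{1},\dots,t_{\alpha k}\}$ of good sinks, $f(S)=\sum_{t_{j}\in S}x^{\alpha k-j+1}$: the inequality $\minex_{\vec{\vartheta}}(t_{j})\le\vartheta_{i}(v_{j},t_{j})\le x^{\alpha k-j+1}$ (for any commodity~$i$) gives ``$\le$'', while the multicommodity-flow that, for every commodity, saturates $(s,v_{j})$ and $(v_{j},t_{j})$ for each $t_{j}\in S$ and is zero elsewhere is feasible (distinct good sinks use disjoint arcs) and attains the bound. Since each summand is at most $x^{\alpha k-j+1}$, a flow $\vec{\vartheta}$ attaining this value must satisfy $\minex_{\vec{\vartheta}}(t_{j})=x^{\alpha k-j+1}$ for every $t_{j}\in S$, hence $\vartheta_{i}(v_{j},t_{j})\ge x^{\alpha k-j+1}$ for every~$i$, and then flow conservation at $v_{j}$ together with the capacity of $(s,v_{j})$ forces $\vartheta_{i}(s,v_{j})=x^{\alpha k-j+1}$ as well; i.e.\ every arc of $E_{k,i,j}$ is saturated, for every $i\in\oneto{\alpha}$ and every $j$ with $t_{j}\in S$. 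Taking $S=\{t_{1},\dots,t_{\ell}\}$ proves the second assertion and yields $f(\{t_{1},\dots,t_{\ell}\})=\sum_{j=1}^{\ell}x^{\alpha k-j+1}$ for every $\ell\in\oneto{\alpha k}$.

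For the greedy pick order I would induct on~$\ell$ with inductive hypothesis $\greedyset_{\ell-1}=\{t_{1},\dots,t_{\ell-1}\}$; it suffices to show $f(\greedyset_{\ell-1}\cup\{t\})\le\sum_{j=1}^{\ell}x^{\alpha k-j+1}$ for all $t\in T\setminus\greedyset_{\ell-1}$, since the right-hand side equals $f(\{t_{1},\dots,t_{\ell}\})$, so $t_{\ell}$ attains the maximum and the assumed favourable tie-breaking picks~$t_{\ell}$. If $t=t_{j_{0}}$ is a good sink, then $j_{0}\ge\ell$, and the value computation gives $f(\greedyset_{\ell-1}\cup\{t_{j_{0}}\})=\sum_{j=1}^{\ell-1}x^{\alpha k-j+1}+x^{\alpha k-j_{0}+1}\le\sum_{j=1}^{\ell}x^{\alpha k-j+1}$ since $x>1$. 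If $t=t_{m}$ is a bad sink of group $i_{0}$, then for an arbitrary multicommodity-flow $\vec{\vartheta}$ I bound $\minex_{\vec{\vartheta}}$ of each sink in $\greedyset_{\ell-1}\cup\{t_{m}\}$ by its commodity-$i_{0}$ excess, use $\ex_{\vartheta_{i_{0}}}(t_{j})=\vartheta_{i_{0}}(v_{j},t_{j})$ for $j\le\ell-1$ and $\ex_{\vartheta_{i_{0}}}(t_{m})\le 1+\sum_{j=1}^{\alpha k}\vartheta_{i_{0}}(v_{j},t_{m})$, and invoke flow conservation at each $v_{j}$ in the form $\vartheta_{i_{0}}(v_{j},t_{j})+\vartheta_{i_{0}}(v_{j},t_{m})\le\vartheta_{i_{0}}(s,v_{j})\le x^{\alpha k-j+1}$ for $j\le\ell-1$ together with the arc capacity $\vartheta_{i_{0}}(v_{j},t_{m})\le\tfrac1k x^{\alpha k-j+1}$ for $\ell\le j\le\alpha k$. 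Summing, the $j\le\ell-1$ terms combine to at most $\sum_{j=1}^{\ell-1}x^{\alpha k-j+1}$ and the remaining terms give $1+\tfrac1k\sum_{j=\ell}^{\alpha k}x^{\alpha k-j+1}=x^{\alpha k-\ell+1}$ by~\eqref{eq:sum_capacity_good_vertices}, so the total is at most $\sum_{j=1}^{\ell}x^{\alpha k-j+1}$, as required; the boundary case $\ell=1$ (empty $\greedyset_{0}$) is covered by the same estimate with empty sums.

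The main obstacle is this last, bad-sink, case: one has to see that diverting commodity-$i_{0}$ flow through the already-claimed vertices $v_{1},\dots,v_{\ell-1}$ towards the new sink $t_{m}$ never pays off—each unit removed from $(v_{j},t_{j})$ is at best matched one-for-one at $t_{m}$—and one has to organise the capacity and conservation inequalities so that the geometric sum in~\eqref{eq:sum_capacity_good_vertices} collapses to the single term $x^{\alpha k-\ell+1}$. Everything else (the structural observations and the value computation for good sets) consists of routine maximum-flow arguments.
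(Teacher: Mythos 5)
Your proposal is correct. It follows the same skeleton as the paper's proof -- induction over the greedy iterations, comparing the gain of the next good sink $t_{\ell}$ against the gain of any bad sink, with the geometric identity \eqref{eq:sum_capacity_good_vertices} collapsing the bad-sink bound to $x^{\alpha k-\ell+1}$ -- but you organize the two key steps differently. You prove the value formula $f(S)=\sum_{t_j\in S}x^{\alpha k-j+1}$ and the saturation claim once and for all, for arbitrary sets of good sinks, by pairing the per-sink capacity bound with an explicit optimal flow; the paper instead carries the saturation statement along inside the induction. More substantially, for the bad-sink estimate the paper argues via the inductive saturation property that no commodity-$i_0$ flow can be routed through $v_1,\dots,v_{\ell-1}$ to the new sink ``without reducing the minimum-excess of another sink by the same amount,'' an exchange-style argument, whereas you bound $f(\greedyset_{\ell-1}\cup\{t_m\})$ unconditionally, for an arbitrary flow, through the conservation inequality $\vartheta_{i_0}(v_j,t_j)+\vartheta_{i_0}(v_j,t_m)\le\vartheta_{i_0}(s,v_j)\le x^{\alpha k-j+1}$ for $j\le\ell-1$ plus the arc capacities for $j\ge\ell$. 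Your variant makes the ``diversion never pays off'' step airtight without invoking the inductive hypothesis about maximizing flows, at the cost of slightly more bookkeeping; the paper's version is shorter but leans on the informal exchange reasoning. Both yield the same tie at value $\sum_{j=1}^{\ell}x^{\alpha k-j+1}$, resolved by the assumed favourable tie-breaking.
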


\begin{proof}
	We will prove the statement by induction. In iteration $\ell=1$,
	the gain of picking vertex $t_{j}$ with $j\in\oneto{\alpha k}$ is
	$x^{\alpha k-j+1}$, because for all $i\in\oneto{\alpha}$ we can
	have a flow of value $x^{\alpha k-j+1}$ of commodity $i$ from $s$
	via $v_{j}$ and the edges in $E_{k,i,j}$ to $t_{j}$ and no more
	flows to $t_{j}$ are possible, since the only incoming arc to $t_{j}$,
	which allows a flow of commodity $i$, is the arc $(v_{j},t_{j})\in E_{k,i,j}$.
	The gain of picking vertex $t_{j}$ with $j\in\{\alpha k+1,...,2\alpha k\}$
	is the minimum of all commodities flowing to $t_{j}$ and there is
	only one commodity which does not allow an unbounded flow to $t_{j}$,
	because for $i\in\oneto{\alpha}\setminus\{\left\lceil \frac{j-\alpha}{k}\right\rceil \}$
	there is an arc from $s_{i}$ to $t_{j}$ in $E'_{k,i,j}$ with infinite
	capacity for commodity $i$. The maximum flow of the commodity with
	a finite flow to $t_{j}$ is
	\begin{align*}
		1+\frac{1}{k}\sum\limits _{j=1}^{\alpha k}x^{j} & \stackrel{\eqref{eq:sum_capacity_good_vertices}}{=}x^{\alpha k},
	\end{align*}
	and, thus, with proper tie-breaking, the greedy algorithm chooses
	vertex $t_{1}$. For $i\in\oneto{\alpha}$, the only incoming path
	that allows a flow of commodity $i$ from $s$ to $t_{1}$ is along
	the edges in $E_{k,i,1}$, so they have to be fully saturated by a
	multicommodity-flow with maximum minimum-excess.
	
	Now suppose the statement is true for some $\ell\in\oneto{\alpha k-1}$,
	i.e., the greedy algorithm has picked edges~$t_{1},...,t_{\ell}$
	and a multicommodity-flow with maximum minimum-excess of the vertices
	$\{t_{1},...,t_{\ell}\}$ fully saturates all arcs in $E_{k,i,j}$
	for all $i\in\oneto{\alpha}$ and $j\in\oneto{\ell}$. Then the gain
	of picking vertex $t_{j}$ for $j\in\{\ell+1,...,\alpha k\}$ is still~$x^{\alpha k-j+1}$,
	because all $s$-$t_{j}$-paths for $i\in\oneto{\alpha}$ do not carry
	flow that contributes to the maximum minimum-excess. The gain of picking
	vertex $t_{j}$ for $j\in\{\alpha k+1,...,2\alpha k\}$ is still the
	minimum of all commodities flowing to $t_{j}$, and again there is
	only one commodity which does not allow an unbounded flow to $t_{j}$.
	Because all incoming flow at vertices $v_{1},...,v_{\ell}$ already
	saturates all incoming arcs, there is no flow of this commodity via
	a vertex in~$\{v_{1},...,v_{\ell}\}$ to $t_{j}$ possible without
	reducing the minimum-excess of another sink vertex by the same amount.
	Thus, the maximal flow of this commodity to $t_{j}$ is
	\begin{align*}
		1+\frac{1}{k}\sum\limits _{j=1}^{\alpha k-\ell}x^{j} & \stackrel{\eqref{eq:sum_capacity_good_vertices}}{=}x^{\alpha k-\ell},
	\end{align*}
	so, with proper tie-breaking, the greedy algorithm picks vertex $t_{\ell+1}$
	next. For $i\in\oneto{\alpha}$, the only incoming path that allows
	a flow of commodity $i$ from $s$ to $t_{j}$ for $j\in\oneto{\ell}$
	is along the edges in $E_{k,i,j}$, so they have to be fully saturated
	by a multicommodity-flow with maximum minimum-excess.
\end{proof}

With this, we obtain a lower bound for the approximation ratio of
the greedy algorithm on $\augfuncs_{\alpha}$ for~$\alpha\in\mathbb{N}$
that tightly matches the upper bound of~\cite{BernsteinDisserGrossHimburg/20},
i.e., we obtain Corollary~\ref{cor:augmentable_lower_bound} for
$\alpha\in\mathbb{N}$. In particular, it follows that the objective
of $\problem$ is not $\beta$-augmentable for any $\beta<\alpha$.
We will generalize the lower bound to all $\alpha\geq1$ in Section~\ref{subsec:lower_bounds}.
\begin{thm}
	\label{thm:lower_bound_mcflows}For $\alpha\in\mathbb{N}$, the greedy
	algorithm has an approximation ratio of at least $\alpha\frac{\e^{\alpha}}{\e^{\alpha}-1}$
	for $\problem$.
\end{thm}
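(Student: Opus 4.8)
The plan is to use the family of instances $(G_k,s,T,\mu^k)$ constructed above, to run the greedy algorithm under the cardinality bound $\alpha k$, and to compare $f(\greedyset_{\alpha k})$ with a lower bound on $f(\optset_{\alpha k})$; letting $k\to\infty$ then drives the ratio to $\alpha\frac{\e^{\alpha}}{\e^{\alpha}-1}$.

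For the greedy value, I would invoke Lemma~\ref{lem:greedy_pick_order}, which yields $\greedyset_{\alpha k}=\{t_1,\dots,t_{\alpha k}\}$, and then argue that this set has value exactly $\sum_{j=1}^{\alpha k}x^{\alpha k-j+1}$. Indeed, for each commodity $i$ and each $j\in\{1,\dots,\alpha k\}$ the only arc entering $t_j$ that admits commodity~$i$ is $(v_j,t_j)\in E_{k,i,j}$, of capacity $x^{\alpha k-j+1}$, and $v_j$ is fed only by $(s,v_j)$, again of capacity $x^{\alpha k-j+1}$; hence $\ex_{\vartheta_i}(t_j)\le x^{\alpha k-j+1}$ for every multicommodity-flow, with equality attained simultaneously for all $i,j$ by sending $x^{\alpha k-j+1}$ units of each commodity along the pairwise disjoint paths $(s,v_j),(v_j,t_j)$. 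By~\eqref{eq:sum_capacity_good_vertices} this gives $f(\greedyset_{\alpha k})=\sum_{j=1}^{\alpha k}x^{\alpha k-j+1}=\sum_{m=1}^{\alpha k}x^{m}=k(x^{\alpha k}-1)$.

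For the optimum, I would lower bound $f(\optset_{\alpha k})$ by the value of the set $B=\{t_{\alpha k+1},\dots,t_{2\alpha k}\}$ of all expensive sinks (also of cardinality $\alpha k$), by exhibiting one multicommodity-flow: for every commodity $i$, send $1$ unit from $s$ directly to each sink of the block $B_i:=\{t_{(\alpha+i-1)k+1},\dots,t_{(\alpha+i)k}\}$ along $E_{k,i}^{1}$, send an arbitrarily large flow of commodity~$i$ to every sink of $B\setminus B_i$ along $E_{k,i}^{\infty}$, and for each $j$ push $x^{\alpha k-j+1}$ units of commodity~$i$ from $s$ through $v_j$ via $(s,v_j)\in E_{k,i,j}$, splitting them into $k$ equal shares $\tfrac1k x^{\alpha k-j+1}$ to the sinks of $B_i$ along $E'_{k,i,j}$. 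One checks that all capacities are respected, that flow conservation holds at each $v_j$ (nothing is routed to $t_j$, and the outgoing arcs to $B_i$ together carry exactly $x^{\alpha k-j+1}$), and that the commodities do not interfere because commodity~$i$ lives only on $\mathcal E_{k,i}$. Each sink of $B_i$ then receives $1+\tfrac1k\sum_{j=1}^{\alpha k}x^{\alpha k-j+1}=x^{\alpha k}$ of commodity~$i$ (again by~\eqref{eq:sum_capacity_good_vertices}) and an unbounded amount of every other commodity, so $\minex_{\vec{\vartheta}}(t)=x^{\alpha k}$ at every $t\in B$, whence $f(\optset_{\alpha k})\ge f(B)\ge\alpha k\cdot x^{\alpha k}$.

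Combining the two bounds gives $f(\optset_{\alpha k})/f(\greedyset_{\alpha k})\ge\frac{\alpha k\,x^{\alpha k}}{k(x^{\alpha k}-1)}=\frac{\alpha}{1-(1-1/k)^{\alpha k}}$, using $x^{-1}=1-1/k$, and this tends to $\frac{\alpha}{1-\e^{-\alpha}}=\alpha\frac{\e^{\alpha}}{\e^{\alpha}-1}$ as $k\to\infty$ since $(1-1/k)^{k}\to\e^{-1}$. As the greedy approximation ratio is a supremum over all instances and all cardinalities, the claimed lower bound follows. The only genuinely delicate step is the construction of the flow realizing $f(B)=\alpha k\cdot x^{\alpha k}$ — simultaneously checking the capacity constraints on the shared arcs $(s,v_j)$, flow conservation at the $v_j$, and non-interference of the commodities — while the remainder is bookkeeping with the geometric-series identity~\eqref{eq:sum_capacity_good_vertices} and an elementary limit.
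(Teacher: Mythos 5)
Your proposal is correct and follows essentially the same route as the paper: it uses the same instance family, invokes Lemma~\ref{lem:greedy_pick_order} to evaluate the greedy solution as $k(x^{\alpha k}-1)$, compares against the sink set $\{t_{\alpha k+1},\dots,t_{2\alpha k}\}$ of value $\alpha k x^{\alpha k}$, and takes the limit $k\to\infty$. The only difference is that you spell out the explicit multicommodity-flow certifying the value of that comparison set, which the paper asserts more briefly.
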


\begin{proof}
	By Lemma \ref{lem:greedy_pick_order}, the greedy algorithm picks
	the sinks $t_{1},...,t_{\alpha k}$ in the first~$\alpha k$ iterations
	and the objective increases by $x^{\alpha k-j+1}$ when sink vertex
	$t_{j}$ is picked and thus the minimum-excess of the greedy solution
	is
	\[
	f(\greedyset_{k})=\sum_{j=1}^{\alpha k}x^{j}\stackrel{\eqref{eq:sum_capacity_good_vertices}}{=}k(x^{\alpha k}-1).
	\]
	We compare this to the solution that picks the vertices $t_{\alpha k+1},...,t_{2\alpha k}$
	(which is, in fact, an optimal solution for cardinality $\alpha k$).
	Increasing the flow to one of these vertices does not reduce the flow
	to the others, so the minimum-excess of any of these vertices is
	\begin{align*}
		1+\frac{1}{k}\sum\limits _{j=1}^{\alpha k}x^{j} & \stackrel{\eqref{eq:sum_capacity_good_vertices}}{=}x^{\alpha k},
	\end{align*}
	and their total minimum-excess thus is $\alpha kx^{\alpha k}$. Using
	this and $x=\frac{k}{k-1}$, we calculate the ratio between this solution
	and the greedy solution and get
	\[
	\frac{\alpha kx^{\alpha k}}{k(x^{\alpha k}-1)}=\frac{\alpha x^{\alpha k}}{x^{\alpha k}-1}=\frac{\alpha\bigl(\frac{k}{k-1}\bigr)^{\alpha k}}{\bigl(\frac{k}{k-1}\bigr)^{\alpha k}-1}=\frac{\alpha\bigl(\bigl(\frac{k}{k-1}\bigr)^{k}\bigr)^{\alpha}}{\bigl(\bigl(\frac{k}{k-1}\bigr)^{k}\bigr)^{\alpha}-1}.
	\]
	Using the identity $\lim_{k\rightarrow\infty}(k/(k-1))^{k}=\e$, we
	obtain the limit\setlength{\belowdisplayskip}{-12pt}
	\[
	\lim\limits _{k\rightarrow\infty}\frac{\alpha\bigl(\bigl(\frac{k}{k-1}\bigr)^{k}\bigr)^{\alpha}}{\bigl(\bigl(\frac{k}{k-1}\bigr)^{k}\bigr)^{\alpha}-1}=\alpha\frac{\e^{\alpha}}{\e^{\alpha}-1}.
	\]
\end{proof}

\subsection{Separating Function Classes}

We are now ready to show Proposition~\ref{prop:separation} for $\alpha\in\mathbb{N}\setminus\{1\}$.
The case $\alpha\geq1$ will be addressed in Section~\ref{subsec:lower_bounds}.

We first separate $\augfuncs_{\alpha}$ for $\alpha\in\mathbb{N}\setminus\{1\}$
by showing that the objective of $\problem$ does not have a (weak)
submodularity ratio bounded away from zero, and cannot be represented
as the weighted rank function of some independence system.
\begin{prop}
	For every $\gamma,q\in(0,1)$, and $\alpha\in\mathbb{N}$ with $\alpha\geq2$,
	it holds that $\mbox{\ensuremath{\augfuncs_{\alpha}\nsubseteq(\srfuncs_{\gamma}\cup\rqfuncs_{q})}}$.\label{prop:separating_alpha_aug_lite}
\end{prop}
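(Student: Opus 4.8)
The plan is to exhibit, for suitable parameters, a single instance of \problem{} (with $\alpha\geq2$) whose objective $f$ simultaneously fails to lie in $\srfuncs_\gamma$ and in $\rqfuncs_q$ for \emph{any} fixed $\gamma,q\in(0,1)$. Since we may pick $k$ (and hence the instance) as large as we like, it suffices to produce, for each $\gamma$, an instance where the weak submodularity ratio drops below $\gamma$, and separately to argue that no instance of \problem{} with $\alpha\geq2$ has an objective expressible as a weighted rank function. Because $f\in\augfuncs_\alpha$ by Theorem~\ref{thm:mcflow_augmentable}, this establishes $\augfuncs_\alpha\nsubseteq(\srfuncs_\gamma\cup\rqfuncs_q)$.

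For the submodularity-ratio part, I would reuse the lower-bound family $G_k$ already constructed above. By Lemma~\ref{lem:greedy_pick_order}, the greedy algorithm picks $t_1,\dots,t_{\alpha k}$ in order, with marginal gain $x^{\alpha k-j+1}$ at step $j$; in particular the greedy set $X=\greedyset_\ell$ for small $\ell$ is among the ``greedy sets'' over which $\gamma(f)$ minimizes. Now take $X=\greedyset_0=\emptyset$ (or a small greedy prefix) and $Y=\{t_{\alpha k+1},\dots,t_{2\alpha k}\}$. Each individual sink $t_j\in Y$ has finite-commodity bottleneck only in one commodity, so adding it alone to $\emptyset$ gives objective value $0$ whenever $\alpha\geq2$ (the minimum over commodities of the achievable excess is $0$, since with a single sink picked the other $\alpha-1$ commodities still route nothing useful to it) — more carefully, I would check that $f(\{t_j\})$ is small (it is $0$ for $\alpha\ge 2$, because each $t_j$, $j>\alpha k$, is reachable by commodity $i$ only via the direct arc $(s,t_j)$ whose capacity is $1$ for exactly one $i$ and $\infty$ for the rest, but the \emph{source} $s$ has to supply all commodities and the bottleneck capacities $1$ force some commodity to $1$ while, hmm — here I need to recompute; the cleaner choice is $X=\emptyset$, $Y=\{t_j\}$ a single element gives $f(\{t_j\})\ge 1$, so instead take $Y$ to be \emph{two} sinks whose unit-capacity arcs $E^1_{k,i}$ conflict). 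The point is: $\sum_{y\in Y}(f(\{y\})-f(\emptyset))$ is linear in $|Y|=\alpha k$ with bounded per-term contribution, while $f(Y)=\alpha k x^{\alpha k}$ grows like $\alpha k\cdot \e^\alpha$; hence the ratio $\sum_{y\in Y} f(\{y\})\,/\,f(Y)$ is bounded by a constant times $\e^{-\alpha k}\to 0$, driving $\gamma(f)$ below any fixed $\gamma$. The main obstacle is pinning down $f(\{y\})$ and the marginal gains from $\emptyset$ precisely for $\alpha\ge 2$, i.e.\ verifying that the minimum over the $\alpha$ commodities genuinely collapses these singleton values; I expect this needs a short, careful flow argument but no new ideas beyond what is in the proof of Lemma~\ref{lem:greedy_pick_order}.

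For the rank-function part, the argument is structural. A weighted rank function $g(X)=\max\{\sum_{x\in Y}w(x)\mid Y\in\I\cap 2^X\}$ satisfies $g(X\cup\{x\})-g(X)\in\{0\}\cup[\,\text{some fixed positive weights}\,]$ and, crucially, is \emph{subadditive in a matroid-like sense} that forces: if $g(\{a\})=g(\{b\})=g(\{a,b\})=v>0$ for three ground elements, then… — more to the point, weighted rank functions have submodularity ratio $1$? No: independence systems need not be submodular. The right invariant to exploit is that a weighted rank function is \emph{$1$-augmentable with respect to exchange}, but not $\alpha$-augmentable with $\alpha\ge2$ \emph{tightly}; since we already know (Corollary~\ref{cor:augmentable_lower_bound} region / the lower bound Theorem~\ref{thm:lower_bound_mcflows}) that the greedy ratio on \problem{} is $\alpha\frac{\e^\alpha}{\e^\alpha-1}>\alpha$, while the greedy ratio on \emph{any} weighted rank function is at most $1/q\le$ (for matroids) $1$ — wait, that bound is $1/q$ which can be large. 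So instead I would use the \emph{exchange} property directly: in \problem{} with $\alpha\ge 2$ one can have sets $A,B$ with $f(A)=f(B)$ maximal but $f(A\cup B)>f(A)+f(B)$ strictly superadditive in a way no $\I$-optimum can, because the $\min$ over commodities creates complementarities (two sinks are jointly valuable only when \emph{both} are present, in \emph{every} commodity) that a weighted rank function — which is a max of modular functions — can never produce. Concretely: a weighted rank function satisfies $f(A\cup B)\le f(A)+f(B\setminus A)\le f(A)+\sum_{b\in B\setminus A}f(\{b\})$ when $f(\emptyset)=0$? That is exactly the submodularity-ratio-$\ge$-something statement and is \emph{false} for general independence systems, so I must be more careful. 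The clean separating property is: \emph{weighted rank functions have the property that $f(X)=\sum_{x\in B}w(x)$ for a basis $B\subseteq X$, so $f$ restricted to singletons determines an upper bound $f(X)\le\sum_{x\in X}f(\{x\})$} — this \emph{is} true, since $B\subseteq X$ and $w(x)\le f(\{x\})$ for $x\in B$. But our instance has $f(\{t_j\})$ small for $j>\alpha k$ (the whole point above) while $f(\{t_{\alpha k+1},\dots,t_{2\alpha k}\})=\alpha k x^{\alpha k}\gg \alpha k\cdot(\text{small})$, contradicting $f(X)\le\sum_{x\in X}f(\{x\})$. So the two halves of the proposition share the \emph{same} instance and the \emph{same} computation of small singleton values, and the rank-function exclusion follows for free once that computation is in hand. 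I therefore expect the entire proof to reduce to one lemma — ``for $\alpha\ge2$ and every $j\in\{\alpha k+1,\dots,2\alpha k\}$, $f(\{t_j\})$ is at most $1$ (a constant), whereas $f$ of all of them together is $\alpha k x^{\alpha k}$'' — and the main obstacle is making that flow computation airtight; everything else is a two-line deduction.
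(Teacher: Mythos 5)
Your plan hinges on the claim that, in the lower-bound family $G_k$, the ``optimal'' sinks have small singleton values ($f(\{t_j\})\leq 1$ for $j>\alpha k$) while $f(\{t_{\alpha k+1},\dots,t_{2\alpha k}\})=\alpha k x^{\alpha k}$. This claim is false, and both halves of your argument rest on it. For $j>\alpha k$, every commodity other than the bottleneck one reaches $t_j$ through the infinite-capacity arc $(s,t_j)\in E^{\infty}_{k,i}$, and the bottleneck commodity can route $1+\frac1k\sum_{m=1}^{\alpha k}x^{m}=x^{\alpha k}$ through the (unsaturated) vertices $v_1,\dots,v_{\alpha k}$; hence $f(\{t_j\})=x^{\alpha k}\to\e^{\alpha}$, not $O(1)$ with bound $1$, and in fact these sinks are exactly additive: $\sum_{y\in Y}f(\{y\})=\alpha k x^{\alpha k}=f(Y)$ (the paper's own proof of Theorem~\ref{thm:lower_bound_mcflows} notes that increasing flow to one of these sinks does not reduce flow to the others). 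Consequently, at $X=\greedyset_0=\emptyset$ the ratio in Definition~\ref{def:submodularity_ratio} equals $1$, and your claimed decay ``constant times $\e^{-\alpha k}\to 0$'' is a miscomputation ($x^{\alpha k}\to\e^{\alpha}$ is a constant, not exponentially large in $k$). Taking instead a greedy prefix, the best this choice of $Y$ yields is, at $X=\greedyset_{\alpha k}$, numerator $\alpha k\cdot 1$ against denominator $f(U)-f(\greedyset_{\alpha k})=(\alpha-1)kx^{\alpha k}+k$, i.e.\ a ratio tending to $\frac{\alpha}{(\alpha-1)\e^{\alpha}+1}$ -- a positive constant independent of $k$. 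So this family does not drive the weak submodularity ratio below an \emph{arbitrary} fixed $\gamma\in(0,1)$, and your argument fails for small $\gamma$. The rank-function half collapses for the same reason: your separating criterion $f(X)\leq\sum_{x\in X}f(\{x\})$ is indeed valid for weighted rank functions, but $G_k$ satisfies it (no superadditivity over sinks ever occurs there), so no contradiction is obtained.

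The paper takes a different and much more economical route: it builds two tiny bespoke instances of \problem{} (which are $\alpha$-augmentable by Theorem~\ref{thm:mcflow_augmentable}). A three-sink instance in which greedy first picks a sink $t_2$ after which no single sink improves the value but the pair $\{t_1,t_3\}$ does, so the weak submodularity ratio is exactly $0<\gamma$; and a two-sink instance with $f(\{t_1\})=f(\{t_2\})=2$ and $f(\{t_1,t_2\})=3$, a value pattern no weighted rank function can realize (any weights would need $w(t_1)=w(t_2)=2$ and yet $w(t_1)+w(t_2)=3$ or $\max\{w(t_1),w(t_2)\}=3$). If you want to salvage your one-instance strategy, you would need an instance exhibiting genuine complementarity between sinks after a greedy prefix -- the paper's three-sink gadget is exactly such an instance, and its structure (shared unit capacities through $v_1,v_2$ plus crossing direct arcs) is what your $G_k$-based computation lacks; you would then still have to check separately that that same instance is not a weighted rank function, which does not follow from subadditivity over singletons alone.
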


\begin{proof}
	In order to prove that $\augfuncs_{\alpha}\nsubseteq\srfuncs_{\gamma}$,
	by Theorem~\ref{thm:mcflow_augmentable}, it suffices to construct
	an instance of $\problem$ that has weak submodularity ratio smaller
	$\gamma$. Let
	\begin{eqnarray*}
		T & := & \{t_{1},t_{2},t_{3}\},\\
		V & := & \{s,v_{1},v_{2}\}\cup T,\\
		E & := & \{(s,v_{1}),(s,v_{2}),(s,t_{1}),(s,t_{3}),(v_{1},t_{1}),(v_{1},t_{2}),(v_{2},t_{2}),(v_{2},t_{3})\},\\
		G & := & (V,E)
	\end{eqnarray*}
	and 
	\[
	\mu\colon E\rightarrow\R_{\geq0}^{\alpha},\mu(e)=\begin{cases}
		(1,0,0,...,0), & \textrm{if }e\in\{(s,v_{1}),(s,t_{3}),(v_{1},t_{1}),(v_{1},t_{2})\},\\
		(0,1,1,...,1), & \textrm{else}.
	\end{cases}
	\]
	A diagram of the graph can be seen in Figure \ref{fig:alpha-commodity-flow_submod-ratio_0}.
	With proper tie breaking (or by adding small extra capacities), the
	greedy algorithm picks the sink $t_{2}$ in the first iteration. Adding
	any other sink to this solution does not increase the objective value,
	i.e., for all $t\in T$, we have $\mbox{\ensuremath{\sum_{t\in T}(f(\greedyset_{1}\cup\{t\})-f(\greedyset_{1}))=0}}$.
	But since $\mbox{\ensuremath{f(\greedyset_{1}\cup\{t_{1},t_{3}\})-f(\greedyset_{1})=1}}$,
	the weak submodularity ratio of this problem is $0<\gamma$.
	
	To prove that $\augfuncs_{\alpha}\nsubseteq\rqfuncs_{q}$, we define
	an instance of $\problem$ by
	\begin{eqnarray*}
		T & := & \{t_{1},t_{2}\},\\
		V & := & \{s,v\}\cup T,\\
		E & := & \{(s,v),(v,t_{1}),(v,t_{2})\},\\
		G & := & (V,E)
	\end{eqnarray*}
	and 
	\[
	\mu\colon E\rightarrow\R_{\geq0}^{\alpha},\mu(e)=\begin{cases}
		(3,...,3), & \textrm{if }e=(s,v),\\
		(2,...,2), & \textrm{else}.
	\end{cases}
	\]
	We have $f(\emptyset)=0$, $f(\{t_{1}\})=f(\{t_{2}\})=2$ and $f(\U)=3$.
	If $f$ could be modelled as the weighted rank function of some independence
	system, the corresponding weight function would have to satisfy $w(t_{1})=w(t_{2})=2$
	and simultaneously $w(t_{1})+w(t_{2})=3$ or $\max\{w(t_{1}),w(t_{2})\}=3$.
	Since this is impossible, $f$ cannot be modelled as the weighted
	rank function of some independence system. By Theorem~\ref{thm:mcflow_augmentable},
	this implies $f\in\augfuncs_{\alpha}\setminus\rqfuncs_{q}$.
\end{proof}
\begin{figure}
	\begin{center}
		\begin{tikzpicture}
			\tikzstyle{node}=[draw,rectangle, inner sep=2pt,minimum width=10mm,minimum height=6mm] 
			\tikzstyle{edge}=[thick,->]
			\node[node] (s1) at (0,4) {$s$};
			\node[node] (v1) at (3,5) {$v_1$};
			\node[node] (v2) at (3,3) {$v_2$};
			\node[node] (t1) at (6,6) {$t_1$};
			\node[node] (t2) at (6,4) {$t_2$};
			\node[node] (t3) at (6,2) {$t_3$};
			
			\draw[edge]  (s1) to node[above] {$(1,0)$} (v1);
			\draw[edge] (s1) to node[above] {$(0,1)$} (v2);
			\draw[edge,bend left] (s1) to node[above] {$(0,1)$} (t1);
			\draw[edge,bend right] (s1) to node[above] {$(1,0)$} (t3);
			\draw[edge] (v1) to node[above] {$(1,0)$} (t1);
			\draw[edge] (v1) to node[above] {$(1,0)$} (t2);
			\draw[edge] (v2) to node[above] {$(0,1)$} (t2);
			\draw[edge] (v2) to node[above] {$(0,1)$} (t3);
		\end{tikzpicture}
	\end{center}
	
	\caption{\label{fig:alpha-commodity-flow_submod-ratio_0}An instance of $\protect\problem$
		for $\alpha=2$ with (weak) submodularity ratio 0.}
\end{figure}

We proceed to show the second and third part of Proposition~\ref{prop:separation}
(for all $\alpha\geq1$).
\begin{prop}
	For every $\gamma,q\in(0,1)$, $\alpha\geq1$, it holds that $\srfuncs_{\gamma}\nsubseteq(\augfuncs_{\alpha}\cup\rqfuncs_{q})$.\label{prop:separating_submod_ratio}
\end{prop}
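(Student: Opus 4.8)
**Proof proposal for Proposition~\ref{prop:separating_submod_ratio} ($\srfuncs_{\gamma}\nsubseteq(\augfuncs_{\alpha}\cup\rqfuncs_{q})$).**

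The plan is to exhibit, for each fixed $\gamma,q\in(0,1)$ and $\alpha\ge 1$, a single function $f^{\gamma}$ with weak submodularity ratio at least $\gamma$ that is simultaneously \emph{not} $\alpha$-augmentable and \emph{not} representable as a weighted rank function of an independence system. A natural candidate is a very small ground set, say $\U=\{a,b\}$ or $\U=\{a,b,c\}$, with a hand-tuned valuation: choose $f(\emptyset)=0$, prescribe the singleton values and the pairwise/triple values so that (i) the marginal-gain ratios appearing in Definition~\ref{def:submodularity_ratio} are bounded below by $\gamma$ on every greedy set, yet (ii) some augmentation step forces $f(X\cup\{y\})-f(X)$ to be too small relative to $f(X\cup Y)-\alpha f(X)$ for \emph{every} $y\in Y$, and (iii) the singleton and larger values are numerically inconsistent with \emph{any} choice of weights $w$ of an independence system (exactly the style of the impossibility argument already used in the proof of Proposition~\ref{prop:separating_alpha_aug_lite}).

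Concretely, I would first compute the greedy trajectory $\greedyset_0,\greedyset_1,\dots$ of the candidate $f^{\gamma}$ explicitly (it is determined once the numbers are fixed), and then verify (i) by checking the finitely many pairs $(X,Y)$ with $X$ a greedy set: for each such pair, $\sum_{y\in Y}(f(X\cup\{y\})-f(X)) \ge \gamma\,(f(X\cup Y)-f(X))$. Since $\alpha$ can be arbitrarily large, to get (ii) I want a configuration where $f(X)>0$ and yet $f(X\cup\{y\})-f(X)$ is essentially $0$ for all $y$ in some $Y\nsubseteq X$ while $f(X\cup Y)>0$; then $\frac{\gamma f(X\cup Y)-\alpha f(X)}{|Y|}$ is negative for large $\alpha$, so the easy route (a negative right-hand side is trivially satisfied) is blocked and I instead need $f(X\cup Y)>\alpha f(X)$, i.e.\ the ``jump'' at a later set must dwarf an intermediate nonzero value — the same kind of obstruction that kills augmentability. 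For (iii) I'd pick the singleton values equal (forcing $w$ to agree on the corresponding elements) but the pair value strictly between ``$w_1+w_2$'' and ``$\max\{w_1,w_2\}$'', ruling out both the ``independent'' and ``dependent'' cases for that pair, exactly as in Proposition~\ref{prop:separating_alpha_aug_lite}. A safe fallback is to reuse, verbatim, the second instance from that proof (the $(s,v,t_1,t_2)$ graph with $f(\emptyset)=0$, $f(\{t_i\})=2$, $f(\U)=3$), which is already shown not to be a weighted rank function; one only needs to additionally check its weak submodularity ratio and then tweak it (add a third element with a suitable value) to break $\alpha$-augmentability while keeping $\gamma(f)\ge\gamma$.

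The main obstacle is reconciling (i) with (ii): a large weak submodularity ratio says the total of individual marginals is not much smaller than the joint marginal, whereas breaking $\alpha$-augmentability for large $\alpha$ wants a set $X$ with positive value whose every single-element extension gains (almost) nothing. These pull in opposite directions unless the ``bad'' set $X$ never arises as a greedy set — so the real work is to engineer the numbers so that the element giving $X$ its value is picked \emph{after} the greedy algorithm has already committed elsewhere, i.e.\ to decouple the set witnessing low augmentability from the family $\{\greedyset_0,\dots,\greedyset_{\lastk}\}$ over which $\gamma$ is measured. Once that decoupling is arranged, both properties become finite numeric checks, and the non-representability as a weighted rank function is the short impossibility argument above. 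I expect the verification itself to be routine; the design of the instance is where the care lies.
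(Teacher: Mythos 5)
There is a genuine gap: you never actually produce a witness function. Everything in your text is a design plan plus a foreseen difficulty, and the one concrete object you offer as a ``safe fallback'' (the $(s,v,t_1,t_2)$ instance with $f(\emptyset)=0$, $f(\{t_1\})=f(\{t_2\})=2$, $f(\U)=3$) does not work as stated: it is an instance of $\problem$ and in fact submodular ($f(\{t_1\})+f(\{t_2\})=4\geq 3=f(\U)+f(\emptyset)$), hence $1$-augmentable and therefore $\alpha$-augmentable for every $\alpha\geq1$, so it only certifies $f\notin\rqfuncs_q$; the ``tweak'' that is supposed to break augmentability while preserving the submodularity ratio is exactly the missing construction. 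Since the proposition is an existence claim, the unconstructed instance is the whole proof.

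Moreover, the obstacle you identify as the ``real work'' is a misdiagnosis that leads you toward an unnecessarily delicate design. You insist on violating $\alpha$-augmentability at a set $X$ with $f(X)>0$ and near-zero marginals, and then need to decouple that $X$ from the greedy trajectory. The paper's witness shows none of this is needed: take $\U=\{a,b\}$ and
\[
f^{\gamma}(X)=\begin{cases}
|X|, & |X|\leq1,\\
\tfrac{2}{\gamma}, & |X|=2.
\end{cases}
\]
The augmentability violation is placed at $X=\emptyset$ (which is always the greedy set $\greedyset_{0}$) and $Y=\U$: since $f^{\gamma}(\emptyset)=0$, the term $\alpha f^{\gamma}(X)$ vanishes, so the same instance defeats \emph{every} $\alpha\geq1$, and the requirement ``some single $y$ gains at least $f^{\gamma}(Y)/|Y|=1/\gamma$'' fails because each singleton gains only $1$. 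The weak submodularity ratio is nevertheless $\gamma$, because it only compares the \emph{sum} of the singleton gains ($=2$) to the joint gain ($=2/\gamma$); the factor $\gamma<1$ between ``sum of marginals'' and ``$|Y|$ times the best marginal'' is exactly the slack exploited, so there is no tension to resolve and no decoupling from greedy sets is required. Non-representability as a weighted rank function then follows because $f^{\gamma}(\U)>f^{\gamma}(X)$ for all $X\subsetneq\U$ forces $\U\in\I$, making the system free and $f^{\gamma}$ modular, which it is not (your ``pair value strictly between the max and the sum of the weights'' argument would also do). Until you exhibit such a concrete function and carry out the finitely many checks, the proposal does not establish the proposition.
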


\begin{proof}
	Consider the set $\U=\{a,b\}$ and the objective function
	\[
	f^{\gamma}\colon2^{\U}\rightarrow\R_{\geq0},f^{\gamma}(X)=\begin{cases}
		|X|, & \textrm{if }|X|\leq1,\\
		\frac{2}{\gamma}, & \textrm{else}.
	\end{cases}
	\]
	
	If $f^{\gamma}$ could be modelled as the weighted rank function of
	an independence system $(\U,\I)$, then we would have $\U\in\I$ because
	$f(\U)>f(X)$ for all $X\subsetneq\U$. Then $\I=2^{\U}$ and $f^{\gamma}$
	would be linear which is not true. Thus $f^{\gamma}$ cannot be modelled
	as the weighted rank function of an independence system, and $f^{\gamma}\notin\rqfuncs_{q}$.
	
	Furthermore, $f^{\gamma}\notin\augfuncs_{\alpha}$. To see this, consider
	$X=\emptyset$ and $Y=\{a,b\}$. Then we have $f^{\gamma}(X\cup\{y\})-f^{\gamma}(X)=1$
	for all $y\in Y$, and we have $\frac{f^{\gamma}(X\cup Y)-\alpha f^{\gamma}(X)}{|Y|}=\frac{1}{\gamma}$.
	Since $\gamma<1$, the problem is not $\alpha$-augmentable.
	
	Now, let $X,Y\subseteq\U$ with $X\cap Y=\emptyset$. For $Y=\emptyset$
	the ratio in the definition of the weak submodularity ratio is $\frac{0}{0}=1$.
	Thus, assume $|Y|\geq1$. If $X=\emptyset$, we have
	\[
	\frac{\sum_{y\in Y}f^{\gamma}(X\cup\{y\})-f^{\gamma}(X)}{f^{\gamma}(X\cup Y)-f^{\gamma}(X)}=\frac{|Y|}{f^{\gamma}(Y)}\in\{1,\gamma\}.
	\]
	Otherwise, if $|X|=1$, then $|Y|=1$ and the ratio in the definition
	of the (weak) submodularity ratio is 1.
	In both cases, the ratio is at least $\gamma$, thus the (weak) submodularity
	ratio of this problem is $\gamma$, and $f^{\gamma}\in\srfuncs_{\gamma}$.
\end{proof}

\begin{prop}
	For every $\gamma,q\in(0,1)$, $\alpha\geq1$, it holds that $\rqfuncs_{q}\nsubseteq(\srfuncs_{\gamma}\cup\augfuncs_{\alpha})$.\label{prop:separating_rank_quot}
\end{prop}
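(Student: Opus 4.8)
The plan is to construct a single weighted rank function $f^q$ of an independence system with rank quotient exactly $q$, and to verify that it lies neither in $\srfuncs_\gamma$ (for any fixed $\gamma\in(0,1)$) nor in $\augfuncs_\alpha$ (for any fixed $\alpha\ge1$). Since by Proposition~\ref{prop:separating_submod_ratio} the violation of $\alpha$-augmentability already forces the ratio in the defining inequality to blow up, the natural candidate is a rank function whose nonlinearity (caused by a circuit) is severe enough to defeat both relaxations simultaneously. Concretely, I would take a ground set with $O(1/q)$ ``heavy'' elements forming a small circuit of the independence system and some ``light'' padding elements, rigging the weights so that the greedy algorithm is led to add a cheap base while a much more valuable base of larger cardinality exists — this is exactly the obstruction that independence systems of bounded rank quotient are designed to capture, and the parameter $q$ will govern the size of the circuit.

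The key steps, in order: (i)~fix $q\in(0,1)$ and pick an integer $m$ with $1/m\le q$ (say $m=\lceil 1/q\rceil$); define $(\U,\I)$ so that $\U$ contains a set $C$ of $m+1$ elements with $\I\cap 2^{C}$ having bases of size $1$ and of size $m$, giving rank quotient $1/m\le q$ on $\U$ — then pad with parallel copies or trivial elements so that the global rank quotient equals exactly $q$ (or, more simply, argue $q(\U,\I)\ge q$, which is all that is needed for membership in $\rqfuncs_q$). (ii)~Choose the weight function $w$ so that the single-element base of $C$ carries weight strictly exceeding every individual element of the $m$-element base, so greedy commits to the small base first; this makes $f^q$ genuinely nonlinear. (iii)~Show $f^q\notin\srfuncs_\gamma$: exhibit a greedy set $X=\greedyset_\ell$ and a set $Y$ of the elements of the large base such that each marginal $f^q(X\cup\{y\})-f^q(X)$ is $0$ (because $X$ already blocks them in $\I$) while $f^q(X\cup Y)-f^q(X)>0$, forcing the weak submodularity ratio to be $0<\gamma$ — mirroring Case~1 of Proposition~\ref{prop:separating_submod_ratio}. (iv)~Show $f^q\notin\augfuncs_\alpha$: take $X=\emptyset$ and $Y$ the $m$-element base; then $f^q(\{y\})-f^q(\emptyset)$ is just $w(y)$, which I will arrange to be a small constant, while $\frac{f^q(Y)-\alpha f^q(\emptyset)}{|Y|}=\frac{f^q(Y)}{m}$; choosing $f^q(Y)$ large relative to $m$ and to $w(y)$ (scale the heavy base up, or make $m$ large — here I may need $m$ to depend on $\alpha$, which is fine since the statement quantifies ``for every $\alpha$'') makes the $\alpha$-augmentability inequality fail at every $y\in Y$.

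The main obstacle is step~(i): pinning the rank quotient to be \emph{exactly} $q$ rather than merely at least $q$, and doing so with an independence system that still exhibits the strong nonlinearity needed in steps~(iii)--(iv). The cleanest fix is probably to only insist on $q(\U,\I)\ge q$ — which suffices for $f^q\in\rqfuncs_q$ under the paper's convention — and then, if exactness is wanted, to add a disjoint pair of parallel elements realizing rank quotient exactly $q$ without affecting the $C$-gadget; since parallel-element gadgets interact trivially with the objective on $C$, all four verifications go through unchanged. The remaining arguments in (ii)--(iv) are then routine case analyses entirely analogous to the proof of Proposition~\ref{prop:separating_submod_ratio}.
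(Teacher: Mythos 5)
Your plan has two genuine gaps, both in the places where the parameters $q$ and $\alpha$ have to be played against each other.

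First, step~(i) has the rank-quotient inequality backwards. If the gadget $C$ contains two bases of sizes $1$ and $m$, then $C$ itself is a witness set forcing $q(\U,\I)\leq 1/m$, and with your choice $m=\lceil 1/q\rceil$ this is $\leq q$, with equality only when $1/q\in\mathbb{N}$; membership in $\rqfuncs_q$ requires $q(\U,\I)\geq q$. Moreover, the proposed repair cannot work: padding with parallel or trivial elements only adds further subsets to the minimum defining the rank quotient, so it can never \emph{raise} it above $1/m$. To keep the quotient at least $q$ you are forced to take $m\leq 1/q$, or better, to make \emph{both} conflicting bases large with size ratio at least $q$ -- this is exactly why the paper uses bases of sizes $\lceil\alpha\rceil m$ and $\lceil\alpha\rceil n$ with $q\leq m/n<1$ rather than a size-$1$ versus size-$m$ circuit. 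Second, step~(iv) fails outright: for any weighted rank function one has $f(\emptyset)=0$, and if $Z^*\in\I\cap 2^{Y}$ attains $f(Y)$, then some $z\in Z^*$ satisfies $f(\{z\})=w(z)\geq f(Y)/|Z^*|\geq f(Y)/|Y|$, so the $\alpha$-augmentability inequality at $X=\emptyset$ holds for \emph{every} weighted rank function and every $\alpha\geq0$; scaling the weights or enlarging $m$ changes both sides proportionally and cannot create a violation. The violation must instead be exhibited at a nonempty set $X$ with $f(X)>0$ that \emph{blocks} the marginals of $Y$ (all singleton gains $0$) while still $f(X\cup Y)>\alpha f(X)$; and here the constraint from the first gap bites again: with your gadget the blocking set is $\{c\}$ and $f(\{c\}\cup Y)\leq m\,w_0<m\,w(c)\leq \frac{1}{q}w(c)$, so no violation is possible once $\alpha\geq 1/q$. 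Making $m$ depend on $\alpha$ does not help, because increasing $m$ destroys the rank-quotient bound. The paper resolves exactly this tension by scaling both basis sizes by $\lceil\alpha\rceil$ (sets $A$ and $B$ of size $\lceil\alpha\rceil n$, cardinality bound $\lceil\alpha\rceil m$, $B$-weights $\lceil\alpha\rceil(n-m)+1>\alpha$), taking $X=A$, $Y=B$ for the augmentability violation, and a greedy-reachable set $\{c,b_1,\dots,b_{\lceil\alpha\rceil m-1}\}$ for the submodularity-ratio violation. Your step~(iii) is sound in spirit and close to the paper's argument, but as written the construction neither certifies $f^q\in\rqfuncs_q$ nor rules out $\alpha$-augmentability for all $\alpha\geq1$.
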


\begin{proof}
	We fix $m,n\in\mathbb{N}$ with $q\leq\frac{m}{n}<1$ and $\alpha\geq1$.
	Let
	\begin{eqnarray*}
		A & := & \{a_{1},...,a_{\lceil\alpha\rceil n}\},\\
		B & := & \{b_{1},...,b_{\lceil\alpha\rceil n}\},\\
		C & := & \{c\}\\
		\U & := & A\cup B\cup C,\\
		\I & := & 2^{A}\cup2^{B}\cup\{X\subset\U\mid|X|\leq\lceil\alpha\rceil m\}.
	\end{eqnarray*}
	We consider the independence system $(\U,\I)$ and the weight function
	$w\colon\U\rightarrow\mathbb{R}_{\geq0}$ defined by
	\[
	w(e)=\begin{cases}
		1, & e\in A,\\
		\lceil\alpha\rceil(n-m)+1, & \textrm{else}.
	\end{cases}
	\]
	The weighted rank function $f$ is given by
	\[
	f^{q}\colon\U\rightarrow\R_{\geq0},f^{q}(X)=\max\{w(Y)\mid Y\in2^{X}\cap\I\}.
	\]
	Obviously we have $q(\U,\I)=\frac{m}{n}$, i.e., $f^{q}\in\rqfuncs_{q}$.
	
	For $X=A$, $Y=B$ and $y\in Y$, we calculate
	\begin{eqnarray*}
		f^{q}(X) & = & \lceil\alpha\rceil n,\\
		f^{q}(X\cup\{y\}) & = & \max\{\lceil\alpha\rceil n,\lceil\alpha\rceil(n-m)+1+(\lceil\alpha\rceil m-1)\}=\lceil\alpha\rceil n,\\
		f^{q}(X\cup Y) & = & \lceil\alpha\rceil n(\lceil\alpha\rceil(n-m)+1).
	\end{eqnarray*}
	Suppose, $f^{q}$ was $\alpha$-augmentable. Then
	\[
	f^{q}(X\cup\{y\})-f^{q}(X)\geq\frac{f^{q}(X\cup Y)-\alpha f^{q}(X)}{|Y|},
	\]
	i.e.,
	\[
	\lceil\alpha\rceil n-\lceil\alpha\rceil n\geq\frac{\lceil\alpha\rceil n(\lceil\alpha\rceil(n-m)+1)-\alpha\lceil\alpha\rceil n}{\lceil\alpha\rceil n},
	\]
	which is equivalent to
	\[
	\alpha\geq\lceil\alpha\rceil(n-m)+1.
	\]
	Since $n>m$, this is a contradiction, i.e., $f^{q}\notin\augfuncs_{\alpha}$.
	
	Now, with $X=\{c,b_{1},...,b_{\lceil\alpha\rceil m-1}\}$ and $Y=B\setminus X=\{b_{\lceil\alpha\rceil m},...,b_{\lceil\alpha\rceil n}\}$,
	we have
	\[
	\frac{\sum_{y\in Y}f^{q}(X\cup\{y\})-f^{q}(X)}{f^{q}(X\cup Y)-f^{q}(X)}=0.
	\]
	Thus, and because the set $X$ can be the greedy solution $\greedyset_{\lastk}$,
	the weak submodularity ratio of this problem is $\gamma(f^{q})=0$,
	i.e., $f^{q}\notin\srfuncs_{\gamma}$.
\end{proof}

\section{$\gamma$-$\alpha$-Augmentability\label{sec:gam-alph-aug}}

In this section, we argue that the class $\gafuncs_{\gamma,\alpha}$
of weakly $\gamma$-$\alpha$-augmentable functions unifies and generalizes
the classes $\srfuncs_{\gamma}$, $\augfuncs_{\alpha}$, and $\rqfuncs_{q}$.
We start by proving the first half of Theorem~\ref{thm:gam-alph-aug_weaker_than_rest}.
The second half will be shown in Section~\ref{subsec:lower_bounds},
together with lower bounds for the approximation ratio of the greedy
algorithm.

We need the following simple lemma.
\begin{lem}
	\label{lem:adding_element_to_greedy_set}Let $(\U,\I)$ be an independence
	system with weight function\linebreak \mbox{$w\colon\U\rightarrow\R_{\geq0}$} and
	weighted rank function $f$. Furthermore, let $k\in\oneto{\lastk}$
	and $x\in\U$ with $w(x)>0$. Then, the following are equivalent:
	
	(i) $\greedyset_{k}\cup\{x\}\in\I$
	
	(ii) $f(\greedyset_{k}\cup\{x\})-f(\greedyset_{k})=w(x)$
	
	(iii) $f(\greedyset_{k}\cup\{x\})-f(\greedyset_{k})>0$
\end{lem}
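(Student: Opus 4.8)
The plan is to reduce all three equivalences to a single structural fact about greedy sets on an independence system, proved by induction on~$k$. Writing $w(Z):=\sum_{z\in Z}w(z)$, I claim that for every $k\in\{0,1,\dots,\lastk\}$ one has $\greedyset_k\in\I$, $f(\greedyset_k)=w(\greedyset_k)$, and, for every $x\in\U\setminus\greedyset_k$,
\[
f(\greedyset_k\cup\{x\})=\begin{cases}w(\greedyset_k)+w(x), & \text{if }\greedyset_k\cup\{x\}\in\I,\\ w(\greedyset_k), & \text{otherwise.}\end{cases}
\]
Granting this, the lemma is immediate in the relevant case $x\notin\greedyset_k$: since $f(\greedyset_k)=w(\greedyset_k)$, the displayed formula gives $f(\greedyset_k\cup\{x\})-f(\greedyset_k)=w(x)$ if $\greedyset_k\cup\{x\}\in\I$ and $=0$ otherwise, so, because $w(x)>0$, statements (i), (ii), (iii) are pairwise equivalent. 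For the base case $k=0$ note $\greedyset_0=\emptyset\in\I$, $f(\emptyset)=0=w(\emptyset)$, and $f(\{x\})=\max\{w(Y):Y\in\I\cap 2^{\{x\}}\}$, which equals $w(x)$ if $\{x\}\in\I$ and $0$ otherwise, using $w\ge 0$.

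For the inductive step, assume the claim for $0,\dots,k-1$ with $k\le\lastk$. First I establish $\greedyset_k\in\I$ and $f(\greedyset_k)=w(\greedyset_k)$: since $k-1<\lastk$, minimality of $\lastk$ together with monotonicity of $f$ gives $f(\greedyset_k)>f(\greedyset_{k-1})=w(\greedyset_{k-1})$; applying the formula of the induction hypothesis to $\greedyset_{k-1}$ and $x_k\in\U\setminus\greedyset_{k-1}$ therefore forces the case $\greedyset_k=\greedyset_{k-1}\cup\{x_k\}\in\I$, hence also $w(x_k)>0$ and $f(\greedyset_k)=w(\greedyset_{k-1})+w(x_k)=w(\greedyset_k)$. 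Next I prove the displayed identity at level $k$ for an arbitrary $x\in\U\setminus\greedyset_k$. If $\greedyset_k\cup\{x\}\in\I$, then every independent $Y\subseteq\greedyset_k\cup\{x\}$ satisfies $w(Y)\le w(\greedyset_k\cup\{x\})=w(\greedyset_k)+w(x)$, and this is attained at $Y=\greedyset_k\cup\{x\}$, so $f(\greedyset_k\cup\{x\})=w(\greedyset_k)+w(x)$.

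The remaining case $\greedyset_k\cup\{x\}\notin\I$ is the heart of the argument. By monotonicity $f(\greedyset_k\cup\{x\})\ge w(\greedyset_k)$, so it suffices to bound a maximizer $Y^{\ast}\in\I$, $Y^{\ast}\subseteq\greedyset_k\cup\{x\}$, $w(Y^{\ast})=f(\greedyset_k\cup\{x\})$ by $w(\greedyset_k)$. If $x\notin Y^{\ast}$ then $Y^{\ast}\subseteq\greedyset_k$ and $w(Y^{\ast})\le w(\greedyset_k)$. Otherwise $Z:=Y^{\ast}\setminus\{x\}\subseteq\greedyset_k$ with $Z\cup\{x\}\in\I$, and $Z\subsetneq\greedyset_k$ since $\greedyset_k\cup\{x\}\notin\I$. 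Here is the key step: choose $j$ to be the \emph{smallest} index with $x_j\in\greedyset_k\setminus Z$, so that $\greedyset_{j-1}=\{x_1,\dots,x_{j-1}\}\subseteq Z$; then $\greedyset_{j-1}\cup\{x\}\subseteq Z\cup\{x\}\in\I$, hence $\greedyset_{j-1}\cup\{x\}\in\I$ and $f(\greedyset_{j-1}\cup\{x\})=w(\greedyset_{j-1})+w(x)$, while $f(\greedyset_{j-1}\cup\{x_j\})=w(\greedyset_j)=w(\greedyset_{j-1})+w(x_j)$ because $\greedyset_j\in\I$. Since the greedy algorithm picked $x_j$ over the still-available candidate $x$ (note $x\notin\greedyset_{j-1}$) in iteration~$j$, we get $w(x_j)\ge w(x)$, and therefore $w(Y^{\ast})=w(Z)+w(x)\le w(Z)+w(x_j)=w(Z\cup\{x_j\})\le w(\greedyset_k)$, using $Z\cup\{x_j\}\subseteq\greedyset_k$ and $w\ge 0$. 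This gives $f(\greedyset_k\cup\{x\})=w(\greedyset_k)$ and completes the induction. The main obstacle is precisely this minimal-index choice of $j$: it is what guarantees $\greedyset_{j-1}\subseteq Z$, which makes $x$ a legitimate competitor of $x_j$ at iteration~$j$ with a \emph{known} value $w(\greedyset_{j-1})+w(x)$, and only then does the greedy selection rule translate into the inequality $w(x_j)\ge w(x)$ that controls $w(Y^{\ast})$; without sorting by index one cannot relate $\greedyset_{j-1}$ to $Z$ and the comparison breaks down.
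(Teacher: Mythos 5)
Your proof is correct, and it is organized differently from the paper's. The paper proves the cycle (i)$\Rightarrow$(ii)$\Rightarrow$(iii)$\Rightarrow$(i) directly; the crux (iii)$\Rightarrow$(i) is a contradiction argument showing first that a positive gain forces $w(x)\le w(s)$ for every $s\in\greedyset_k$ (via an exchange set $S\subseteq\greedyset_k$ with $(\greedyset_k\setminus S)\cup\{x\}\in\I$ and $w(S)<w(x)$, contradicting that the greedy algorithm passed over $x$ when only heavier elements had been taken), and then that infeasibility of $\greedyset_k\cup\{x\}$ would make the gain zero. You instead prove by induction on $k$ a stronger structural claim -- $\greedyset_k\in\I$, $f(\greedyset_k)=w(\greedyset_k)$, and an exact formula for $f(\greedyset_k\cup\{x\})$ in both the feasible and infeasible cases -- from which the three equivalences drop out at once; your exchange step compares $x$ with $x_j$ at the minimal index $j$ where a maximizing independent set omits a greedy element, and the greedy selection rule yields $w(x_j)\ge w(x)$. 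What your route buys is that it makes explicit the facts the paper's proof uses only implicitly (that the adaptive greedy on a weighted rank function keeps an independent set with $f(\greedyset_k)=w(\greedyset_k)$ for all $k\le\lastk$, and why feasibility comparisons at earlier iterations are legitimate); the paper's argument is shorter but leans on these tacitly. One cosmetic remark: like the paper's own proof, your argument establishes the equivalence only for $x\notin\greedyset_k$ (for $x\in\greedyset_k$ statement (i) holds while (ii) and (iii) fail), but this is exactly the case needed in the paper's applications, so it is not a gap.
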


\begin{proof}
	``\textit{(i) $\Rightarrow$ (ii)}'': By definition of $f$ as a
	weighted rank function and because~$\greedyset_{k}\cup\{x\}\in\I$,
	we have
	\[
	f(\greedyset_{k}\cup\{x\})-f(\greedyset_{k})=\sum_{x'\in\greedyset_{k}\cup\{x\}}w(x')-\sum_{x'\in\greedyset_{k}}w(x')=w(x).
	\]
	
	``\textit{(ii) $\Rightarrow$ (iii)}'': This follows immediately
	from the fact that $w(x)>0$.
	
	``\textit{(iii) $\Rightarrow$ (i)}'': Let $x\in\U$ with $f(\greedyset_{k}\cup\{x\})-f(\greedyset_{k})>0$.
	Suppose there is some $s'\in\greedyset_{k}$ with $w(x)>w(s')$. This
	means that $x$ was considered by the greedy algorithm before and
	not added to the solution, i.e., \mbox{$\{s\in\greedyset_{k}\mid w(s)\geq w(x)\}\cup\{x\}\notin\I$}.
	The fact that \mbox{$f(\greedyset_{k}\cup\{x\})-f(\greedyset_{k})>0$}
	implies that there is \mbox{$\emptyset\neq S\subseteq\greedyset_{k}$}
	with\linebreak \mbox{$\greedyset_{k}\setminus S\cup\{x\}\in\I$}
	and $w(S)<w(x)$. The last inequality implies that\linebreak \mbox{$\{s\in S\mid w(s)\geq w(x)\}=\emptyset$},
	which means that \mbox{$\{s\in\greedyset_{k}\mid w(s)\geq w(x)\}\subseteq\greedyset_{k}\setminus S$}.
	But then \mbox{$\{s\in\greedyset_{k}\mid w(s)\geq w(x)\}\cup\{x\}\in\I$},
	which is a contradiction. Therefore, we have \mbox{$w(x)\leq w(s)$}
	for all $s\in\greedyset_{k}$. If we would have \mbox{$\greedyset_{k}\cup\{x\}\notin\I$},
	then the equality \mbox{$f(\greedyset_{k}\cup\{x\})-f(\greedyset_{k})=0$}
	would hold because every element in $\greedyset_{k}$ has a greater
	weight than $x$ and because $\greedyset_{k}\in\I$. Thus, the statement
	holds.
\end{proof}

Since (weak) $\gamma$-$\alpha$-augmentability implies (weak) $\gamma'$-$\alpha'$-augmentability
for all\linebreak \mbox{$\gamma\geq\gamma'$} and $\alpha\leq\alpha'$, the following
proposition implies the first part of Theorem~\ref{thm:gam-alph-aug_weaker_than_rest}.
\begin{prop}
	\label{prop:gamma-alpha_unifies_alpha_and_submod-ratio}For every
	$\gamma,q\in(0,1]$, and every $\alpha\geq1$, it holds that
	\[
	\gafuncs_{1,\alpha}\supseteq\augfuncs_{\alpha}\qquad\mathrm{and}\qquad\gafuncs_{\gamma,\gamma}\supseteq\srfuncs_{\gamma}\qquad\mathrm{and}\qquad\gafuncs_{\gamma,\gamma/q}\supseteq\rqfuncs_{q}.
	\]
\end{prop}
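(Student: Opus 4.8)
The plan is to verify the three containments separately, in increasing order of difficulty. The inclusion $\gafuncs_{1,\alpha}\supseteq\augfuncs_{\alpha}$ is essentially immediate: for $\gamma=1$ the defining inequality of weak $\gamma$-$\alpha$-augmentability at a greedy set $X\in\{\greedyset_{0},\dots,\greedyset_{\lastk}\}$ and $Y\nsubseteq X$ reads $f(X\cup\{y\})-f(X)\geq\frac{f(X\cup Y)-\alpha f(X)}{|Y|}$, which is exactly the defining inequality of $\alpha$-augmentability restricted to the sets where the weak version asks for it; the only subtlety is that $\alpha$-augmentability yields a witness $y\in Y\setminus X$ while the weak $\gamma$-$\alpha$-definition only asks for $y\in Y$, so the witness carries over verbatim.

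For $\gafuncs_{\gamma,\gamma}\supseteq\srfuncs_{\gamma}$, I would take $f$ with weak submodularity ratio at least $\gamma$, fix a greedy set $X\in\{\greedyset_{0},\dots,\greedyset_{\lastk}\}$ and $Y\nsubseteq X$, and set $Y':=Y\setminus X$. By monotonicity $f(X\cup Y')=f(X\cup Y)$, and if $f(X\cup Y)=f(X)$ the claimed inequality $f(X\cup\{y\})-f(X)\geq\frac{\gamma f(X\cup Y)-\gamma f(X)}{|Y|}=0$ holds trivially by monotonicity for any $y\in Y'$; otherwise Definition~\ref{def:submodularity_ratio} (applied with this $X$ and $Y'$) gives $\sum_{y\in Y'}(f(X\cup\{y\})-f(X))\geq\gamma\,(f(X\cup Y')-f(X))$, so by averaging there is $y\in Y'\subseteq Y$ with $f(X\cup\{y\})-f(X)\geq\frac{\gamma(f(X\cup Y)-f(X))}{|Y'|}\geq\frac{\gamma f(X\cup Y)-\gamma f(X)}{|Y|}$, using $|Y'|\leq|Y|$ and $f\geq0$. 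This is where it matters that we only require the \emph{weak} variant, since $\gamma(f)$ only controls greedy sets $X$.

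The main work is $\gafuncs_{\gamma,\gamma/q}\supseteq\rqfuncs_{q}$. Let $f$ be the weighted rank function of an independence system $(\U,\I)$ of rank quotient at least $q$, fix a greedy set $X=\greedyset_{k}$ with $k\in\oneto{\lastk}$ (the cases $X=\greedyset_{0}=\emptyset$ or $k>\lastk$ being degenerate) and $Y\nsubseteq X$, and again write $Y':=Y\setminus X$. I would first discard elements of weight zero from $Y'$ (they contribute nothing to either side), so assume $w(y)>0$ for all $y\in Y'$. If some $y\in Y'$ has $\greedyset_{k}\cup\{y\}\in\I$, then by Lemma~\ref{lem:adding_element_to_greedy_set} we get $f(X\cup\{y\})-f(X)=w(y)$, and since $X$ is a greedy set this $w(y)$ is at least the weight of every element the greedy algorithm could have added next, in particular (via a basis-exchange/rank-quotient argument comparing a basis of $X\cup Y$ against the basis $X$) one shows $w(y)\geq\frac{q\cdot f(X\cup Y)-f(X)}{|Y|}\geq\frac{\gamma(f(X\cup Y)-(1/q)f(X))}{|Y|}$ after multiplying through; the rank-quotient bound is exactly what converts "$f(X\cup Y)$ on all of $\I\cap 2^{X\cup Y}$" into "$q\cdot f(X\cup Y)$ achievable inside a basis extending $X$''. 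If instead $\greedyset_{k}\cup\{y\}\notin\I$ for every $y\in Y'$, then by Lemma~\ref{lem:adding_element_to_greedy_set} $f(X\cup\{y\})-f(X)=0$ for all $y\in Y'$, so I must show the right-hand side is $\leq0$, i.e.\ $f(X\cup Y)\leq\frac{1}{q}f(X)$: here $X=\greedyset_{k}$ is a basis of $X\cup Y$ (no element of $Y'$ can be added), so $f(X)=w(X)$ is the weight of one basis of $X\cup Y$, and $f(X\cup Y)$ is the maximum weight of an independent subset of $X\cup Y$, hence of some basis $B$ of $X\cup Y$; then $w(B)/w(X)\leq (\max_{x}w(x))\,|B|\,/\,(\min_{x\in B'}w(x)\,|B'|)$ is not quite enough, so instead I would invoke the standard rank-quotient inequality $f(X\cup Y)\le\frac{1}{q}f(X)$ valid because $X$ and $B$ are both bases of $X\cup Y$ and the greedy/rank-quotient theory of Jenkyns and Korte--Hausmann bounds the weight ratio of any two bases by $1/q$ for \emph{any} nonnegative weights. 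The main obstacle is getting this last weight-ratio bound cleanly from the rank-quotient definition (which is stated cardinality-wise, $|B|/|B'|\ge q$) rather than weight-wise; I expect to handle it by the classical "sort weights, threshold, and compare ranks level by level'' argument, i.e.\ writing $w$ as a nonnegative combination of rank functions of the threshold systems $\I\cap 2^{\{x:w(x)\ge\tau\}}$ and applying $|B|\ge q|B'|$ on each level. The remaining algebra — checking $|Y'|\le|Y|$, $f\ge0$, and that $\alpha=\gamma/q\ge\gamma$ so the definition is well-posed — is routine.
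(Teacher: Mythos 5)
The first two containments are fine and essentially identical to the paper's argument (your extra care with $Y':=Y\setminus X$ in the submodularity-ratio case is harmless). The gap is in the third containment, and it is twofold. First, the inequality you aim for in your Case~A, $w(y)\geq\frac{q\,f(X\cup Y)-f(X)}{|Y|}$, is too weak: weak $\gamma$-$\frac{\gamma}{q}$-augmentability requires some $y$ with $f(X\cup\{y\})-f(X)\geq\frac{\gamma f(X\cup Y)-\frac{\gamma}{q}f(X)}{|Y|}=\frac{\gamma}{q}\cdot\frac{q f(X\cup Y)-f(X)}{|Y|}$, and your ``multiplying through'' step only works when $\gamma\leq q$ (or when $f(X\cup Y)\leq\frac{1}{q}f(X)$). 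The proposition is claimed for all $\gamma\in(0,1]$, and the case $\gamma=1$ (i.e.\ $\rqfuncs_{q}\subseteq\gafuncs_{1,1/q}$) is exactly the one the paper later uses; there you must prove the stronger bound $|Y|\,w(y^{*})\geq f(X\cup Y)-\frac{1}{q}f(X)$, which your intermediate inequality does not give. Second, neither your Case~A nor Case~B actually derives its key estimate: in Case~A, $X=\greedyset_{k}$ is \emph{not} a basis of $X\cup Y$ (some $y$ is addable), so the ``basis-exchange'' comparison you gesture at is not available, and in Case~B the claim that the rank quotient bounds the weight ratio of \emph{any} two bases by $1/q$ is false -- take $\I=\{\emptyset,\{a\},\{b\}\}$ with $w(a)=1$, $w(b)=M$: the rank quotient is $1$, yet the two bases of $\{a,b\}$ have weight ratio $M$. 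The $1/q$ weight bound is the Jenkyns/Korte--Hausmann guarantee for a \emph{greedily constructed} basis against an arbitrary independent set, so in both cases you still need to exploit that $X$ is a greedy prefix, and your threshold-decomposition fix would in effect have to reprove that theorem for the prefix situation.

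The paper's proof supplies exactly the missing decomposition. It takes a maximum-weight independent set $S'\cup Y'\subseteq\greedyset_{k}\cup Y$ with $f(\greedyset_{k}\cup Y)=w(S'\cup Y')$, lets $y^{*}$ be the best single addition from $Y'$, and splits $Y'$ into the ``heavy'' part $\tilde{Y}:=\{y\in Y'\mid w(y)>w(y^{*})\}$ (or all of $Y'$ if $y^{*}$ gives no gain) and the rest. On the restricted independence system $\tilde{\U}:=\greedyset_{k}\cup\tilde{Y}$, $\tilde{\I}:=2^{\greedyset_{k}}\cup2^{S'\cup\tilde{Y}}$, whose greedy solution is $\greedyset_{k}$, the classical $1/q$ guarantee gives $w(S'\cup\tilde{Y})\leq\frac{1}{q}f(\greedyset_{k})$, while the light elements satisfy $w(Y'\setminus\tilde{Y})\leq|Y'\setminus\tilde{Y}|\,w(y^{*})\leq|Y|\bigl(f(\greedyset_{k}\cup\{y^{*}\})-f(\greedyset_{k})\bigr)$ via Lemma~\ref{lem:adding_element_to_greedy_set}. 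Adding the two bounds yields $|Y|\bigl(f(\greedyset_{k}\cup\{y^{*}\})-f(\greedyset_{k})\bigr)\geq f(\greedyset_{k}\cup Y)-\frac{1}{q}f(\greedyset_{k})$, which is the full-strength inequality (and then $\gamma\leq1$ is only used at the very end). Your proposal contains no analogue of this heavy/light split or of the restricted system on which the greedy guarantee is invoked, so as written the third containment is not established.
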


\begin{proof}
	If $f\in\augfuncs_{\alpha}$, then, for all $X,Y\subseteq\U$, and,
	in particular $X\in\{\greedyset_{0},...,\greedyset_{\lastk}\}$, there
	exists $y\in Y$ with
	\[
	f(X\cup\{y\})-f(X)\geq\frac{1\cdot f(X\cup Y)-\alpha f(X)}{|Y|},
	\]
	which means that $f\in\gafuncs_{1,\alpha}$.
	
	For the second part of the proof, let $f\in\srfuncs_{\gamma}$, $X\in\{\greedyset_{0},...,\greedyset_{\lastk}\}$
	and $Y\subseteq\U\setminus X$. Furthermore, let $y^{*}\in\arg\max_{y\in Y}f(X\cup\{y\})$.
	Then, we have
	\begin{align*}
		|Y|(f(X\cup\{y^{*}\})-f(X)) & \geq\sum_{y\in Y}(f(X\cup\{y\})-f(X))\\
		& \geq\gamma(f)f(X\cup Y)-\gamma(f)f(X),
	\end{align*}
	where the second inequality follows from the definition of the weak
	submodularity ratio. Since $\gamma(f)\geq\gamma$, this means that
	$f$ is weakly $\gamma$-$\gamma$-augmentable, i.e., $f\in\gafuncs_{\gamma,\gamma}$.
	
	For the last part of the proof, let $f\in\rqfuncs_{q}$ be the weighted
	rank function of an independence system $(\U,\I)$, and let $w\colon\U\rightarrow\R_{\geq0}$
	be the associated weight function. Furthermore, let $k\in\oneto{\lastk}$
	and $Y\subseteq\U$. We prove that there exists $y\in Y$ with
	\begin{equation}
		f(\greedyset_{k}\cup\{y\})-f(\greedyset_{k})\geq\frac{f(\greedyset_{k}\cup Y)-\frac{1}{q(\U,\I)}f(\greedyset_{k})}{|Y|}.\label{eq:ind-sys_gam-alph-aug}
	\end{equation}
	
	Let $S'\subseteq\greedyset_{k}$ and $Y'\subseteq Y$ with $S'\cup Y'\in\I$
	and $f(\greedyset_{k}\cup Y)=w(S'\cup Y')$. Furthermore, let $y^{*}:=\arg\max_{y\in Y'}f(\greedyset_{k}\cup\{y\})$.
	We define
	\[
	\tilde{Y}:=\begin{cases}
		\{y\in Y'\mid w(y)>w(y^{*})\}, & \textrm{if }f(\greedyset_{k}\cup\{y^{*}\})>f(\greedyset_{k}),\\
		Y', & \textrm{if }f(\greedyset_{k}\cup\{y^{*}\})=f(\greedyset_{k}),
	\end{cases}
	\]
	and we define the independence system $(\tilde{\U},\tilde{\I})$ with
	\begin{eqnarray*}
		\tilde{\U} & := & \greedyset_{k}\cup\tilde{Y},\\
		\tilde{\I} & := & 2^{\greedyset_{k}}\cup2^{S'\cup\tilde{Y}}.
	\end{eqnarray*}
	We have $\tilde{\U}\subseteq\U$ and $\tilde{\I}\subseteq\I$ and
	thus $q(\tilde{\U},\tilde{\I})\geq q(\U,\I)$. The greedy solution
	for the maximization problem on the independence system $(\tilde{\U},\tilde{\I})$
	is $\greedyset_{k}$. Let $\optset\subseteq\tilde{\U}$ be the optimal
	solution. Then, as shown in \cite{Jenkyns1976,Korte1978}, we have
	\begin{equation}
		f(\greedyset_{k})\geq q(\tilde{\U},\tilde{\I})f(\optset)\geq q(\tilde{\U},\tilde{\I})f(S'\cup\tilde{Y})\geq q(\U,\I)f(S'\cup\tilde{Y}).\label{eq:estimate_from_restricted_ind_sys}
	\end{equation}
	If $f(\greedyset_{k}\cup\{y^{*}\})>f(\greedyset_{k})$, Lemma \ref{lem:adding_element_to_greedy_set}
	yields $f(\greedyset_{k}\cup\{y^{*}\})-f(\greedyset_{k})=w(y^{*})$,
	and if $f(\greedyset_{k}\cup\{y^{*}\})=f(\greedyset_{k})$, by definition
	of $\tilde{Y}$, we have $|Y\setminus Y'|=0$. Using this and the
	definition of $\tilde{Y}$, we get
	\begin{eqnarray*}
		|Y|(f(\greedyset_{k}\cup\{y^{*}\})-f(\greedyset_{k})) & \geq & |Y'\setminus\tilde{Y}|w(y^{*})\\
		& \geq & w(Y'\setminus\tilde{Y})\\
		& \overset{\eqref{eq:estimate_from_restricted_ind_sys}}{\geq} & w(Y'\setminus\tilde{Y})+w(S'\cup\tilde{Y})-\frac{1}{q(\U,\I)}f(\greedyset_{k})\\
		& = & w(S'\cup Y')-\frac{1}{q(\U,\I)}f(\greedyset_{k})\\
		& = & f(\greedyset_{k}\cup Y)-\frac{1}{q(\U,\I)}f(\greedyset_{k})\\
		& \overset{1\geq\gamma}{\geq} & \gamma f(\greedyset_{k}\cup Y)-\frac{\gamma}{q(\U,\I)}f(\greedyset_{k}).
	\end{eqnarray*}
	Since $q(\U,\I)\geq q$, this yields weak $\gamma$-$\frac{\gamma}{q}$-augmentability,
	i.e., $f\in\gafuncs_{\gamma,\gamma/q}$.
\end{proof}

Having shown that $\gafuncs_{\gamma,\alpha}$ subsumes the other
three classes of functions, we now prove the upper bound of Theorem~\ref{thm:tight_bound}
for this class. Observe that the upper bound trivially carries over
to the class of monotone, $\gamma$-$\alpha$-augmentable (not weakly)
functions.
\begin{thm}
	\label{thm:gam-alph-aug_upper_bound}The approximation ratio of the
	greedy algorithm on the class $\gafuncs_{\gamma,\alpha}$ of monotone,
	weakly $\gamma$-$\alpha$-augmentable functions, with $\gamma\in(0,1]$
	and $\alpha\geq\gamma$, is at most
	\[
	\frac{\alpha}{\gamma}\cdot\frac{\mathrm{e}^{\alpha}}{\mathrm{e}^{\alpha}-1}.
	\]
\end{thm}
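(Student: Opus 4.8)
The plan is to run the standard "greedy-versus-optimum" potential argument, tracking the quantity $f(\optset_k) - f(\greedyset_i)$ as $i$ increases from $0$ to $k$, and to feed the defining inequality of weak $\gamma$-$\alpha$-augmentability into the recursion at each step. First I would fix $f \in \gafuncs_{\gamma,\alpha}$, fix a cardinality $k$, write $O := f(\optset_k)$ and $g_i := f(\greedyset_i)$, and observe that, by definition of the greedy algorithm, $g_{i+1} - g_i = \max_{x \in U \setminus \greedyset_i} \bigl(f(\greedyset_i \cup \{x\}) - g_i\bigr)$. Applying weak $\gamma$-$\alpha$-augmentability with $X = \greedyset_i$ (which is a greedy set, so the weak definition applies) and $Y = \optset_k$ gives $g_{i+1} - g_i \ge \frac{1}{|\optset_k|}\bigl(\gamma f(\greedyset_i \cup \optset_k) - \alpha g_i\bigr) \ge \frac{1}{k}(\gamma O - \alpha g_i)$, using monotonicity $f(\greedyset_i \cup \optset_k) \ge f(\optset_k) = O$ and $|\optset_k| \le k$. (One should handle the degenerate case $\optset_k \subseteq \greedyset_i$ separately — there $g_i \ge O$ already and we are done — so assume $\optset_k \nsubseteq \greedyset_i$ throughout, which is exactly what the definition requires.)

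The recursion $g_{i+1} \ge \bigl(1 - \tfrac{\alpha}{k}\bigr) g_i + \tfrac{\gamma}{k} O$ then needs to be unrolled. Setting $\beta := 1 - \tfrac{\alpha}{k}$, induction on $i$ (starting from $g_0 \ge 0$) yields $g_k \ge \tfrac{\gamma}{k} O \sum_{j=0}^{k-1} \beta^j = \tfrac{\gamma}{k} O \cdot \tfrac{1 - \beta^k}{1 - \beta} = \tfrac{\gamma}{\alpha} O \bigl(1 - (1 - \tfrac{\alpha}{k})^k\bigr)$. Hence $\tfrac{O}{g_k} \le \tfrac{\alpha}{\gamma} \cdot \tfrac{1}{1 - (1 - \alpha/k)^k}$. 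The final step is to bound $(1 - \alpha/k)^k \le \e^{-\alpha}$, which holds because $1 + t \le \e^t$ for all real $t$ (apply with $t = -\alpha/k$ and raise to the $k$-th power, noting $1 - \alpha/k$ may be negative for small $k$, in which case $(1-\alpha/k)^k \le \e^{-\alpha}$ still holds when $k$ is even, and when $k$ is odd and $1-\alpha/k<0$ the left side is negative, so the bound is trivial). This gives $1 - (1 - \alpha/k)^k \ge 1 - \e^{-\alpha} = \tfrac{\e^\alpha - 1}{\e^\alpha}$, and therefore $\tfrac{O}{g_k} \le \tfrac{\alpha}{\gamma} \cdot \tfrac{\e^\alpha}{\e^\alpha - 1}$, as claimed. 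Taking the supremum over $f$ and the maximum over $k$ completes the proof.

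The main subtlety — not really an "obstacle" but the point requiring care — is the sign of $\beta = 1 - \alpha/k$: for $k < \alpha$ it is negative, so the geometric-sum manipulation and the monotonicity of the induction need to be checked rather than waved through. One clean way to sidestep the sign issue in the unrolling is to instead track $d_i := \tfrac{\gamma}{\alpha} O - g_i$ and show $d_{i+1} \le \beta\, d_i$, so that $d_k \le \beta^k d_0 \le \beta^k \cdot \tfrac{\gamma}{\alpha} O$; if $\beta < 0$ one uses $|\beta^k| \le |\beta|^k$ and the fact that $d_i \le \tfrac{\gamma}{\alpha}O$ stays bounded, but in any case the inequality $g_k \ge \tfrac{\gamma}{\alpha} O (1 - \beta^k)$ follows, and then $1 - \beta^k \ge 1 - \e^{-\alpha}$ as above. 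A second point worth stating explicitly: the argument only ever invokes the augmentability inequality at $X \in \{\greedyset_0, \dots, \greedyset_{\lastk}\}$, which is precisely why the \emph{weak} variant suffices and why, as the paper notes, the bound carries over verbatim to the non-weak class.
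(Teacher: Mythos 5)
Your overall route is the same as the paper's: apply the weak $\gamma$-$\alpha$-augmentability inequality at $X=\greedyset_{i}$, $Y=\optset_{k}$, combine with monotonicity and $|\optset_{k}|\leq k$ to get the recursion $f(\greedyset_{i+1})-f(\greedyset_{i})\geq\frac{\gamma}{k}\bigl(f(\optset_{k})-\frac{\alpha}{\gamma}f(\greedyset_{i})\bigr)$, unroll it, and finish with $1+x\leq\e^{x}$. Two points, however, do not go through as written. First, the weak definition only guarantees the inequality for $X\in\{\greedyset_{0},\dots,\greedyset_{\lastk}\}$, so your recursion is only licensed while $i\leq\lastk$; your closing claim that the argument ``only ever invokes the augmentability inequality at $X\in\{\greedyset_{0},\dots,\greedyset_{\lastk}\}$'' is exactly what fails when $k>\lastk$, and since the approximation ratio is a maximum over all cardinalities $k$, this case must be covered. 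The paper handles it separately: for $k>\lastk$, every marginal gain at $\greedyset_{\lastk}$ is zero, so the augmentability inequality at $X=\greedyset_{\lastk}$, $Y=\optset_{k}$ forces $0\geq\gamma f(\optset_{k})-\alpha f(\greedyset_{\lastk})$, i.e.\ $f(\greedyset_{\lastk})\geq\frac{\gamma}{\alpha}f(\optset_{k})$, and monotonicity transfers this to $\greedyset_{k}$. You need this (or an equivalent) case distinction.

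Second, your treatment of $\beta=1-\alpha/k<0$ is incorrect in both variants. For even $k$ with $\alpha>k$, $(1-\alpha/k)^{k}=|1-\alpha/k|^{k}$ can far exceed $\e^{-\alpha}$ (e.g.\ $\alpha=3$, $k=2$ gives $\tfrac14>\e^{-3}$), so the parenthetical claim that the bound ``still holds when $k$ is even'' is backwards (it is the odd case that is trivial); and your ``clean'' alternative still ends with ``$1-\beta^{k}\geq1-\e^{-\alpha}$'', which needs the same false inequality, while iterating $d_{i+1}\leq\beta d_{i}$ to $d_{k}\leq\beta^{k}d_{0}$ is itself invalid for negative $\beta$. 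The repair is one line: if $\beta<0$, then either $f(\greedyset_{0})\geq\frac{\gamma}{\alpha}f(\optset_{k})$ already, or $d_{0}\geq0$ and a single application of the recursion gives $d_{1}\leq\beta d_{0}\leq0$, i.e.\ $f(\greedyset_{1})\geq\frac{\gamma}{\alpha}f(\optset_{k})\geq\frac{\gamma}{\alpha}\cdot\frac{\e^{\alpha}-1}{\e^{\alpha}}f(\optset_{k})$, and monotonicity finishes. (To be fair, the paper's own induction multiplies the hypothesis by $1-\alpha/k$ and thus also tacitly assumes $k\geq\alpha$; but your write-up makes explicit assertions about the negative case that are false, so fix them along these lines.) With those two repairs your argument matches the paper's proof.
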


\begin{proof}
	Let $f\in\gafuncs_{\gamma,\alpha}$. First we consider the case $k>\lastk$,
	i.e., the case that the greedy algorithm stops early because the value
	of the solution cannot be increased by adding any element. Because
	$f$ is weakly $\gamma$-$\alpha$-augmentable and there is no element
	$u\in\U$ with $\mbox{\ensuremath{f(S_{\lastk}\cup\{u\})-f(S_{\lastk})>0}}$,
	we have, for all $x\in\optset_{k}$,
	\[
	0=|\optset_{k}|(f(\greedyset_{\lastk}\cup\{x\})-f(\greedyset_{\lastk}))\geq\gamma f(\greedyset_{\lastk}\cup\optset_{k})-\alpha f(\greedyset_{\lastk})\geq\gamma f(\optset_{k})-\alpha f(\greedyset_{k}),
	\]
	i.e., $f(\greedyset_{k})\geq\frac{\gamma}{\alpha}f(\optset_{k})>\frac{\gamma}{\alpha}\cdot\frac{\e^{\alpha}-1}{\e^{\alpha}}f(\optset_{k})$.
	
	Now consider the case that $k\leq\lastk$. For ease of notation, we
	define the gain of the greedy algorithm in iteration $j$ to be \mbox{$\delta_{j}:=f(\greedyset_{j})-f(\greedyset_{j-1})$}
	for all $j\in\oneto k$. Let $i\in\oneto k$ and
	\[
	\mbox{\ensuremath{x^{*}:=\arg\max_{x\in\optset_{k}}f(\greedyset_{i-1}\cup\{x\})-f(\greedyset_{i-1})}}.
	\]
	We have
	\begin{eqnarray}
		\delta_{i} & \geq & f(\greedyset_{i-1}\cup\{x^{*}\})-f(\greedyset_{i-1})\geq\frac{\gamma f(\greedyset_{i-1}\cup\optset_{k})-\alpha f(\greedyset_{i-1})}{|\optset_{k}|}\nonumber \\
		& \geq & \frac{\gamma}{k}f(\optset_{k})-\frac{\alpha}{k}f(\greedyset_{i-1})=\frac{\gamma}{k}f(\optset_{k})-\frac{\alpha}{k}\sum_{j=1}^{i-1}\delta_{j}-\frac{\alpha}{k}f(\emptyset).\label{eq:estimate_gain_greedy_single_step}
	\end{eqnarray}
	We prove by induction that, for all $\ell\in\{0,...,k\}$, we have
	\begin{equation}
		f(\optset_{k})-\frac{\alpha}{\gamma}\sum_{j=1}^{\ell}\delta_{j}-\frac{\alpha}{\gamma}f(\emptyset)\leq f(\optset_{k})(1-\frac{\alpha}{k})^{\ell}.\label{eq:induction_statement_greedy_estimate}
	\end{equation}
	For $\ell=0$ the equation obviously holds. Now suppose that (\ref{eq:induction_statement_greedy_estimate})
	holds for some $\ell\in\{0,...,k-1\}$. Then, for~$\ell+1$, we have
	\begin{eqnarray*}
		& & f(\optset_{k})-\frac{\alpha}{\gamma}\sum_{j=1}^{\ell+1}\delta_{j}-\frac{\alpha}{\gamma}f(\emptyset) \\
		& = & f(\optset_{k})-\frac{\alpha}{\gamma}\sum_{j=1}^{\ell}\delta_{j}-\frac{\alpha}{\gamma}\delta_{\ell+1}-\frac{\alpha}{\gamma}f(\emptyset)\\
		& \overset{\eqref{eq:estimate_gain_greedy_single_step}}{\leq} & f(\optset_{k})-\frac{\alpha}{\gamma}\sum_{j=1}^{\ell}\delta_{j}-\frac{\alpha}{\gamma}\Bigl(\frac{\gamma}{k}f(\optset_{k})-\frac{\alpha}{k}\sum_{j=1}^{\ell}\delta_{j}-\frac{\alpha}{k}f(\emptyset)\Bigr)-\frac{\alpha}{\gamma}f(\emptyset)\\
		& = & \Bigl(f(\optset_{k})-\frac{\alpha}{\gamma}\sum_{j=1}^{\ell}\delta_{j}-\frac{\alpha}{\gamma}f(\emptyset)\Bigr)(1-\frac{\alpha}{k})\overset{\eqref{eq:induction_statement_greedy_estimate}}{\leq}f(\optset_{k})(1-\frac{\alpha}{k})^{\ell+1},
	\end{eqnarray*}
	and (\ref{eq:induction_statement_greedy_estimate}) continues to hold.
	Because of $1+x\leq\e^{x}$ for $x\in\R$, we have
	\[
	f(\optset_{k})-\frac{\alpha}{\gamma}\sum_{j=1}^{\ell}\delta_{j}-\frac{\alpha}{\gamma}f(\emptyset)\overset{\eqref{eq:induction_statement_greedy_estimate}}{\leq}f(\optset_{k})(1-\frac{\alpha}{k})^{\ell}\leq\e^{-\frac{\alpha}{k}\ell}f(\optset_{k}).
	\]
	Rearranging this for $\ell=k$ and using the fact that $f(\greedyset_{k})=\sum_{j=1}^{k}\delta_{j}+f(\emptyset)$,
	yields\setlength{\belowdisplayskip}{-12pt}
	\[
	f(\greedyset_{k})\geq\frac{\gamma}{\alpha}\cdot\frac{\e^{\alpha}-1}{\e^{\alpha}}f(\optset_{k}).
	\]
\end{proof}

\subsection{A Critical Function\label{subsec:lower_bounds}}

To obtain the tight lower bound of Theorem~\ref{thm:tight_bound}
for weakly $\gamma$-$\alpha$-augmentable problems and to separate
this class from $\srfuncs_{\gamma}\cup\augfuncs_{\alpha}\cup\rqfuncs_{q}$,
we introduce a function that is inspired by a construction in~\cite{Bian2017}
for the submodularity ratio.

We fix $\gamma\in(0,1]$ and $\alpha\geq\gamma$. Let $k\in\N$ with
$k>\alpha$, and let $A=\{a_{1},...,a_{k}\}$ and $B=\{b_{1},...,b_{k}\}$
be disjoint sets. We set $\U=A\cup B$, define $\xi_{i}:=\frac{1}{k}(\frac{k-\alpha}{k})^{i-1}$
and let $h(x):=\frac{\gamma^{-1}-1}{k-1}x^{2}+\frac{k-\gamma^{-1}}{k-1}x$.
For our purpose, the important facts about~$h$ are $h(0)=0$, $h(1)=1$,
$h(k)=\frac{k}{\gamma}$ and that $h$ is convex and non-decreasing
on $[0,k])$. With this in mind, we define the function $F_{\gamma,\alpha,k}\colon2^{\U}\rightarrow\R_{\geq0}$
by
\begin{equation*}
	F_{\gamma,\alpha,k}(X) = \max_{X'\subseteq X}\Bigl\{\frac{h(|\{b_{1}\}\cap X'|\cdot|B\cap X'|)}{k}\Bigl(1-\alpha\sum_{\substack{i\in\oneto k:\\
			a_{i}\in A\cap X'
		}
	}\xi_{i}\Bigr)+\sum_{\substack{i\in\oneto k:\\
			a_{i}\in A\cap X'
		}
	}\xi_{i}\Bigr\}
\end{equation*}
If $h(|\{b_{1}\}\cap X|\cdot|B\cap X|)>\frac{k}{\alpha}$, we have
\[
F_{\gamma,\alpha,k}(X) = \frac{h(|\{b_{1}\}\cap X|\cdot|B\cap X|)}{k},
\]
and otherwise, if $h(|\{b_{1}\}\cap X|\cdot|B\cap X|)\leq\frac{k}{\alpha}$, we have
\[
F_{\gamma,\alpha,k}(X) = \frac{h(|\{b_{1}\}\cap X|\cdot|B\cap X|)}{k}\Bigl(1-\alpha\sum_{\substack{i\in\oneto k:\\a_{i}\in A\cap X}}\xi_{i}\Bigr)+\sum_{\substack{i\in\oneto k:\\a_{i}\in A\cap X}}\xi_{i}.
\]

We observe that, for $X\subseteq B$, convexity of $h$, $h(0)=0$,
$h(k)=k/\gamma$ and $|X|\leq|B|=k$ imply that
\begin{equation}
	h(|\{b_{1}\}\cap X|\cdot|X|)\leq\frac{|\{b_{1}\}\cap X|\cdot|X|}{\gamma},\label{eq:h(X)_smaller_divided_by_gamma}
\end{equation}
and, for $\ell\in\{0,...,k\}$, we have 
\begin{equation}
	\sum_{i=1}^{\ell}\xi_{i}=\sum_{i=1}^{\ell}\frac{1}{k}\Bigl(\frac{k-\alpha}{k}\Bigr)^{i-1}=\frac{1}{k}\cdot\frac{1-\bigl(\frac{k-\alpha}{k}\bigr)^{\ell}}{1-\frac{k-\alpha}{k}}=\frac{1-\bigl(\frac{k-\alpha}{k}\bigr)^{\ell}}{\alpha}.\label{eq:sum_xi_1_to_ell}
\end{equation}
We show that our modification of the function introduced in~\cite{Bian2017}
retains the same structure in regard to greedy solutions.
\begin{prop}
	\label{prop:greedy_pickorder_F}For $i\in\oneto k$, the greedy algorithm
	picks the element $a_{i}$ in iteration $i$, and, for $\mbox{\ensuremath{i\in\oneto{2k}\setminus\oneto k}}$,
	the greedy algorithm picks the element $b_{i-k}$ in iteration $i$.
\end{prop}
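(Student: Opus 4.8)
The plan is to evaluate $F_{\gamma,\alpha,k}$ explicitly on the sets that arise as greedy solutions and on their one-element extensions, and to read off the greedy choice in each iteration. Throughout, write $r:=\tfrac{k-\alpha}{k}$, so that $r\in(0,1)$ since $k>\alpha$, the values $\xi_i=\tfrac1k r^{i-1}$ are strictly positive and strictly decreasing in $i$, $\tfrac1k=\tfrac{1-r}{\alpha}$, and $\sum_{l=1}^{\ell}\xi_l=\tfrac{1-r^{\ell}}{\alpha}$ for every $\ell\ge0$ by~\eqref{eq:sum_xi_1_to_ell} (this is just the geometric series and holds verbatim beyond $\ell=k$). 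The only properties of $h$ needed here are $h(0)=0$ and $h(1)=1$; note also that $F_{\gamma,\alpha,k}$ is monotone by construction, and, as in the proof of Lemma~\ref{lem:greedy_pick_order}, we assume ties are broken in our favour, which can be arranged by perturbing $F_{\gamma,\alpha,k}$ infinitesimally.

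First I would record two evaluations. If $X\subseteq A$, then for every $X'\subseteq X$ the argument of $h$ vanishes, so $F_{\gamma,\alpha,k}(X)=\max_{X'\subseteq X}\sum_{i:\,a_i\in X'}\xi_i=\sum_{i:\,a_i\in X}\xi_i$; in particular $F_{\gamma,\alpha,k}(\{a_1,\dots,a_i\})=\sum_{l=1}^{i}\xi_l$, and adding some $a_j\notin X$ to such a set raises the value by exactly $\xi_j$. Next, for $0\le i\le k$ consider $X=\{a_1,\dots,a_i\}\cup\{b_1\}$: any $X'\subseteq X$ with $b_1\notin X'$ contributes at most $\sum_{l=1}^{i}\xi_l$, whereas for $X'$ with $b_1\in X'$ the argument of $h$ equals $1$ and the bracketed expression equals $\tfrac1k+(1-\tfrac\alpha k)\sum_{i:\,a_i\in X'}\xi_i=\tfrac{1-r}{\alpha}+r\sum_{i:\,a_i\in X'}\xi_i$; since the coefficient $r$ is positive, this is maximised by including all of $a_1,\dots,a_i$, giving $\tfrac{1-r}{\alpha}+r\cdot\tfrac{1-r^i}{\alpha}=\tfrac{1-r^{i+1}}{\alpha}=\sum_{l=1}^{i+1}\xi_l$. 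Hence $F_{\gamma,\alpha,k}(\{a_1,\dots,a_i\}\cup\{b_1\})=\sum_{l=1}^{i+1}\xi_l$, which strictly exceeds both $F_{\gamma,\alpha,k}(\{a_1,\dots,a_i\})=\sum_{l=1}^{i}\xi_l$ and $F_{\gamma,\alpha,k}(\{a_1,\dots,a_i\}\cup\{b_j\})=\sum_{l=1}^{i}\xi_l$ for $j\ge2$ (the last equality again because $b_1$ is then absent from every admissible $X'$).

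I would then finish by induction over iterations. Suppose $\greedyset_{i-1}=\{a_1,\dots,a_{i-1}\}$ for some $i\in\oneto k$ (true for $i=1$); the elements available in iteration $i$ are $a_i,\dots,a_k,b_1,\dots,b_k$. By the evaluations above, adding $a_j$ with $j\ge i$ raises the objective by $\xi_j$, adding $b_1$ raises it by $F_{\gamma,\alpha,k}(\{a_1,\dots,a_{i-1},b_1\})-F_{\gamma,\alpha,k}(\{a_1,\dots,a_{i-1}\})=\sum_{l=1}^{i}\xi_l-\sum_{l=1}^{i-1}\xi_l=\xi_i$, and adding $b_j$ with $j\ge2$ raises it by $0$. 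Since $\xi_i\ge\xi_j$ for all $j\ge i$ and $\xi_i>0$, the set of maximisers is exactly $\{a_i,b_1\}$, so with the tie broken in favour of $a_i$ the greedy algorithm picks $a_i$; this proves the first claim and yields $\greedyset_k=A$. In iteration $k+1$ only $b_1,\dots,b_k$ remain, and $F_{\gamma,\alpha,k}(A\cup\{b_1\})=\sum_{l=1}^{k+1}\xi_l>\sum_{l=1}^{k}\xi_l=F_{\gamma,\alpha,k}(A\cup\{b_j\})$ for $j\ge2$, so the greedy algorithm picks $b_1$. For iteration $k+p$ with $2\le p\le k$ we have $\greedyset_{k+p-1}=A\cup\{b_1,\dots,b_{p-1}\}$ and the remaining elements are $b_p,\dots,b_k$; since $F_{\gamma,\alpha,k}(X)$ depends on $X$ only through the set of $a$-indices it contains, whether it contains $b_1$, and $|B\cap X|$, transposing two still-unused $b$-elements is an automorphism of $F_{\gamma,\alpha,k}$, so $F_{\gamma,\alpha,k}(\greedyset_{k+p-1}\cup\{b_j\})$ is independent of $j\in\{p,\dots,k\}$ and the tie is broken in favour of $b_p$.

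The only genuine computation is the evaluation $F_{\gamma,\alpha,k}(\{a_1,\dots,a_i\}\cup\{b_1\})=\sum_{l=1}^{i+1}\xi_l$, and the point to get right there is that, conditioned on $b_1\in X'$, the objective inside the maximum is affine in $\sum_{i:\,a_i\in X'}\xi_i$ with the strictly positive slope $1-\tfrac\alpha k=r$ (this is exactly where $h(1)=1$ enters), so the optimal $X'$ takes all available $a$-elements and the value telescopes to $\tfrac{1-r^{i+1}}{\alpha}$. Everything else is bookkeeping with geometric sums together with the symmetry among the $b$-elements other than $b_1$.
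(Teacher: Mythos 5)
Your proof is correct and follows essentially the same route as the paper's: induction over iterations, showing the gain of $a_j$ is $\xi_j$, the gain of $b_1$ is exactly $\xi_i$ via the geometric-sum identity \eqref{eq:sum_xi_1_to_ell}, the gain of $b_j$, $j\geq2$, is zero, and appealing to favourable tie-breaking, then using symmetry among $b_2,\dots,b_k$ in the second phase. You are merely a bit more explicit than the paper in evaluating the inner maximum over $X'$ (the affine-in-$\sum\xi_i$ argument with slope $1-\alpha/k>0$), which the paper leaves implicit.
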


\begin{proof}
	First, we consider the case $i\in\oneto k$. Suppose that in iteration
	$i$, the initial solution is $\{a_{1},...,a_{i-1}\}$, where $\{a_{1},...,a_{0}\}=\emptyset$,
	with objective value $\sum_{\ell=1}^{i-1}\xi_{\ell}$. Adding an element
	from $\{b_{2},...,b_{k}\}$ does not increase the objective value
	because $\{b_{1}\}\cap\{b\}=\emptyset$ for all $b\in\{b_{2},...,b_{k}\}$.
	For $j\in\{i,...,k\}$, adding $a_{j}$ increases the objective value
	by $\xi_{j}=\frac{1}{k}(\frac{k-\alpha}{k})^{j-1}$. Since $k>\alpha$,
	we have $\xi_{i}\geq\xi_{j}$ for $j\geq i$. Adding the element $b_{1}$
	to the solution $\{a_{1},...,a_{i-1}\}$ increases the objective value
	by 
	\[
	\frac{1}{k}\Bigl(1-\alpha\sum_{\ell=1}^{i-1}\xi_{\ell}\Bigr)\overset{\eqref{eq:sum_xi_1_to_ell}}{=}\frac{1}{k}\Bigl(1-\alpha\frac{1-\bigl(\frac{k-\alpha}{k}\bigr)^{i-1}}{\alpha}\Bigr)=\frac{1}{k}\Bigl(\frac{k-\alpha}{k}\Bigr)^{i-1}.
	\]
	Thus, with proper tie breaking, the greedy algorithm picks the element
	$a_{i}$ in iteration $i$ for $i\in\oneto k$.
	
	Now, we consider the case that $i\in\{k+1,...,2k\}$. For $i=k+1$,
	adding an element from $\{b_{2},...,b_{k}\}$ does not increase the
	objective value, while adding $b_{1}$ increases it by~$\frac{1}{k}(\frac{k-\alpha}{k})^{k}$.
	Thus, in iteration $k+1$, the element~$b_{1}$ is added to the solution.
	For~\mbox{$i\geq k+2$}, adding any element from $B\setminus\greedyset_{i-1}$
	to the greedy solution $\greedyset_{i-1}$ increases the function
	value by the same amount. Therefore, with proper tie breaking, the
	greedy algorithm picks the element $b_{i-k}$ in iteration $i$ for
	$i\in\{k+1,...,2k\}$.
\end{proof}

With this, we can show that $F_{\gamma,\alpha,k}$ is weakly $\gamma$-$\alpha$-augmentable.
\begin{lem}
	\label{lem:F_monotone_and_gam-alph-aug}For every $\gamma\in(0,1]$,
	every $\alpha\geq\gamma$, and every $k\in\N$ with $k>\alpha$, it
	holds that $\mbox{\ensuremath{F_{\gamma,\alpha,k}\in\gafuncs_{\gamma,\alpha}}}$.
\end{lem}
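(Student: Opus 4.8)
The plan is to verify monotonicity and weak $\gamma$-$\alpha$-augmentability of $F:=F_{\gamma,\alpha,k}$ directly from the definition. Monotonicity is immediate, since $F(X)$ is a maximum over all $X'\subseteq X$ and $X'=\emptyset$ yields the value $0$, so $F\ge 0$ is monotone. For weak $\gamma$-$\alpha$-augmentability I would first record the relevant values of $F$. By Proposition~\ref{prop:greedy_pickorder_F} the greedy sets are $\greedyset_{m}=\{a_{1},\dots,a_{m}\}$ for $0\le m\le k$ and $\greedyset_{k+j}=A\cup\{b_{1},\dots,b_{j}\}$ for $1\le j\le k$, and $\lastk=2k$ because every marginal gain is strictly positive (the functions $h(t)/k$ and $\rho\,h(t)/k+\sigma_{k}$, hence their maximum, are strictly increasing on $\{0,\dots,k\}$: $h$ is convex with $h(1)-h(0)=1$). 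Since the expression maximised in the definition of $F$ is affine in $\sum_{a_{i}\in X'}\xi_{i}$, the inner maximum is attained either with all of $A\cap X$ selected or with none of it; together with $\xi_{i}>0$ decreasing and $h$ increasing this gives, using \eqref{eq:sum_xi_1_to_ell}, that $F(\greedyset_{m})=\sum_{i\le m}\xi_{i}=\tfrac1\alpha\bigl(1-(\tfrac{k-\alpha}{k})^{m}\bigr)$ for $m\le k$, and that on any set of the form $A'\cup\{b_{1},\dots\}$ the value equals $\max\{g_{0},g_{\sigma_{k}}\}$, where $\sigma_{k}:=\sum_{i\le k}\xi_{i}$, $\rho:=1-\alpha\sigma_{k}=(\tfrac{k-\alpha}{k})^{k}>0$, and $g_{\beta}(t):=(1-\alpha\beta)\tfrac{h(t)}{k}+\beta$ is the value of using the sinks $b_{1},\dots$ with $t$ of them present and an $A$-configuration of total $\xi$-weight $\beta$. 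I would also use $h(x)\le x/\gamma$ on $\{0,\dots,k\}$ (which follows from $h(x)-x/\gamma=\tfrac{1-\gamma}{\gamma(k-1)}x(x-k)\le 0$) and $F(X)\le 1/\gamma$ for all $X$ (immediate from the previous estimate and $\rho>0$). Finally, the augmentability inequality is trivial whenever $\gamma F(X\cup Y)-\alpha F(X)<0$, and replacing $Y$ by $Y\setminus X$ changes neither $F(X\cup Y)$ nor $F(X)$ while not increasing $|Y|$, so I may assume $X\cap Y=\emptyset$ and $Y\neq\emptyset$.

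I would then distinguish two cases according to the greedy set $X$. For $X=\greedyset_{m}$ with $m\le k$ we have $Y\subseteq\{a_{m+1},\dots,a_{k}\}\cup B$; I take the witness $y=b_{1}$ if $b_{1}\in Y$, otherwise $y=a_{j}$ with $j$ minimal, and if neither occurs then $Y\subseteq\{b_{2},\dots,b_{k}\}$, so $b_{1}\notin X\cup Y$, whence $F(X\cup Y)=F(X)$ and the right-hand side of the augmentability inequality is $\le 0$. The marginal gain of the chosen witness equals $\xi_{m+1}=\tfrac1k(\tfrac{k-\alpha}{k})^{m}$ (for $y=b_{1}$ or $y=a_{m+1}$) or some $\xi_{j}\le\xi_{m+1}$; writing $r:=(\tfrac{k-\alpha}{k})^{m}$, so that $\alpha F(\greedyset_{m})=1-r$, the claim reduces to $\gamma F(X\cup Y)\le (1-r)+\tfrac{|Y|\,r}{k}$, which I would verify by bounding each of the (at most three) candidate values of $F(X\cup Y)$ separately, using $h(x)\le x/\gamma$, $\gamma\le\alpha$, and $\sum_{j\in J}\xi_{j}\le|J|\,\xi_{m+1}$ for $J\subseteq\{m+1,\dots,k\}$. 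This is a routine, if slightly lengthy, computation.

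The case $X=\greedyset_{k+j}$ is the heart of the proof and is where I expect the main difficulty. Here $Y$ consists only of sinks $b_{i}$ with $i>j$, every $y\in Y$ gives the same value, and $F(X)$, $F(X\cup\{y\})$, $F(X\cup Y)$ equal $\phi(j)$, $\phi(j+1)$, $\phi(j+p)$ with $p:=|Y|$ and $\phi:=\max\{g_{0},g_{\sigma_{k}}\}$, a convex increasing function on $\{0,\dots,k\}$. For each fixed $\beta\in\{0,\sigma_{k}\}$ the scalar function $g_{\beta}$ satisfies $p\bigl(g_{\beta}(j+1)-g_{\beta}(j)\bigr)\ge\gamma g_{\beta}(j+p)-\alpha g_{\beta}(j)$: after using $\alpha\ge\gamma$ and cancelling the positive factor $1-\alpha\beta$, this reduces to $2cj+1\ge\gamma\bigl(2cj+c(p-1)+1\bigr)$ with $c:=\tfrac{1-\gamma}{\gamma(k-1)}$, which holds because it is equivalent to $(2cj+1)(k-1)\ge p-1$ and $p-1\le k-2$. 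When the inner maximum defining $F$ is attained by the same configuration $\beta$ at both $X$ and $X\cup Y$, the claim follows at once, since the gain $\phi(j+1)-\phi(j)$ is at least $g_{\beta}(j+1)-g_{\beta}(j)$ for the configuration $\beta$ optimal at $X$, and $\phi(j)=g_{\beta}(j)$ there. The remaining, genuinely delicate, sub-case is when all of $A$ is optimal at $X$ but none of $A$ is optimal at $X\cup Y$ — which can only happen because $h(t)/k$ crosses the threshold $1/\alpha$ strictly between the two sink counts $j$ and $j+p$. I would treat it using the exact value $\phi(j+p)-\phi(j)=\tfrac{h(j+p)}{k}-\rho\tfrac{h(j)}{k}-\sigma_{k}$, the identity $h(k)=k/\gamma$, and a split on the sign of $\alpha-p$: for $\alpha\ge p$ one compares against the $g_{0}$-inequality, for $\alpha<p$ against the $g_{\sigma_{k}}$-inequality, in each case bounding the remaining deficit with the help of the convexity and strict monotonicity of $h$. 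Closing this last estimate is the step I expect to require the most care.
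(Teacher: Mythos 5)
Your plan follows the same skeleton as the paper's proof: monotonicity from the inner maximum, then a split according to whether the greedy set $X$ is contained in $A$ or has the form $A\cup\{b_1,\dots,b_j\}$, with the decisive ingredient being the ``submodularity--ratio'' property of $h$; indeed your reduction of the per-branch inequality to $(2cj+1)(k-1)\ge p-1$ is a correct and equivalent rephrasing of the paper's inequality $|Y'|\bigl(h(i+1)-h(i)\bigr)\ge\gamma\bigl(h(i+|Y'|)-h(i)\bigr)$, which the paper derives from convexity of $h$. However, as written the proposal is not a proof. Two steps are left open: the case $X\subseteq A$ is only declared ``routine'' and never carried out, and, more importantly, the sub-case that you yourself identify as the heart of the second case --- the inner maximum being attained with all of $A$ at $X$ but with none of $A$ at $X\cup Y$ --- is not closed. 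You offer only a tentative plan (a split on the sign of $\alpha-p$, which does not obviously lead anywhere) and explicitly say the estimate still has to be found. Since this configuration-switching regime is exactly what makes the case $X=A\cup\{b_1,\dots,b_j\}$ nontrivial under your formulation, weak $\gamma$-$\alpha$-augmentability is not established by the argument you give.

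For the record, the missing estimate does follow from the tools you already assembled, with no case split: in your notation, with $\sigma:=\sum_{i=1}^{k}\xi_i$ and $\rho:=1-\alpha\sigma>0$, and in the sub-case $\phi(j)=g_{\sigma}(j)$, $\phi(j+p)=g_{0}(j+p)=h(j+p)/k$, one has $p\bigl(\phi(j+1)-\phi(j)\bigr)\ge p\bigl(g_{\sigma}(j+1)-g_{\sigma}(j)\bigr)\ge\gamma\rho\bigl(h(j+p)-h(j)\bigr)/k$, and the difference between this lower bound and $\gamma h(j+p)/k-\alpha g_{\sigma}(j)$ equals $\alpha\sigma\bigl(1-\gamma h(j+p)/k\bigr)+(\alpha-\gamma)\rho\,h(j)/k\ge0$, because $h(j+p)\le h(k)=k/\gamma$. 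It is worth noting that the paper avoids discussing which $X'$ attains the inner maximum and simply writes $F(X\cup Y)$ (and $F(X)$, $F(X\cup\{y\})$) as the value of the full configuration, so your attention to the attaining configuration is a legitimate refinement of the published argument --- but flagging the delicate step is not the same as resolving it, and until that estimate is supplied the proposal has a genuine gap.
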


\begin{proof}
	The monotonicity of $F_{\gamma,\alpha,k}$ immediately follows from
	the maximum in the definition. To prove weak $\gamma$-$\alpha$-augmentability,
	let $X\in\{\greedyset_{0},...,\greedyset_{2k}\}$ and $Y\subseteq U$.
	We define $Y':=Y\setminus X$. For better readability, we will write
	$F:=F_{\gamma,\alpha,k}$.
	
	First, consider the case that $X\subseteq A$. Then $F(X)=\sum_{i\in\oneto k:a_{i}\in X}\xi_{i}$
	because $h(0)=0$. Thus and because $h(1)=1$, for all $y\in Y'$,
	we have
	\[
	F(X\cup\{y\})-F(X)=\begin{cases}
		\xi_{i}, & \textrm{if }y=a_{i}\in(A\cap Y'),\\
		\frac{1}{k}(1-\alpha\sum_{i\in\oneto k:a_{i}\in X}\xi_{i}), & \textrm{if }y=b_{1},\\
		0, & \textrm{else},
	\end{cases}
	\]
	i.e.,
	\begin{eqnarray}
		& & |Y'|\bigl(\max_{y\in Y'}F(X\cup\{y\})-F(X)\bigr)\nonumber \\
		& \geq & \Bigl(\sum_{y\in A\cap Y'}\bigl(F(X\cup\{y\})-F(X)\bigr)\Bigr)+\nonumber \\
		&  & |B\cap Y'|\bigl(\max_{y\in B\cap Y'}F(X\cup\{y\})-F(X)\bigr)\nonumber \\
		& = & \Bigl(\sum_{\substack{i\in\oneto k:\\
				a_{i}\in A\cap Y'
			}
		}\xi_{i}\Bigr)+|\{b_{1}\}\cap Y'|\cdot|B\cap Y'|\frac{1}{k}\Bigl(1-\alpha\sum_{\substack{i\in\oneto k:\\
				a_{i}\in X
			}
		}\xi_{i}\Bigr).\label{eq:gam-alph-aug_of_F_first_part}
	\end{eqnarray}
	If $h(|\{b_{1}\}\cap Y'|\cdot|B\cap Y'|)\leq\frac{k}{\alpha}$, we
	use the fact that $F(X)=\sum_{i\in\oneto k:a_{i}\in X}\xi_{i}$ to
	calculate
	\begin{eqnarray}
		&  & \gamma F(X\cup Y)-\alpha F(X)\nonumber \\
		& \overset{B\cap X=\emptyset}{=} & \gamma\Bigl(\frac{h(|\{b_{1}\}\cap Y'|\cdot|B\cap Y'|)}{k}\Bigl(1-\alpha\sum_{\substack{i\in\oneto k:\\
				a_{i}\in X\cup(A\cap Y')
			}
		}\xi_{i}\Bigr)\nonumber \\
		& & +\sum_{\substack{i\in\oneto k:\\
				a_{i}\in X\cup(A\cap Y')
			}
		}\xi_{i}\Bigr)-\alpha\sum_{\substack{i\in\oneto k:\\
				a_{i}\in X
			}
		}\xi_{i}\nonumber \\
		& = & \Bigl[\frac{\gamma}{k}h(|\{b_{1}\}\cap Y'|\cdot|B\cap Y'|)\Bigl(1-\alpha\sum_{\substack{i\in\oneto k:\\
				a_{i}\in X
			}
		}\xi_{i}\Bigr)\Bigr]\nonumber \\
		&  & +\Bigl[\gamma\bigl(1-\frac{\alpha}{k}h(|\{b_{1}\}\cap Y'|\cdot|B\cap Y'|)\bigr)\Bigl(\sum_{\substack{i\in\oneto k:\\
				a_{i}\in A\cap Y'
			}
		}\xi_{i}\Bigr)\Bigr]+\Bigl[(\gamma-\alpha)\sum_{\substack{i\in\oneto k:\\
				a_{i}\in X
			}
		}\xi_{i}\Bigr]\nonumber \\
		& \leq & \Bigl[\frac{1}{k}|\{b_{1}\}\cap Y'|\cdot|B\cap Y'|\Bigl(1-\alpha\sum_{\substack{i\in\oneto k:\\
				a_{i}\in X
			}
		}\xi_{i}\Bigr)\Bigr]+\Bigl[\sum_{\substack{i\in\oneto k:\\
				a_{i}\in A\cap Y'
			}
		}\xi_{i}\Bigr]+[0].\label{eq:gam-alph-aug_of_F_second_part}
	\end{eqnarray}
	The first part of the last inequality follows from (\ref{eq:h(X)_smaller_divided_by_gamma}).
	The second part of the inequality follows from the fact that $\gamma\in(0,1]$
	and, for $x\geq0$, we have $\frac{\alpha}{k}h(x)\geq0$. The last
	part follows from the fact that $\gamma\leq\alpha$. Combining equations
	(\ref{eq:gam-alph-aug_of_F_first_part}) and (\ref{eq:gam-alph-aug_of_F_second_part})
	together with the fact that $Y'\subseteq Y$ yields weak $\gamma$-$\alpha$-augmentability.
	
	Otherwise, if $h(|\{b_{1}\}\cap Y'|\cdot|B\cap Y'|)>\frac{k}{\alpha}$,
	we have
	\begin{eqnarray*}
		\gamma F(X\cup Y)-\alpha F(X) & = & \gamma\frac{h(|\{b_{1}\}\cap Y'|\cdot|B\cap Y'|)}{k}-\alpha\sum_{\substack{i\in\oneto k:\\
				a_{i}\in X
			}
		}\xi_{i}\\
		& \overset{\eqref{eq:h(X)_smaller_divided_by_gamma}}{\leq} & \frac{1}{k}|\{b_{1}\}\cap Y'|\cdot|B\cap Y'|-\alpha\sum_{\substack{i\in\oneto k:\\
				a_{i}\in X
			}
		}\xi_{i}\\
		& \overset{|B|=k}{\leq} & \frac{1}{k}|\{b_{1}\}\cap Y'|\cdot|B\cap Y'|\Bigl(1-\alpha\sum_{\substack{i\in\oneto k:\\
				a_{i}\in X
			}
		}\xi_{i}\Bigr)\\
		& \overset{\eqref{eq:gam-alph-aug_of_F_first_part}}{\leq} & |Y'|\bigl(\max_{y\in Y'}F(X\cup\{y\})-F(X)\bigr),
	\end{eqnarray*}
	which yields $\gamma$-$\alpha$-augmentability also in this case.
	
	Now, consider the case that $X\nsubseteq A$. Then, by Proposition
	\ref{prop:greedy_pickorder_F}, we have\linebreak \mbox{$X=A\cup\{b_{1},...,b_{i}\}$}
	for some~$i\in\oneto k$. If $i=k$, i.e., $X=U$, we have\linebreak \mbox{$\gamma F(X\cup Y)-\alpha F(X)=(\gamma-\alpha)F(U)\leq0$}
	and we are done. Thus, assume that $i<k$. Because $h$ is convex,
	we have
	\begin{equation}
		\frac{h(i+|Y'|)-h(i)}{|Y'|}\leq\frac{h(|B|)-h(|B|-|Y'|)}{|B|-i}\overset{|Y'|\leq|B|-i}{\leq}\frac{h(|B|)-h(i)}{|B|-i}.\label{eq:h_estimate_part_1}
	\end{equation}
	With
	\[
	H(i):=(k-i)\frac{h(i+1)-h(i)}{h(k)-h(i)},
	\]
	we have
	\[
	H'(i)=(k-1)\frac{2-3\gamma+\gamma^{2}}{(k-1+i-\gamma i)^{2}}\geq0,
	\]
	which yields
	\begin{equation}
		H(i)\geq H(0)=k\frac{1-0}{\frac{k}{\gamma}-0}=\gamma.\label{eq:h_estimate_part_2}
	\end{equation}
	Combining this with (\ref{eq:h_estimate_part_1}), we obtain
	\begin{equation}
		\frac{|Y'|\bigl(h(i+1)-h(i)\bigr)}{h(i+|Y'|)-h(i)}\overset{\eqref{eq:h_estimate_part_1}}{\geq}\frac{(|B|-i)\bigl(h(i+1)-h(i)\bigr)}{h(|B|)-h(i)}\overset{|B|=k}{=}H(i)\overset{\eqref{eq:h_estimate_part_2}}{\geq}\gamma.\label{eq:h_submodularity_ratio}
	\end{equation}
	If $h(i+|Y'|)\leq\frac{k}{\alpha}$, because $h$ is increasing for
	positive values, we have\linebreak \mbox{$h(i)\leq h(i+1)\leq h(i+|Y'|)\leq\frac{k}{\alpha}$}.
	Thus, for every $y\in Y'$, we have
	\begin{eqnarray}
		|Y|\bigl(F(X\cup\{y\})-F(X)\bigr) & \geq & |Y'|\frac{h(i+1)-h(i)}{k}\Bigl(1-\alpha\sum_{j=1}^{k}\xi_{j}\Bigr)\nonumber \\
		& \overset{\eqref{eq:h_submodularity_ratio}}{\geq} & \frac{\gamma\bigl(h(i+|Y'|)-h(i)\bigr)}{k}\Bigl(1-\alpha\sum_{j=1}^{k}\xi_{j}\Bigr)\nonumber \\
		& \overset{\gamma\leq\alpha}{\geq} & \Bigl(\gamma\frac{h(i+|Y'|)}{k}-\alpha\frac{h(i)}{k}\Bigr)\Bigl(1-\alpha\sum_{j=1}^{k}\xi_{j}\Bigr)\label{eq:F_gam-alph-aug_intermediate_step}\\
		& \overset{\gamma\leq\alpha}{\geq} & \gamma\Bigl[\frac{h(i+|Y'|)}{k}\Bigl(1-\alpha\sum_{j=1}^{k}\xi_{j}\Bigr)+\sum_{j=1}^{k}\xi_{j}\Bigr]\nonumber \\
		&  & -\alpha\Bigl[\frac{h(i)}{k}\Bigl(1-\alpha\sum_{j=1}^{k}\xi_{j}\Bigr)+\sum_{j=1}^{k}\xi_{j}\Bigr]\nonumber \\
		& = & \gamma F(X\cup Y)-\alpha F(X).\nonumber 
	\end{eqnarray}
	If $h(i)\leq h(i+1)\leq\frac{k}{\alpha}<h(i+|Y'|)$, then, for every
	$y\in Y'$, we have
	\begin{eqnarray}
		&  & |Y|\bigl(F(X\cup\{y\})-F(X)\bigr)\nonumber \\
		& \overset{\eqref{eq:F_gam-alph-aug_intermediate_step}}{\geq} & \Bigl(\gamma\frac{h(i+|Y'|)}{k}-\alpha\frac{h(i)}{k}\Bigr)\Bigl(1-\alpha\sum_{j=1}^{k}\xi_{j}\Bigr)\nonumber \\
		& = & \gamma\frac{h(i+|Y'|)}{k}-\alpha\frac{h(i)}{k}\Bigl(1-\alpha\sum_{j=1}^{k}\xi_{j}\Bigr)-\frac{\gamma}{k}h(i+|Y'|)\alpha\sum_{j=1}^{k}\xi_{j}\nonumber \\
		& \overset{h(i+|Y'|)\leq h(k)=\frac{k}{\gamma}}{\geq} & \gamma\frac{h(i+|Y'|)}{k}-\alpha\Bigl[\frac{h(i)}{k}\Bigl(1-\alpha\sum_{j=1}^{k}\xi_{j}\Bigr)+\sum_{j=1}^{k}\xi_{j}\Bigr]\nonumber \\
		& = & \gamma F(X\cup Y)-\alpha F(X).\label{eq:F_gam-alph-aug_case2}
	\end{eqnarray}
	If $h(i)\leq\frac{k}{\alpha}<h(i+1)\leq h(i+|Y'|)$, then
	\begin{equation}
		\frac{\alpha}{k}h(i+1)>\frac{\alpha}{k}\cdot\frac{k}{\alpha}=1,\label{eq:F_gam-alph-aug_case3_estimate}
	\end{equation}
	which implies that, for every $y\in Y'$, we have
	\begin{equation}
		F(X\cup\{y\})=\frac{h(i+1)}{k}\overset{\eqref{eq:F_gam-alph-aug_case3_estimate}}{>}\frac{h(i+1)}{k}\Bigl(1-\alpha\sum_{j=1}^{k}\xi_{j}\Bigr)+\sum_{j=1}^{k}\xi_{j}.\label{eq:F_gam-alph-aug_case3}
	\end{equation}
	This implies
	\begin{eqnarray*}
		|Y|\bigl(F(X\cup\{y\})-F(X)\bigr) & \overset{\eqref{eq:F_gam-alph-aug_case3}}{\geq} & |Y'|\frac{h(i+1)-h(i)}{k}\Bigl(1-\alpha\sum_{j=1}^{k}\xi_{j}\Bigr)\\
		& \overset{\eqref{eq:F_gam-alph-aug_case2}}{\geq} & \gamma F(X\cup Y)-\alpha F(X).
	\end{eqnarray*}
	If $\frac{k}{\alpha}<h(i)\leq h(i+1)\leq h(i+|Y'|)$, then, for every
	$y\in Y'$, we have
	\begin{eqnarray*}
		|Y|\bigl(F(X\cup\{y\})-F(X)\bigr) & = & |Y'|\frac{h(i+1)-h(i)}{k}\\
		& \overset{\eqref{eq:h_submodularity_ratio}}{\geq} & \frac{\gamma\bigl(h(i+|Y'|)-h(i)\bigr)}{k}\\
		& \geq & \gamma\frac{h(i+|Y'|)}{k}-\alpha\frac{h(i)}{k}\\
		& = & \gamma F(X\cup Y)-\alpha F(X),
	\end{eqnarray*}
	i.e., also in all of these cases, $F$ is $\gamma$-$\alpha$-augmentable.
\end{proof}

It is straightforward to bound the approximation ratio of the greedy
algorithm for $F_{\gamma,\alpha,k}$.
\begin{prop}
	\label{prop:approx-ratio_of_F}The approximation ratio of the greedy
	algorithm for maximizing the function $F_{\gamma,\alpha,k}$, with
	$\gamma\in(0,1]$, $\alpha\geq\gamma$ and $k\in\N$ with $k>\alpha$,
	is at least
	\[
	\frac{\alpha}{\gamma}\frac{1}{1-(1-\frac{\alpha}{k})^{k}}.
	\]
\end{prop}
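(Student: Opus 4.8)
The plan is to compare, for the cardinality $k$, the greedy solution produced for $F_{\gamma,\alpha,k}$ against a single explicit alternative; the ratio of their objective values already matches the claimed bound. First I would invoke Proposition~\ref{prop:greedy_pickorder_F}: in the first $k$ iterations the greedy algorithm picks $a_1,\dots,a_k$, so $\greedyset_k = A$. Evaluating $F_{\gamma,\alpha,k}$ at $X=A$ is immediate, since every subset $X'\subseteq A$ has $B\cap X'=\emptyset$ and hence $h\bigl(|\{b_1\}\cap X'|\cdot|B\cap X'|\bigr)=h(0)=0$; the objective reduces to $\sum_{i:a_i\in X'}\xi_i$, which, all $\xi_i$ being positive, is maximised at $X'=A$. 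Therefore $F_{\gamma,\alpha,k}(\greedyset_k)=\sum_{i=1}^{k}\xi_i$, and by~\eqref{eq:sum_xi_1_to_ell} with $\ell=k$ this equals $\bigl(1-(1-\tfrac{\alpha}{k})^{k}\bigr)/\alpha$.

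For the alternative solution I would take $X=B=\{b_1,\dots,b_k\}$, which is feasible because $|B|=k$. For $X'\subseteq B$ we have $A\cap X'=\emptyset$, so the $\xi_i$-terms vanish and the objective is $\tfrac{1}{k}h\bigl(|\{b_1\}\cap X'|\cdot|B\cap X'|\bigr)$; this is $0$ unless $b_1\in X'$, in which case the argument of $h$ is $|B\cap X'|\in\{1,\dots,k\}$. Since $h$ is convex with $h(1)=1$ and $h(k)=k/\gamma\ge 1$, its maximum over that range is $h(k)=k/\gamma$, attained at $X'=B$. Hence $F_{\gamma,\alpha,k}(B)=\tfrac{h(k)}{k}=\tfrac{1}{\gamma}$, and since $\optset_k$ is an optimal solution of cardinality at most $k$ we get $F_{\gamma,\alpha,k}(\optset_k)\ge 1/\gamma$.

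Combining the two evaluations gives
\[
\frac{F_{\gamma,\alpha,k}(\optset_k)}{F_{\gamma,\alpha,k}(\greedyset_k)}\;\ge\;\frac{1/\gamma}{\bigl(1-(1-\alpha/k)^{k}\bigr)/\alpha}\;=\;\frac{\alpha}{\gamma}\cdot\frac{1}{1-(1-\alpha/k)^{k}},
\]
as required. I do not anticipate any real obstacle here: the argument is nothing more than two evaluations of $F_{\gamma,\alpha,k}$ together with the closed form~\eqref{eq:sum_xi_1_to_ell}, and the only points that need a sentence of justification are the identification $\greedyset_k=A$ (supplied by Proposition~\ref{prop:greedy_pickorder_F}) and the fact that, in each of the two cases, the inner maximisation over $X'$ in the definition of $F_{\gamma,\alpha,k}$ is attained at the obvious set — the first because the $\xi_i$ are positive, the second because $h$ is convex.
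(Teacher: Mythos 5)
Your proposal is correct and follows essentially the same route as the paper: identify $\greedyset_k=A$ via Proposition~\ref{prop:greedy_pickorder_F}, compute $F_{\gamma,\alpha,k}(A)=\bigl(1-(1-\alpha/k)^k\bigr)/\alpha$ from~\eqref{eq:sum_xi_1_to_ell}, compute $F_{\gamma,\alpha,k}(B)=h(k)/k=1/\gamma$, and bound $F(\optset_k)\geq F(B)$. The only difference is that you spell out why the inner maximization over $X'$ is attained at $A$ resp.\ $B$, which the paper leaves implicit; this is a harmless (and correct) addition.
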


\begin{proof}
	We compare the objective values of the greedy solution $\greedyset_{k}$
	of size $k$ and the solution $B$, which also has size $k$. By Proposition
	\ref{prop:greedy_pickorder_F}, we have $\greedyset_{k}=A$, and thus
	\[
	F(\greedyset_{k})=F(A)=\sum_{i=1}^{k}\xi_{i}\overset{\eqref{eq:sum_xi_1_to_ell}}{=}\frac{1-\bigl(\frac{k-\alpha}{k}\bigr)^{k}}{\alpha}
	\]
	and
	\[
	F(B)=\frac{h(k)}{k}=\frac{\frac{k}{\gamma}}{k}=\frac{1}{\gamma}.
	\]
	Thus, the greedy algorithm has an approximation ratio of at least\setlength{\belowdisplayskip}{-12pt}
	\[
	\frac{F(\optset_{k})}{F(\greedyset_{k})}\geq\frac{F(B)}{F(\greedyset_{k})}=\frac{\alpha}{\gamma}\cdot\frac{1}{1-\bigl(\frac{k-\alpha}{k}\bigr)^{k}}.
	\]
\end{proof}
\begin{thm}
	The approximation ratio of the greedy algorithm on the class $\gafuncs_{\gamma,\alpha}$
	of monotone, weakly $\gamma$-$\alpha$-augmentable functions, with
	$\gamma\in(0,1]$ and $\alpha\geq\gamma$, is at least\label{thm:general_lower_bound}
	\[
	\frac{\alpha}{\gamma}\cdot\frac{\mathrm{e}^{\alpha}}{\mathrm{e}^{\alpha}-1}.
	\]
\end{thm}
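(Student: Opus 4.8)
The plan is to instantiate the family of functions $F_{\gamma,\alpha,k}$ constructed in Section~\ref{subsec:lower_bounds} and to let $k$ tend to infinity. Concretely, for each $k\in\N$ with $k>\alpha$, Lemma~\ref{lem:F_monotone_and_gam-alph-aug} guarantees that $F_{\gamma,\alpha,k}$ is a monotone, weakly $\gamma$-$\alpha$-augmentable function, so $F_{\gamma,\alpha,k}\in\gafuncs_{\gamma,\alpha}$, and Proposition~\ref{prop:approx-ratio_of_F} shows that the greedy algorithm has approximation ratio at least $\frac{\alpha}{\gamma}\cdot\frac{1}{1-(1-\frac{\alpha}{k})^{k}}$ when maximizing $F_{\gamma,\alpha,k}$ under the cardinality constraint~$k$ (in particular, $f(\greedyset_{k})=F_{\gamma,\alpha,k}(A)>0$ since $0<1-\frac{\alpha}{k}<1$). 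Because the approximation ratio of the greedy algorithm on the class $\gafuncs_{\gamma,\alpha}$ is, by definition, the supremum of $f(\optset_{f,k'})/f(X_{f,k'})$ over all functions $f$ in the class and all cardinalities, it is bounded below by the supremum of these quantities over all admissible~$k$.

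It then remains to evaluate $\sup_{k>\alpha}\frac{\alpha}{\gamma}\cdot\frac{1}{1-(1-\frac{\alpha}{k})^{k}}$. For this I would invoke the standard limit $\lim_{k\to\infty}(1-\frac{\alpha}{k})^{k}=\e^{-\alpha}$, which yields $\lim_{k\to\infty}\frac{\alpha}{\gamma}\cdot\frac{1}{1-(1-\frac{\alpha}{k})^{k}}=\frac{\alpha}{\gamma}\cdot\frac{1}{1-\e^{-\alpha}}=\frac{\alpha}{\gamma}\cdot\frac{\e^{\alpha}}{\e^{\alpha}-1}$. Since the supremum of a real sequence is at least its limit, this already gives the claimed lower bound; if one wishes to be tidy, one can additionally note that $k\mapsto(1-\frac{\alpha}{k})^{k}$ is increasing on $(\alpha,\infty)$ (equivalently, $t\mapsto\frac{\ln(1-t)}{t}$ is decreasing on $(0,1)$), so the supremum actually equals this limit and is approached monotonically from below.

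I do not expect a real obstacle: the substantive work---constructing $F_{\gamma,\alpha,k}$, verifying membership in $\gafuncs_{\gamma,\alpha}$, and computing $F_{\gamma,\alpha,k}(\greedyset_{k})=F_{\gamma,\alpha,k}(A)$ and $F_{\gamma,\alpha,k}(B)$---has already been done in Lemma~\ref{lem:F_monotone_and_gam-alph-aug} and Proposition~\ref{prop:approx-ratio_of_F}. The only point needing a moment's care is conceptual rather than computational: the value $\frac{\alpha}{\gamma}\cdot\frac{\e^{\alpha}}{\e^{\alpha}-1}$ is not attained by any single instance but only in the limit $k\to\infty$, so one must rely on the approximation ratio being a supremum over the entire (ground-set-unbounded) class rather than a maximum. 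Combined with the matching upper bound of Theorem~\ref{thm:gam-alph-aug_upper_bound}, this establishes the tight bound asserted in Theorem~\ref{thm:tight_bound}.
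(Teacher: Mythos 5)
Your proposal is correct and is essentially the paper's own proof: it likewise combines Lemma~\ref{lem:F_monotone_and_gam-alph-aug} and Proposition~\ref{prop:approx-ratio_of_F} and lets $k\to\infty$ using $\lim_{k\to\infty}(1-\frac{\alpha}{k})^{k}=\e^{-\alpha}$. Your extra remarks (the bound being a supremum over the class rather than attained by a single instance, and the monotone approach of $(1-\frac{\alpha}{k})^{k}$) are fine but not needed beyond what the paper states.
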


\begin{proof}
	By Lemma~\ref{lem:F_monotone_and_gam-alph-aug}, $F_{\gamma,\alpha,k}\in\gafuncs_{\gamma,\alpha}$,
	and, by Proposition~\ref{prop:approx-ratio_of_F}, the greedy algorithm
	has an approximation ratio of at least $\frac{\alpha}{\gamma}\frac{1}{1-(1-\frac{\alpha}{k})^{k}}$
	for maximizing $F_{\gamma,\alpha,k}$. The general lower bound follows,
	since\setlength{\belowdisplayskip}{-12pt}
	\[
	\lim_{k\rightarrow\infty}\frac{1}{1-\bigl(\frac{k-\alpha}{k}\bigr)^{k}}=\frac{1}{1-\e^{-\alpha}}=\frac{\e^{\alpha}}{\e^{\alpha}-1}.
	\]
\end{proof}

It even turns out that, for $\gamma=1$, the function $F_{\gamma,\alpha,k}$
is $\alpha$-augmentable. This allows to carry the lower bound over
to the class $\augfuncs_{\alpha}$.
\begin{prop}
	\label{prop:F_1,alph,k_is_alph-aug}For every $\alpha\geq1$, and
	every $k\in\N$ with $k\geq\alpha$, it holds that~$F_{1,\alpha,k}\in\augfuncs_{\alpha}$.
\end{prop}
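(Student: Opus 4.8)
The plan is to bootstrap from Lemma~\ref{lem:F_monotone_and_gam-alph-aug}. Setting $\gamma=1$ in the definition of $h$ makes $h$ the identity, so that the convexity bound~\eqref{eq:h_submodularity_ratio} becomes an equality and weak $\alpha$-augmentability of $F:=F_{1,\alpha,k}$ is already in hand for $k>\alpha$ (the boundary case $k=\alpha$, where $F$ degenerates and is $\alpha$-augmentable by direct inspection, can be set aside). The only gap between $\gafuncs_{1,\alpha}$ and $\augfuncs_{\alpha}$ is that the augmentation inequality must now hold for \emph{every} $X\subseteq\U$, not just for greedy sets. But the first case of the proof of Lemma~\ref{lem:F_monotone_and_gam-alph-aug} ($X\subseteq A$) never uses that $X$ is a greedy set -- it uses only $F(X)=\sum_{a_i\in A\cap X}\xi_i$, which holds for all $X\subseteq A$ -- so it remains to treat arbitrary $X$ with $X\cap B\neq\emptyset$; this is exactly where Proposition~\ref{prop:greedy_pickorder_F} entered in the second case.

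For such $X$, write $\sigma(X):=\sum_{a_i\in A\cap X}\xi_i$, $m:=|B\cap X|$, $\rho_j:=\max\{1-\tfrac{\alpha j}{k},0\}$, and $(z)^{+}:=\max\{z,0\}$. By~\eqref{eq:sum_xi_1_to_ell} we have $\alpha\sigma(X)\le\alpha\sum_{i=1}^{k}\xi_i=1-(\tfrac{k-\alpha}{k})^{k}<1$, so the coefficient $1-\alpha\sigma(X')$ appearing in the definition of $F$ is positive; a short computation then gives the closed form $F(X)=\sigma(X)$ if $b_1\notin X$, and $F(X)=\tfrac{m}{k}+\sigma(X)(1-\tfrac{\alpha m}{k})^{+}$ if $b_1\in X$ (the latter always dominating $\sigma(X)$). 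With this in hand I would split into the cases (A) $b_1\in X$ and (B) $b_1\notin X$ but $X\cap B\neq\emptyset$. In each, setting $Y':=Y\setminus X$, I would partition $Y'$ into its $A$- and $B$-parts, bound $|Y'|\max_{y\in Y'}(F(X\cup\{y\})-F(X))$ below by $\sum_{a_i\in A\cap Y'}(F(X\cup\{a_i\})-F(X))+|B\cap Y'|\max_{y\in B\cap Y'}(F(X\cup\{y\})-F(X))$ exactly as in~\eqref{eq:gam-alph-aug_of_F_first_part}, read off the marginals from the closed form (adding $a_i$ contributes $\xi_i\rho_m$ when $b_1\in X$ and $\xi_i$ otherwise; adding $b_j\neq b_1$ contributes $\tfrac1k+\sigma(X)(\rho_{m+1}-\rho_m)$ when $b_1\in X$ and $0$ otherwise; adding $b_1$ when $b_1\notin X$ contributes $\tfrac{m+1}{k}-\sigma(X)(1-\rho_{m+1})$), bound $F(X\cup Y)$ from above via the closed form using $\sigma(X\cup Y')=\sigma(X)+\sum_{a_i\in A\cap Y'}\xi_i$ and $|B\cap(X\cup Y')|=m+|B\cap Y'|$, and reduce the claim to an elementary estimate -- the analogue of~\eqref{eq:gam-alph-aug_of_F_second_part}.

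I expect the main obstacle to be the bookkeeping once the truncation $(\cdot)^{+}$ is active, which happens only for $\alpha>1$; for $\alpha=1$ one always has $1-\tfrac{\alpha m}{k}\ge 0$ and the calculation closes cleanly. In case (A), after cancellation the inequality reduces to
\[
\frac{m}{k}(\alpha-1)+\sigma(X)\bigl(p(\rho_{m+1}-\rho_m)+\alpha\rho_m-\rho_{m+p}\bigr)\ge 0,\qquad p:=|B\cap Y'|,
\]
which is immediate when the bracket is nonnegative; when it is negative one closes the gap using $\alpha\sigma(X)<1$ and $p\le|B\setminus X|=k-m$, together with convexity of $j\mapsto\rho_j$ and the bound $\rho_{\lceil k/\alpha\rceil-1}\le\tfrac{\alpha}{k}$, distinguishing the regimes $m<k/\alpha$ and $m\ge k/\alpha$ (the latter being trivial since then $\rho_m=0$). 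Case (B) is analogous, the only new ingredient being that $b_1\in Y'$ can ``unlock'' the $B$-term, so one balances the marginal of $b_1$ against the contribution otherwise attributed to the $a_i$'s, again separating $m+1\le k/\alpha$ from $m+1>k/\alpha$; the sub-case $b_1\notin Y'$ is trivial because then $F(X\cup Y)=\sigma(X\cup Y')$. Throughout, the strict inequality $\alpha\sum_{i=1}^{k}\xi_i<1$ is what keeps the estimates tight.
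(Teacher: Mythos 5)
Your proposal is correct, but it takes a genuinely more explicit route than the paper. The paper's proof is a single algebraic chain on the defining expression of $F_{1,\alpha,k}$: it lower-bounds the single-element gains as in \eqref{eq:F_alph-aug_adding_one_element} and then expands $F(X\cup Y)-\alpha F(X)$ into three bracketed terms, with no case distinction on $X$ and no truncation bookkeeping -- in effect it evaluates the inner maximum in the definition of $F$ at the full subsets $X\cup Y'$ and $X$. You instead first derive the closed form $F(X)=\sigma(X)$ for $b_1\notin X$ and $F(X)=\tfrac{m}{k}+\sigma(X)\rho_m$ for $b_1\in X$ (in your notation $\sigma(X)=\sum_{a_i\in A\cap X}\xi_i$, $m=|B\cap X|$, $\rho_j=(1-\alpha j/k)^{+}$), read off exact marginals, and verify the augmentation inequality by elementary case analysis on where the truncation is active. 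I checked your case-(A) reduction: the exact difference is your displayed quantity plus the nonnegative term $\bigl(\sum_{a_i\in A\cap Y'}\xi_i\bigr)(\rho_m-\rho_{m+p})$, and your display does close with precisely the tools you name ($\alpha\sigma(X)\le1$, $p\le k-m$, the piecewise-linear structure of $\rho$, and $\rho_m<\alpha/k$ when $m+1>k/\alpha$); the case-(B) estimates close the same way. Your extra care is not wasted: for $\alpha>1$ and $|B\cap(X\cup Y')|>k/\alpha$ the inner maximum is \emph{not} attained at the full subset (dropping the $a_i$'s is strictly better), so the paper's displayed equality for $F(X\cup Y)-\alpha F(X)$ is really only an evaluation of a lower bound on $F(X\cup Y)$ in that regime, and your $(\cdot)^{+}$ bookkeeping is exactly what makes the argument airtight there; for $\alpha=1$ the two treatments coincide. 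Two small points to finish the write-up: rather than citing case 1 of Lemma~\ref{lem:F_monotone_and_gam-alph-aug} for $X\subseteq A$ (whose display has the same untruncated evaluation of $F(X\cup Y)$), note that $X\subseteq A$ is just your case (B) with $m=0$, so your own framework covers it; and the boundary case $k=\alpha$ you set aside is indeed a quick direct check, since there $F(X)=\max\{\tfrac1k\mathbf{1}[a_1\in X],\,\tfrac{|B\cap X|}{k}\mathbf{1}[b_1\in X]\}$.
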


\begin{proof}
	By Lemma \ref{lem:F_monotone_and_gam-alph-aug}, $F_{1,\alpha,k}$
	is monotone. Thus, it suffices to prove that the function is $\alpha$-augmentability.
	For better readability, we write $F:=F_{1,\alpha,k}$. Observe that,
	if $\gamma=1$, we have $h(x)=x$ for all $x\in\R$. Let $X,Y\subseteq U$
	and $Y':=Y\setminus X$.
	
	If $|\{b_{1}\}\cap(X\cup Y)|\cdot|B\cap(X\cup Y)|\leq\frac{k}{\alpha}$,
	for $y\in Y'$, we have
	\begin{eqnarray}
		&  & F(X\cup\{y\})-F(X)\nonumber \\
		& = & \begin{cases}
			\bigl(1-\frac{|\{b_{1}\}\cap X|\cdot|B\cap X|}{k}\alpha\bigr)\xi_{i}, & \textrm{if }y=a_{i}\in A\cap Y',\\
			\frac{|\{b_{1}\}\cap(X\cup\{y\})|\cdot|B\cap(X\cup\{y\})|-|\{b_{1}\}\cap X|\cdot|B\cap X|}{k}&\\
			\quad\quad \cdot\bigl(1-\alpha\sum_{i\in\oneto k:a_{i}\in A\cap X}\xi_{i}\bigr), & \textrm{if }y\in B\cap Y'.
		\end{cases}\label{eq:F_alph_aug_adding_one_element_case1}
	\end{eqnarray}
	This yields
	
	\begin{eqnarray*}
		&  & F(X\cup Y)-\alpha F(X)\\
		& = & \Bigl(\frac{|\{b_{1}\}\cap(X\cup Y')|\cdot|B\cap(X\cup Y')|}{k}\Bigl(1-\alpha\sum_{\substack{i\in\oneto k:\\
				a_{i}\in A\cap(X\cup Y')
			}
		}\xi_{i}\Bigr)+\sum_{\substack{i\in\oneto k:\\
				a_{i}\in A\cap(X\cup Y')
			}
		}\xi_{i}\Bigr)\\
		&  & -\alpha\Bigl(\frac{|\{b_{1}\}\cap X|\cdot|B\cap X|}{k}\Bigl(1-\alpha\sum_{\substack{i\in\oneto k:\\
				a_{i}\in A\cap X
			}
		}\xi_{i}\Bigr)+\sum_{\substack{i\in\oneto k:\\
				a_{i}\in A\cap X
			}
		}\xi_{i}\Bigr)\\
		& = & \Bigl[\frac{|\{b_{1}\}\cap(X\cup Y')|\cdot|B\cap(X\cup Y')|-\alpha|\{b_{1}\}\cap X|\cdot|B\cap X|}{k}\Bigl(1-\alpha\sum_{\substack{i\in\oneto k:\\
				a_{i}\in A\cap X
			}
		}\xi_{i}\Bigr)\Bigr]\\
		&  & +\Bigl[\Bigl(1-\frac{|\{b_{1}\}\cap(X\cup Y')|\cdot|B\cap(X\cup Y')|}{k}\alpha\Bigr)\sum_{\substack{i\in\oneto k:\\
				a_{i}\in A\cap Y'
			}
		}\xi_{i}\Bigr]+\Bigl[(1-\alpha)\sum_{\substack{i\in\oneto k:\\
				a_{i}\in A\cap X
			}
		}\xi_{i}\Bigr]\\
		& \leq & \Bigl[|B\cap Y'|\max_{y\in B\cap Y'}\Bigl\{\frac{|\{b_{1}\}\cap(X\cup\{y\})|\cdot|B\cap(X\cup\{y\})|-|\{b_{1}\}\cap X|\cdot|B\cap X|}{k}\\
		&  & \bigl(1-\alpha\sum_{\substack{i\in\oneto k:\\
				a_{i}\in A\cap X
			}
		}\xi_{i}\bigr)\Bigr\}\Bigr]+\Bigl[\Bigl(1-\frac{|\{b_{1}\}\cap X|\cdot|B\cap X|}{k}\alpha\Bigr)\sum_{\substack{i\in\oneto k:\\
				a_{i}\in A\cap Y'
			}
		}\xi_{i}\Bigr]+[0]\\
		& \overset{\eqref{eq:F_alph_aug_adding_one_element_case1}}{=} & |B\cap Y'|\bigl(\max_{y\in B\cap Y'}F(X\cup\{y\})-F(X)\bigr)+\sum_{y\in A\cap Y'}(F(X\cup\{y\})-F(X))\\
		& \leq & |Y|\bigl(\max_{y\in Y}F(X\cup\{y\})-F(X)\bigr).
	\end{eqnarray*}
	This establishes $\alpha$-augmentability if $|\{b_{1}\}\cap(X\cup Y)|\cdot|B\cap(X\cup Y)|\leq\frac{k}{\alpha}$.
	
	Consider the case that $|\{b_{1}\}\cap X|\cdot|B\cap X|\leq\frac{k}{\alpha}<|\{b_{1}\}\cap(X\cup Y)|\cdot|B\cap(X\cup Y)|$.
	If, for $y\in B\cap Y'$, we have $|\{b_{1}\}\cap(X\cup\{y\})|\cdot|B\cap(X\cup\{y\})|\leq\frac{k}{\alpha}$,
	then
	\begin{eqnarray*}
		&  & F(X\cup\{y\})-F(X)\\
		& \overset{\eqref{eq:F_alph_aug_adding_one_element_case1}}{=} & \frac{|\{b_{1}\}\cap(X\cup\{y\})|\cdot|B\cap(X\cup\{y\})|-|\{b_{1}\}\cap X|\cdot|B\cap X|}{k}\bigl(1-\alpha\sum_{i\in\oneto k:a_{i}\in A\cap X}\xi_{i}\bigr)
	\end{eqnarray*}
	and if
	\begin{equation}
		|\{b_{1}\}\cap(X\cup\{y\})|\cdot|B\cap(X\cup\{y\})|>\frac{k}{\alpha}\label{eq:F_alph-aug_adding_one_element_case2_estimate}
	\end{equation}
	for $y\in B\cap Y'$, we have
	\begin{eqnarray*}
		&  & F(X\cup\{y\})-F(X)\\
		& = & \frac{|B\cap(X\cup\{y\})|}{k}-\frac{|\{b_{1}\}\cap X|\cdot|B\cap X|}{k}\bigl(1-\alpha\sum_{i\in\oneto k:a_{i}\in A\cap X}\xi_{i}\bigr)-\sum_{i\in\oneto k:a_{i}\in A\cap X}\xi_{i}\\
		& \overset{\eqref{eq:F_alph-aug_adding_one_element_case2_estimate}}{\geq} & \frac{|\{b_{1}\}\cap(X\cup\{y\})|\cdot|B\cap(X\cup\{y\})|-|\{b_{1}\}\cap X|\cdot|B\cap X|}{k}\bigl(1-\alpha\sum_{i\in\oneto k:a_{i}\in A\cap X}\xi_{i}\bigr).
	\end{eqnarray*}
	This means that, in either case, for $y\in B\cap Y'$, we have
	\begin{eqnarray}
		&  & F(X\cup\{y\})-F(X)\nonumber \\
		& \geq & \frac{|\{b_{1}\}\cap(X\cup\{y\})|\cdot|B\cap(X\cup\{y\})|-|\{b_{1}\}\cap X|\cdot|B\cap X|}{k}\label{eq:F_alph-aug_adding_one_element_case2} \\
		& & \quad\quad \cdot\bigl(1-\alpha\sum_{\substack{i\in\oneto k:\\
				a_{i}\in A\cap X
			}
		}\xi_{i}\bigr).\nonumber
	\end{eqnarray}
	Since we consider $|\{b_{1}\}\cap X|\cdot|B\cap X|\leq\frac{k}{\alpha}<|\{b_{1}\}\cap(X\cup Y)|\cdot|B\cap(X\cup Y)|$,
	we have
	\begin{equation}
		Y'\cap B\neq\emptyset.\label{eq:F_alph-aug_adding_one_element_case3_int-step}
	\end{equation}
	This yields
	
	\begin{eqnarray*}
		&  & F(X\cup Y)-\alpha F(X)\\
		& = & \frac{|B\cap(X\cup Y')|}{k}-\alpha\Bigl(\frac{|\{b_{1}\}\cap X|\cdot|B\cap X|}{k}\Bigl(1-\alpha\sum_{\substack{i\in\oneto k:\\
				a_{i}\in A\cap X
			}
		}\xi_{i}\Bigr)+\sum_{\substack{i\in\oneto k:\\
				a_{i}\in A\cap X
			}
		}\xi_{i}\Bigr)\\
		& = & \Bigl[\frac{|B\cap(X\cup Y')|}{k}-\alpha\sum_{\substack{i\in\oneto k:\\
				a_{i}\in A\cap X
			}
		}\xi_{i}\Bigr]-\Bigl[\alpha\frac{|\{b_{1}\}\cap X|\cdot|B\cap X|}{k}\Bigl(1-\alpha\sum_{\substack{i\in\oneto k:\\
				a_{i}\in A\cap X
			}
		}\xi_{i}\Bigr)\Bigr]\\
		& \overset{\alpha\geq1}{\leq} & \Bigl[\frac{|B\cap(X\cup Y')|}{k}-\alpha\sum_{\substack{i\in\oneto k:\\
				a_{i}\in A\cap X
			}
		}\xi_{i}\Bigr]-\Bigl[\frac{|\{b_{1}\}\cap X|\cdot|B\cap X|}{k}\Bigl(1-\alpha\sum_{\substack{i\in\oneto k:\\
				a_{i}\in A\cap X
			}
		}\xi_{i}\Bigr)\Bigr]\\
		& = & \Bigl[\frac{|B\cap(X\cup Y')|}{k}-\alpha\sum_{\substack{i\in\oneto k:\\
				a_{i}\in A\cap X
			}
		}\xi_{i}\Bigr]\\
		&  & -\Bigl[(|B\cap Y'|-(|B\cap Y'|-1))\frac{|\{b_{1}\}\cap X|\cdot|B\cap X|}{k}\Bigl(1-\alpha\sum_{\substack{i\in\oneto k:\\
				a_{i}\in A\cap X
			}
		}\xi_{i}\Bigr)\Bigr]\\
		& \overset{|B|=k,|\{b_{1}\}\cap X|\leq1}{\leq} & \Bigl[\frac{|B\cap(X\cup Y')|}{k}\bigl(1-\alpha\sum_{\substack{i\in\oneto k:\\
				a_{i}\in A\cap X
			}
		}\xi_{i}\bigr)\Bigr]\\
		&  & -\Bigl[\frac{|B\cap Y'|\cdot|\{b_{1}\}\cap X|\cdot|B\cap X|-(|B\cap Y'|-1)|B\cap X|}{k}\Bigl(1-\alpha\sum_{\substack{i\in\oneto k:\\
				a_{i}\in A\cap X
			}
		}\xi_{i}\Bigr)\Bigr]\\
		& = & |B\cap Y'|\frac{|B\cap X|+1-|\{b_{1}\}\cap X|\cdot|B\cap X|}{k}\bigl(1-\alpha\sum_{\substack{i\in\oneto k:\\
				a_{i}\in A\cap X
			}
		}\xi_{i}\bigr) \\
		& \overset{\eqref{eq:F_alph-aug_adding_one_element_case3_int-step}}{=} & |B\cap Y'|\max_{y\in B\cap Y'}\Bigl\{\frac{|\{b_{1}\}\cap(X\cup\{y\})|\cdot|B\cap(X\cup\{y\})|-|\{b_{1}\}\cap X|\cdot|B\cap X|}{k} \\
		& & \quad\quad\cdot \bigl(1-\alpha\sum_{\substack{i\in\oneto k:\\
				a_{i}\in A\cap X
			}
		}\xi_{i}\bigr)\Bigr\}\\
		& \overset{\eqref{eq:F_alph-aug_adding_one_element_case2}}{\leq} & |B\cap Y'|\bigl(\max_{y\in B\cap Y'}F(X\cup\{y\})-F(X)\bigr)
	\end{eqnarray*}
	Thus, if $|\{b_{1}\}\cap X|\cdot|B\cap X|\leq\frac{k}{\alpha}<|\{b_{1}\}\cap(X\cup Y)|\cdot|B\cap(X\cup Y)|$,the function ~$F$ is $\alpha$-augmentability.
	
	If $\frac{k}{\alpha}<|\{b_{1}\}\cap X|\cdot|B\cap X|$, for $y\in Y'$,
	we have
	\begin{eqnarray}
		F(X\cup\{y\})-F(X) & = & \begin{cases}
			0 & \textrm{if }y=a_{i}\in A\cap Y',\\
			\frac{|B\cap(X\cup\{y\})|-|B\cap X|}{k} & \textrm{if }y\in B\cap Y',
		\end{cases}\label{eq:F_alph_aug_adding_one_element_case3}
	\end{eqnarray}
	which yields
	
	\begin{eqnarray*}
		F(X\cup Y)-\alpha F(X) & = & \frac{|B\cap(X\cup Y')|}{k}-\alpha\frac{|B\cap X|}{k}\\
		& \overset{\alpha\geq1}{\leq} & \frac{|B\cap(X\cup Y')|-|B\cap X|}{k}\\
		& = & \frac{|B\cap Y'|}{k}\\
		& = & |B\cap Y'|\max_{y\in B\cap Y'}\frac{|B\cap(X\cup\{y\})|-|B\cap X|}{k}\\
		& \leq & |Y|\max_{y\in B\cap Y'}\frac{|B\cap(X\cup\{y\})|-|B\cap X|}{k}\\
		& \overset{\eqref{eq:F_alph_aug_adding_one_element_case3}}{=} & |Y|\bigl(\max_{y\in Y}F(X\cup\{y\})-F(X)\bigr).
	\end{eqnarray*}
	This establishes $\alpha$-augmentability if $\frac{k}{\alpha}<|\{b_{1}\}\cap X|\cdot|B\cap X|$.
\end{proof}

Together with Proposition~\ref{prop:approx-ratio_of_F}, this extends
the lower bound of Theorem~\ref{thm:lower_bound_mcflows} to all
$\alpha\geq1$ and thus proves Corollary~\ref{cor:augmentable_lower_bound}.

With this, we can prove the second part of Theorem~\ref{thm:gam-alph-aug_weaker_than_rest}.
\begin{prop}
	\label{prop:F_separates_gam-alph-aug}For every $\gamma'\in(0,1)$,
	$\alpha'\geq\gamma'$, $\alpha\geq1$ and $k\in\N$ with $k>\alpha'$,
	it holds that $F_{\gamma',\alpha',k}\notin\augfuncs_{\alpha}$. For
	every $\gamma,\gamma',q\in(0,1]$ and $\alpha'\geq\gamma'$, there
	exists~$k'\in\N$ with $k'>\alpha$ such that $F_{\gamma',\alpha',k'}\nsubseteq\srfuncs_{\gamma}\cup\rqfuncs_{q}$.
\end{prop}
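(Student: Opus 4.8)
Here is how I would prove Proposition~\ref{prop:F_separates_gam-alph-aug}.

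The plan is to use $F:=F_{\gamma',\alpha',k}$, which lies in $\gafuncs_{\gamma',\alpha'}$ by Lemma~\ref{lem:F_monotone_and_gam-alph-aug}, as the separating function, and to pin down one explicit witness pair of sets for each of the three target classes; only for the class $\srfuncs_\gamma$ will $k$ have to be chosen large. Throughout I would rely on $h(0)=0$, $h(1)=1$, $h(k)=k/\gamma'$, on the identity~\eqref{eq:sum_xi_1_to_ell}, and on the greedy order from Proposition~\ref{prop:greedy_pickorder_F}. For the first claim I instantiate the definition of $\alpha$-augmentability with $X=\emptyset$ and $Y=B$. Reading off the definition of $F$ directly gives $F(\emptyset)=0$, $F(\{b_1\})=h(1)/k=1/k$, $F(\{b_j\})=h(0)/k=0$ for $j>1$, and $F(B)=h(k)/k=1/\gamma'$, so that $\max_{y\in Y}\bigl(F(X\cup\{y\})-F(X)\bigr)=1/k$ whereas $\bigl(F(X\cup Y)-\alpha F(X)\bigr)/|Y|=1/(\gamma' k)$; note that $\alpha$ has dropped out since $F(X)=0$. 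As $\gamma'<1$, no element of $Y$ can satisfy the defining inequality, so $F_{\gamma',\alpha',k}\notin\augfuncs_\alpha$ for every $\alpha\ge1$ and every admissible $k$.

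Next I would show that $F$ is not the weighted rank function of any independence system, hence $F\notin\rqfuncs_q$ for every $q$, by exhibiting a failure of subadditivity. Every weighted rank function $f$ is subadditive: for $Z\in\I$ with $Z\subseteq X\cup Y$ one has $Z\cap X,Z\cap Y\in\I$ and $w(Z)\le w(Z\cap X)+w(Z\cap Y)\le f(X)+f(Y)$, and taking the maximum over such $Z$ yields $f(X\cup Y)\le f(X)+f(Y)$. But for $F=F_{\gamma',\alpha',k}$ with $k\ge2$ we have $F(\{b_1\})=1/k$, $F(\{b_2\})=0$ and $F(\{b_1,b_2\})=h(2)/k$, and convexity of $h$ together with $h(0)=0$, $h(1)=1$ forces $h(2)\ge 2h(1)-h(0)=2$, so $F(\{b_1,b_2\})>F(\{b_1\})+F(\{b_2\})$, contradicting subadditivity.

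For the remaining class, take the greedy set $X=A=\greedyset_k$ (Proposition~\ref{prop:greedy_pickorder_F}) and $Y=B$, and write $\rho:=\bigl(\tfrac{k-\alpha'}{k}\bigr)^{k}\in(0,1)$, so that $F(A)=\sum_{i=1}^k\xi_i=(1-\rho)/\alpha'$ by~\eqref{eq:sum_xi_1_to_ell}. In the marginal sum over $Y$, only $b_1$ contributes, with $F(A\cup\{b_1\})-F(A)=\rho/k$ (exactly the marginal computed in the proof of Proposition~\ref{prop:greedy_pickorder_F}), since $F(A\cup\{b_j\})-F(A)=0$ for $j>1$ because the $h$-term of every $X'$ with $b_1\notin X'$ vanishes; thus the numerator of the submodularity-ratio quotient equals $\rho/k$. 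For the denominator, monotonicity gives $F(A\cup B)-F(A)\ge F(B)-F(A)=1/\gamma'-(1-\rho)/\alpha'\ge 1/\alpha'-(1-\rho)/\alpha'=\rho/\alpha'$, where the second step uses $\alpha'\ge\gamma'$. Hence the ratio in Definition~\ref{def:submodularity_ratio} evaluated at $(X,Y)$ is at most $(\rho/k)/(\rho/\alpha')=\alpha'/k$, so $\gamma(F_{\gamma',\alpha',k})\le\alpha'/k$. Choosing any integer $k'$ with $k'\ge2$ and $k'>\alpha'/\gamma$ (which forces $k'>\alpha'$, so $F_{\gamma',\alpha',k'}$ is well defined) makes this bound strictly less than $\gamma$, giving $F_{\gamma',\alpha',k'}\notin\srfuncs_\gamma$; combined with the two preceding paragraphs, $F_{\gamma',\alpha',k'}$ then witnesses $\gafuncs_{\gamma',\alpha'}\nsubseteq\srfuncs_\gamma\cup\augfuncs_\alpha\cup\rqfuncs_q$.

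The main obstacle is the bookkeeping in the last step: one must verify that precisely $b_1$ has a positive greedy marginal at $X=A$ (so the numerator is exactly $\rho/k$ and not something larger), and, more importantly, that $\rho=1-\alpha'\sum_{i=1}^k\xi_i>0$ — which holds because $k>\alpha'$ — so that the ratio is a genuine quotient of positive quantities rather than $\tfrac00$. Using the crude monotonicity bound $F(A\cup B)\ge F(B)=1/\gamma'$ conveniently sidesteps evaluating $F(A\cup B)$ exactly, which for $\alpha'>\gamma'$ equals the perhaps unexpected value $1/\gamma'$ (attained already by the subset $X'=B$, with all $a_i$ dropped) rather than $h(k)/k\,\bigl(1-\alpha'\sum_i\xi_i\bigr)+\sum_i\xi_i$.
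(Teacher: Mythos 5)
Your proposal is correct and follows essentially the same route as the paper: the same witnesses $X=\emptyset,Y=B$ against $\alpha$-augmentability, the same pair $\{b_1\},\{b_1,b_2\}$ against being a weighted rank function (you phrase it as a subadditivity violation, the paper as a contradiction on the weight/independence of $b_2$), and the same greedy-set witness $X=A,\ Y=B$ against bounded weak submodularity ratio. The only (harmless) deviation is that you bound the denominator via monotonicity and $\alpha'\ge\gamma'$ to get the explicit estimate $\gamma(F_{\gamma',\alpha',k})\le\alpha'/k$ and an explicit choice of $k'$, whereas the paper computes the ratio exactly and lets $k\to\infty$.
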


\begin{proof}
	For the first part, let $\gamma'\in(0,1)$, $\alpha'\geq\gamma'$
	and $k\in\N$ with $k>\alpha'$. Furthermore, let $X=\emptyset$ and
	$Y=B$. For $y\in Y$, we have
	\[
	F_{\gamma',\alpha',k}(X\cup\{y\})-F_{\gamma',\alpha',k}(X)=F_{\gamma',\alpha',k}(\{y\})\leq F_{\gamma',\alpha',k}(\{b_{1}\})=\frac{1}{k}
	\]
	and, for any $\alpha\geq1$, we have
	\[
	\frac{F_{\gamma',\alpha',k}(X\cup Y)-\alpha F_{\gamma',\alpha',k}(X)}{|Y|}=\frac{F_{\gamma',\alpha',k}(B)}{k}=\frac{1}{k\gamma'}>\frac{1}{k}
	\]
	because $\gamma'<1$. Thus, $F_{\gamma',\alpha',k}$ is not $\alpha$-augmentable
	for any $\alpha\geq1$.
	
	For the second part, let $\gamma'\in(0,1]$, $\alpha'\geq\gamma'$
	and $k\in\N$ with $k>\alpha'$. Furthermore, let $X=A=\greedyset_{k}$
	and $Y=B$. For $y\in Y$, we have
	\[
	F_{\gamma',\alpha',k}(X\cup\{y\})-F_{\gamma',\alpha',k}(X)=\begin{cases}
		\frac{1}{k}\bigl(1-\alpha'\sum_{i=1}^{k}\xi_{i}\bigr)=\frac{1}{k}\bigl(\frac{k-\alpha'}{k}\bigr)^{k} & \textrm{if }y=b_{1},\\
		0, & \textrm{else},
	\end{cases}
	\]
	and
	\[
	F_{\gamma',\alpha',k}(X\cup Y)-F_{\gamma',\alpha',k}(X)=F_{\gamma',\alpha',k}(B)-F_{\gamma',\alpha',k}(A)=\frac{1}{\gamma'}-\frac{1}{\alpha'}\bigl(1-\bigl(\frac{k-\alpha'}{k}\bigr)^{k}\bigr).
	\]
	For $k\rightarrow\infty$, the (weak) submodularity ratio gets arbitrarily
	close to 0 because
	\[
	\lim_{k\rightarrow\infty}\frac{\sum_{y\in Y}\bigl(F_{\gamma',\alpha',k}(X\cup\{y\})-F_{\gamma',\alpha',k}(X)\bigr)}{F_{\gamma',\alpha',k}(X\cup Y)-F_{\gamma',\alpha',k}(X)}=\lim_{k\rightarrow\infty}\frac{\frac{1}{k}\bigl(\frac{k-\alpha'}{k}\bigr)^{k}}{\frac{1}{\gamma'}-\frac{1}{\alpha'}\bigl(1-\bigl(\frac{k-\alpha'}{k}\bigr)^{k}\bigr)}=0,
	\]
	i.e., for $k=k'$ large enough, $F_{\gamma',\alpha',k'}\notin\srfuncs_{\gamma}$.
	It remains to show that $F_{\gamma',\alpha',k'}\nsubseteq\rqfuncs_{q}$.
	If $F_{\gamma',\alpha',k'}\in\rqfuncs_{q}$ would hold, then there
	would be some independence system with weight function $w$ such that
	$F_{\gamma',\alpha',k'}$ was the associated weighted rank function.
	The fact that $F_{\gamma',\alpha',k'}(\{b_{2}\})=0$ implies that~$b_{2}$
	must have weight 0 or~$\{b_{2}\}$ is not independent, and the fact
	that $F_{\gamma',\alpha',k'}(\{b_{1},b_{2}\})-F_{\gamma',\alpha',k'}(\{b_{1}\})=\frac{h(2)-h(1)}{k}>0$
	implies that $b_{2}$ must have a weight greater 0 and that $\{b_{2}\}$
	has to be independent, which contradict each other. Thus,~$F_{\gamma',\alpha',k'}$
	cannot be modelled as the weighted rank function of an independence
	system, i.e., $F_{\gamma',\alpha',k'}\notin\rqfuncs_{q}$.
\end{proof}

Finally, we can extend Proposition~\ref{prop:separating_alpha_aug_lite}
to all $\alpha\geq1$ by combining the fact that, by Proposition~\ref{prop:F_1,alph,k_is_alph-aug},
for $\alpha\geq1$, $\mbox{\ensuremath{\{F_{1,\alpha,k}\mid k\in\N,k>\alpha\}\subseteq\augfuncs_{\alpha}}}$
and the fact that, by Proposition~\ref{prop:F_separates_gam-alph-aug},
for every $\gamma,q\in(0,1]$, $\{F_{1,\alpha,k}\mid k\in\N,k>\alpha\}\nsubseteq\srfuncs_{\gamma}\cup\rqfuncs_{q}$.
\begin{prop}
	For every $\gamma,q\in(0,1]$, $\alpha\geq1$, it holds that $\augfuncs_{\alpha}\nsubseteq(\srfuncs_{\gamma}\cup\rqfuncs_{q})$.\label{prop:separating_alpha_aug}
\end{prop}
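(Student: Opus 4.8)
The plan is to produce a single witness function that is $\alpha$-augmentable yet lies outside both $\srfuncs_{\gamma}$ and $\rqfuncs_{q}$. The natural candidate is the critical function $F_{\gamma',\alpha',k}$ from Section~\ref{subsec:lower_bounds}, specialized to $\gamma'=1$ and $\alpha'=\alpha$, since its behaviour with respect to all three classes has already been analyzed.

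First I would invoke Proposition~\ref{prop:F_1,alph,k_is_alph-aug}: for every $\alpha\geq1$ and every $k\in\N$ with $k\geq\alpha$ we have $F_{1,\alpha,k}\in\augfuncs_{\alpha}$. Hence the entire family $\{F_{1,\alpha,k}\mid k\in\N,\ k>\alpha\}$ is contained in $\augfuncs_{\alpha}$. Next I would apply the second part of Proposition~\ref{prop:F_separates_gam-alph-aug} with the parameter choice $\gamma'=1$ and $\alpha'=\alpha$ (the hypothesis $\alpha'\geq\gamma'$ holds because $\alpha\geq1$): this yields some $k'\in\N$ with $k'>\alpha$ for which $F_{1,\alpha,k'}\notin\srfuncs_{\gamma}\cup\rqfuncs_{q}$. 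Combining the two statements for this particular $k'$ gives $F_{1,\alpha,k'}\in\augfuncs_{\alpha}\setminus(\srfuncs_{\gamma}\cup\rqfuncs_{q})$, which is exactly the asserted non-inclusion $\augfuncs_{\alpha}\nsubseteq(\srfuncs_{\gamma}\cup\rqfuncs_{q})$.

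There is no real obstacle here: the proof merely reassembles earlier results, since the substantive work -- verifying that $F_{1,\alpha,k}$ is genuinely $\alpha$-augmentable, and that letting $k\to\infty$ drives the weak submodularity ratio to $0$ while obstructing any weighted-rank representation (via the $b_2$-argument) -- has already been carried out in Propositions~\ref{prop:F_1,alph,k_is_alph-aug} and~\ref{prop:F_separates_gam-alph-aug}. The only point that warrants a brief remark is compatibility of the index ranges: Proposition~\ref{prop:F_1,alph,k_is_alph-aug} asserts $\alpha$-augmentability for all $k\geq\alpha$, while Proposition~\ref{prop:F_separates_gam-alph-aug} delivers some $k'>\alpha$; since $k'>\alpha$ implies $k'\geq\alpha$, the \emph{same} value $k'$ serves both purposes, so no additional argument is required.
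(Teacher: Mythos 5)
Your proposal is correct and matches the paper's argument exactly: the paper also obtains Proposition~\ref{prop:separating_alpha_aug} by combining Proposition~\ref{prop:F_1,alph,k_is_alph-aug} (showing $F_{1,\alpha,k}\in\augfuncs_{\alpha}$) with the second part of Proposition~\ref{prop:F_separates_gam-alph-aug} (providing $k'$ with $F_{1,\alpha,k'}\notin\srfuncs_{\gamma}\cup\rqfuncs_{q}$). Your remark on the compatibility of the index ranges ($k'>\alpha$ implies $k'\geq\alpha$) is a fine, if minor, point of care.
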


\subsection{$\gamma$-$\alpha$-Augmentability on Independence Systems\label{subsec:IS}}

To tightly capture the class $\rqfuncs_{q}$ of weighted rank functions
on independence systems, we show a stronger bound for the approximation
ratio of the greedy algorithm on monotone, (weakly) $\gamma$-$\alpha$-augmentable
functions. In particular, it was already shown in~\cite{BernsteinDisserGrossHimburg/20}
that the objective function of $\alpha\textsc{-Dimensional Matching}$
is (exactly) $\alpha$-augmentable, while the greedy algorithm yields
an approximation ratio of $\alpha$, which beats the upper bound of
$\alpha\cdot\frac{\e^{\alpha}}{\e^{\alpha}-1}$ for this case. We
show that this can be explained by the fact that $\alpha\textsc{-Dimensional Matching}$
can be represented via a weighted rank function over an indepencence
system. We first show the upper bound of Theorem~\ref{thm:IS_bound}.
\begin{prop}
	\label{prop:upper-bound_gamma-alpha-aug_ind-sys}Let $\rqfuncs_{\mathrm{IS}}:=\bigcup_{q\in(0,1]}\rqfuncs_{q}$
	be the set of weighted rank functions on some independence system.
	The approximation ratio of the greedy algorithm on the class $\gafuncs_{\gamma,\alpha}\cap\rqfuncs_{\mathrm{IS}}$
	is at most~$\frac{\alpha}{\gamma}$, for every $\gamma\in(0,1]$
	and $\alpha\geq\gamma$.
\end{prop}

\begin{proof}
	Let $f\in\gafuncs_{\gamma,\alpha}\cap\rqfuncs_{\mathrm{IS}}$, and
	let $w\colon\U\rightarrow\R_{\geq0}$ be the weight function that
	induces $f$. We use induction over $k$. For $k=0$, the statement
	holds obviously. Now suppose, the statement holds for some $k\in\oneto{|\U|-1}$.
	If $f(\greedyset_{k})\geq\frac{\gamma}{\alpha}f(\optset_{k+1}),$
	then, by monotonicity of $f$, we have $f(\greedyset_{k+1})\geq f(\greedyset_{k})\geq\frac{\gamma}{\alpha}f(\optset_{k+1})$.
	Otherwise, the weak $\gamma$-$\alpha$-augmentability of $f$ guarantees
	the existence of $x\in\optset_{k+1}$ with
	\[
	f(\greedyset_{k}\cup\{x\})-f(\greedyset_{k})\geq\frac{\gamma f(\greedyset_{k}\cup\optset_{k+1})-\alpha f(\greedyset_{k})}{|\optset_{k+1}|}\geq\frac{\gamma f(\optset_{k+1})-\alpha f(\greedyset_{k})}{k+1}>0.
	\]
	By Lemma \ref{lem:adding_element_to_greedy_set}, this is equivalent
	to $f(\greedyset_{k}\cup\{x\})=f(\greedyset_{k})+w(x).$ We conclude\belowdisplayskip=-12pt
	\begin{eqnarray*}
		f(\greedyset_{k+1}) & \geq & f(\greedyset_{k}\cup\{x\})\\
		& = & f(\greedyset_{k})+w(x)\\
		& \overset{\textrm{ind}}{\geq} & \frac{\gamma}{\alpha}f(\optset_{k})+w(x)\\
		& \geq & \frac{\gamma}{\alpha}f(\optset_{k+1}\setminus\{x\})+w(x)\\
		& \overset{\alpha\geq\gamma}{\geq} & \frac{\gamma}{\alpha}f(\optset_{k+1}\setminus\{x\})+\frac{\gamma}{\alpha}w(x)\\
		& \geq & \frac{\gamma}{\alpha}f(\optset_{k+1}).
	\end{eqnarray*}
\end{proof}

The lower bound of Theorem~\ref{thm:IS_bound} follows directly from
the well-known tight bound of~$1/q$ for $\rqfuncs_{q}$.
\begin{prop}
	\label{prop:lower-bound_gamma-alpha-aug_ind-sys}Let $\rqfuncs_{\mathrm{IS}}:=\bigcup_{q\in(0,1]}\rqfuncs_{q}$
	be the set of weighted rank functions on some independence system.
	The approximation ratio of the greedy algorithm on the class $\gafuncs_{\gamma,\alpha}\cap\rqfuncs_{\mathrm{IS}}$
	is at least $\frac{\alpha}{\gamma}$, for every $\gamma\in(0,1]$
	and $\alpha\geq\gamma$.
\end{prop}

\begin{proof}
	Let $\gamma\in(0,1]$, $\alpha\in\R$ and $q\in[\frac{\gamma}{\alpha},1]\cap\Q$.
	In~\cite{Jenkyns1976} it was shown that the the approximation ratio
	of the greedy algorithm on the set $\rqfuncs_{q}$ is exactly $1/q$.
	By definition of $\rqfuncs_{\mathrm{IS}}$, we have $\rqfuncs_{q}\subseteq\rqfuncs_{\mathrm{IS}}$,
	and, by Proposition~\ref{prop:gamma-alpha_unifies_alpha_and_submod-ratio},
	$\rqfuncs_{q}\subseteq\gafuncs_{\gamma,\gamma/q}\subseteq\gafuncs_{\gamma,\alpha}$
	holds, where we use the fact that $\frac{\gamma}{q}\leq\frac{\gamma}{\gamma/\alpha}=\alpha$.
	Thus, we can conclude that the approximation ratio of the greedy algorithm
	on the class $\smash{\gafuncs_{\gamma,\alpha}\cap\rqfuncs_{\mathrm{IS}}}$
	is at least~$1/q$, and since $q$ can be chosen arbitrarily close
	to $\frac{\gamma}{\alpha}$, the statement follows.
\end{proof}

It can be shown that the lower bound of Proposition~\ref{prop:lower-bound_gamma-alpha-aug_ind-sys}
already holds for \mbox{$\gamma$-$\alpha$-augmentable} functions, i.e.,
in the non-weak subclass of $\gafuncs_{\gamma,\alpha}$. It follows
that the tight bound of Theorem~\ref{thm:IS_bound} carries over
to this, in some sense more natural, class of functions. Since every
$\alpha$-augmentable function is 1-$\alpha$-augmentable, and vice-versa,
we additionally obtain the following. Note that this tightly captures
the perfomance of the greedy algorithm for the $\alpha\textsc{-Dimensional Matching}$
problem, which can be represented as the maximization of an $\alpha$-augmentable
weighted rank function over an independence system~\cite{BernsteinDisserGrossHimburg/20}.
\begin{cor}
	The approximation ratio of the greedy algorithm on the class $\augfuncs_{\alpha}\cap\rqfuncs_{\mathrm{IS}}$,
	with $\alpha\geq1$, is exactly~$\alpha$.
\end{cor}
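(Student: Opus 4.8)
The plan is to match the upper bound $\alpha$ of Proposition~\ref{prop:upper-bound_gamma-alpha-aug_ind-sys} (with $\gamma=1$) by a family of $\alpha$-augmentable weighted rank functions on which the greedy algorithm does almost a factor $\alpha$ worse than the optimum. Together the two halves also realise the $\gamma=1$ case of the non-weak version of Theorem~\ref{thm:IS_bound}, since for monotone functions $\alpha$-augmentability and (non-weak) $1$-$\alpha$-augmentability coincide: with $\gamma=1$ the defining inequality is literally the one in the definition of $\augfuncs_\alpha$, and whenever $Y\nsubseteq X$ a witness $y\in Y$ may be replaced by a witness in $Y\setminus X$, the gain of any element of $Y\setminus X$ being nonnegative by monotonicity.

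For the upper bound I would simply note that every $\alpha$-augmentable function is weakly $1$-$\alpha$-augmentable, so $\augfuncs_\alpha\cap\rqfuncs_{\mathrm{IS}}\subseteq\gafuncs_{1,\alpha}\cap\rqfuncs_{\mathrm{IS}}$, and then invoke Proposition~\ref{prop:upper-bound_gamma-alpha-aug_ind-sys} with $\gamma=1$, which bounds the approximation ratio on the larger, hence on the smaller, class by $\alpha/1=\alpha$.

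For the lower bound I would proceed exactly as in the proof of Proposition~\ref{prop:lower-bound_gamma-alpha-aug_ind-sys}, but through a non-weak version of the containment used there. Fix a rational $q$ with $1/\alpha\le q\le 1$ (possible since $\alpha\ge 1$); by~\cite{Jenkyns1976,Korte1978} the approximation ratio of the greedy algorithm on $\rqfuncs_q$ is exactly $1/q$. The crucial claim is that $\rqfuncs_q\subseteq\augfuncs_{1/q}$, i.e.\ the non-weak strengthening of the third part of Proposition~\ref{prop:gamma-alpha_unifies_alpha_and_submod-ratio}. Granting it, $\rqfuncs_q\subseteq\augfuncs_{1/q}\subseteq\augfuncs_\alpha$ because $1/q\le\alpha$ and $\beta$-augmentability implies $\alpha$-augmentability for every $\beta\le\alpha$ (the right-hand side of the augmentability inequality is nonincreasing in the augmentation parameter, as $f\ge 0$); together with $\rqfuncs_q\subseteq\rqfuncs_{\mathrm{IS}}$ this gives $\rqfuncs_q\subseteq\augfuncs_\alpha\cap\rqfuncs_{\mathrm{IS}}$, so the approximation ratio on $\augfuncs_\alpha\cap\rqfuncs_{\mathrm{IS}}$ is at least $1/q$. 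Since $q$ may be chosen arbitrarily close to $1/\alpha$, this ratio is at least $\alpha$, which with the upper bound proves the corollary.

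The main work, and the step I expect to be the obstacle, is proving $\rqfuncs_q\subseteq\augfuncs_{1/q}$. I would rerun the argument for the third part of Proposition~\ref{prop:gamma-alpha_unifies_alpha_and_submod-ratio}, replacing the greedy set $\greedyset_k$ by a maximum-weight basis $B_X$ of the arbitrary set $X$: given $X$ and $Y$ with $Y':=Y\setminus X\ne\emptyset$, fix a basis $S'\cup Y_0$ of $X\cup Y$ with $S'\subseteq X$, $Y_0\subseteq Y'$ and $w(S'\cup Y_0)=f(X\cup Y)$, pass to a restricted independence system on $B_X\cup\tilde Y$ with $\tilde Y\subseteq Y_0$ built from the heaviest-gain element of $Y_0$ exactly as there (restriction only increasing the rank quotient), and apply the estimate $f(\greedyset)\ge q\,f(\optset)$ of~\cite{Jenkyns1976,Korte1978} on the restriction. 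The one point where the original proof uses Lemma~\ref{lem:adding_element_to_greedy_set} — concluding that adding a positive-gain element raises $f$ by exactly its weight — no longer applies, since $B_X\cup\{y\}$ need not be independent; the substitute is a basis-exchange estimate bounding, for each candidate $y\in Y_0$, the shortfall of $f(X\cup\{y\})-f(X)$ below $w(y)$, and summing these shortfalls against the weights of the elements set aside in $\tilde Y$, using maximality of $w(B_X)$. Pushing this exchange bookkeeping through cleanly is the delicate part; the remaining computation is a routine rerun of the earlier one. (For integer $\alpha$ one could instead cite that $\alpha\textsc{-Dimensional Matching}\in\augfuncs_\alpha\cap\rqfuncs_{\mathrm{IS}}$ already realises ratio $\alpha$, but the rational-$q$ route is needed to cover all real $\alpha\ge 1$.)
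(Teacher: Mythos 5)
Your upper bound is fine and is exactly the paper's: $\augfuncs_{\alpha}$ coincides with the non-weak $1$-$\alpha$-augmentable functions (your monotonicity argument for moving the witness into $Y\setminus X$ is correct), every such function is in particular weakly $1$-$\alpha$-augmentable, and Proposition~\ref{prop:upper-bound_gamma-alpha-aug_ind-sys} with $\gamma=1$ gives the bound $\alpha$. The gap is in the lower bound. Your entire argument rests on the claim $\rqfuncs_{q}\subseteq\augfuncs_{1/q}$, i.e.\ the non-weak strengthening of the third part of Proposition~\ref{prop:gamma-alpha_unifies_alpha_and_submod-ratio}, and this claim is never proved: you yourself defer the decisive step to an unspecified ``basis-exchange estimate'' whose bookkeeping you call the delicate part. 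Moreover, the adaptation you sketch does not survive the substitution of a maximum-weight basis $B_X$ of an arbitrary $X$ for the greedy set. The original proof uses the greedy property of $X$ in \emph{two} places: once via Lemma~\ref{lem:adding_element_to_greedy_set} (which you notice), and once when invoking the Jenkyns/Korte--Hausmann estimate on the restricted system $(\tilde{\U},\tilde{\I})$ -- that estimate compares the \emph{greedy} solution of the restricted system to its optimum, and it is only because $\greedyset_{k}$ \emph{is} that greedy solution that one may conclude $f(\greedyset_{k})\geq q\,w(S'\cup\tilde{Y})$. For an arbitrary $X$, the greedy run on your restricted system $2^{B_X}\cup 2^{S'\cup\tilde{Y}}$ will typically start with the heavy elements of $\tilde{Y}$ and end up inside $S'\cup\tilde{Y}$, so the estimate says nothing about $f(X)$; your sketch repairs only the Lemma~\ref{lem:adding_element_to_greedy_set} step and leaves this second, equally essential step broken. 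So as written the proposal does not establish the lower bound (the containment itself may well be true -- natural attempts at counterexamples come out exactly tight -- but proving it is a genuinely new piece of work, not a ``routine rerun'').

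For comparison, the paper takes a weaker and cheaper route to the lower bound: it does not claim $\rqfuncs_{q}\subseteq\augfuncs_{1/q}$ in general, but only that the lower bound of Proposition~\ref{prop:lower-bound_gamma-alpha-aug_ind-sys} is already realized by functions that are non-weakly $\gamma$-$\alpha$-augmentable, i.e.\ it suffices to verify augmentability for the concrete tight instances of~\cite{Jenkyns1976,Korte1978} (or similar explicit constructions), which is a finite, checkable task rather than a structural containment for the whole class $\rqfuncs_{q}$. If you want to complete your proposal along your own lines, you must either prove the general containment (including a correct replacement for the greedy-vs-optimum step on the restricted system) or retreat to the paper's strategy and verify $1/q$-augmentability directly for an explicit family of rank-quotient-$q$ instances witnessing the ratio $1/q$, with $q\in\Q$ approaching $1/\alpha$.
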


\section{Outlook}

The vision guiding our work is to precisely characterize the set of
cardinality-constrained maximization problems for which the greedy
algorithm yields an approximation, and to tightly bound the corresponding
approximation ratio.

In this paper, we have made progress towards this goal by unifying
and generalizing important classes of greedily approximable maximization
problems, and by providing tight bounds on the approximation ratio
for the resulting generalized class of problems. While this brings
us closer to a full characterization, there are still settings that
are not captured by (weak) $\gamma$-$\alpha$-augmentability.
\begin{prop}
	\label{prop:greedy_optimal_not_gam-alph-aug}For $\gamma\in(0,1]$
	and $\alpha\geq\gamma$, there exists a monotone function $f^{\gamma,\alpha}$
	that is not weakly $\gamma$-$\alpha$-augmentable, and for which
	the greedy algorithm computes an optimum solution.
\end{prop}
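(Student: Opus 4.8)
The plan is to exhibit a concrete two-element witness, obtained as a mild sharpening of the function $f^{\gamma}$ from Proposition~\ref{prop:separating_submod_ratio}. Concretely, I would set $\U=\{a,b\}$ and define $f^{\gamma,\alpha}\colon 2^{\U}\to\R_{\geq 0}$ by $f^{\gamma,\alpha}(\emptyset)=0$, $f^{\gamma,\alpha}(\{a\})=f^{\gamma,\alpha}(\{b\})=1$, and $f^{\gamma,\alpha}(\U)=c$ for a fixed constant $c>\tfrac{2}{\gamma}$ (for instance $c=\tfrac{3}{\gamma}$). Recall that $f^{\gamma}$ has $f^{\gamma}(\U)=\tfrac{2}{\gamma}$, which is precisely the value for which the augmentability inequality at $X=\emptyset$ holds \emph{with equality}; pushing the value of the full set strictly past this threshold is exactly what destroys weak $\gamma$-$\alpha$-augmentability while leaving greedy optimality untouched.

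Monotonicity is immediate since $0\le 1\le c$ (using $\gamma\le 1$). For greedy optimality I would simply go through the three relevant cardinalities: for $k=0$ both the greedy and an optimal solution are $\emptyset$; for $k=1$ every singleton has value $1$, so whichever element greedy picks first is optimal; and for $k=2$ the only feasible set is $\U$, which greedy reaches after two steps. Hence $f^{\gamma,\alpha}(\greedyset_{k})=f^{\gamma,\alpha}(\optset_{k})$ for all $k$. In particular the greedy sets are $\greedyset_{0}=\emptyset$, $\greedyset_{1}=\{a\}$ (say), and $\greedyset_{2}=\U$, with $\lastk=2$, so that $\emptyset$ is a legitimate choice of $X$ in the weak definition.

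To show $f^{\gamma,\alpha}\notin\gafuncs_{\gamma,\alpha}$ I would instantiate the defining inequality at $X=\greedyset_{0}=\emptyset$ and $Y=\U$ (so $Y\nsubseteq X$). For every $y\in Y$ we have $f^{\gamma,\alpha}(X\cup\{y\})-f^{\gamma,\alpha}(X)=1$, whereas
\[
\frac{\gamma f^{\gamma,\alpha}(X\cup Y)-\alpha f^{\gamma,\alpha}(X)}{|Y|}=\frac{\gamma c}{2}>1,
\]
by the choice $c>\tfrac{2}{\gamma}$. Thus no $y\in Y$ fulfils the required bound, so $f^{\gamma,\alpha}$ is not weakly $\gamma$-$\alpha$-augmentable; note that this holds for every admissible $\alpha\ge\gamma$, since the term $\alpha f^{\gamma,\alpha}(\emptyset)=0$ drops out and only $\gamma$ enters the violation.

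There is no genuine obstacle here; the only points that need care are (i) confirming that $\emptyset$ really is one of the greedy sets so it may be used in the weak definition, and (ii) taking the constant $c$ \emph{strictly} larger than $\tfrac{2}{\gamma}$ — the boundary value $c=\tfrac{2}{\gamma}$ recovers exactly $f^{\gamma}$, for which the inequality holds with equality and which therefore would not be a counterexample. This construction also fixes the position of $f^{\gamma,\alpha}$ in Figure~\ref{fig:classes}: just outside the weakly $\gamma$-$\alpha$-augmentable class, yet still greedily (in fact optimally) solvable.
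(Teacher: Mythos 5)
Your construction is correct: on $\U=\{a,b\}$ the function with values $0,1,1,c$ (for fixed $c>\tfrac{2}{\gamma}$) is monotone, the greedy algorithm is optimal for every cardinality by symmetry, and the weak $\gamma$-$\alpha$-augmentability inequality fails at $X=\greedyset_{0}=\emptyset$, $Y=\U$, since every singleton gain is $1$ while $\tfrac{\gamma f^{\gamma,\alpha}(\U)-\alpha\cdot 0}{2}=\tfrac{\gamma c}{2}>1$; you also correctly note that $\greedyset_{0}=\emptyset$ is always among the admissible sets in the weak definition, so the violation indeed refutes the weak property and is independent of $\alpha$. The paper proves the same statement with a different witness: it takes any ground set with $|U|>\tfrac{1}{\gamma}$ and the cardinality-based objective $f^{\gamma,\alpha}(X)=|X|^{2}$, and violates the inequality at $X=\emptyset$ with $|Y|=\lfloor\tfrac{1}{\gamma}\rfloor+1$ (more generally for any $X,Y$ with $|Y|>\tfrac{1}{\gamma}(2|X|+1+\alpha|X|^{2})$). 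The proof skeleton is the same in both cases -- a symmetric function whose joint value outpaces singleton gains, so greedy optimality is trivial while augmentability fails at the empty set -- but the witnesses differ in what they buy: your two-element example is minimal and makes the threshold $\tfrac{2}{\gamma}$ from $f^{\gamma}$ in Proposition~\ref{prop:separating_submod_ratio} explicit, whereas the paper's $|X|^{2}$ example exhibits violations at arbitrarily many (and nonempty) sets $X$, not just at $\emptyset$, and supports the subsequent Remark interpreting it as an incremental cut/flow objective on a complete bipartite graph. Both are complete proofs of the proposition as stated.
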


\begin{proof}
	Let $U$ be any ground set of size $|U|>\frac{1}{\gamma}$ and consider
	the objective function $f^{\gamma,\alpha}\colon2^{\U}\rightarrow\R_{\geq0}$
	with
	\[
	f^{\gamma,\alpha}(X)=|X|^{2}.
	\]
	For all $X,Y\subseteq U$ with $|Y|>\frac{1}{\gamma}(2|X|+1+\alpha|X|^{2})$
	$(*)$, e.g., $X=\emptyset$ and $|Y|=\lfloor\frac{1}{\gamma}\rfloor+1$,
	we have
	\begin{eqnarray*}
		|Y|(f^{\gamma,\alpha}(X\cup\{y\})-f^{\gamma,\alpha}(X)) & = & |Y|(2|X|+1)\\
		& \overset{(*)}{<} & \gamma|Y|^{2}-\alpha|Y||X|^{2}\\
		& \leq & \gamma|X\cup Y|^{2}-\alpha|X|^{2}\\
		& = & \gamma f^{\gamma,\alpha}(X\cup Y)-\alpha f^{\gamma,\alpha}(X),
	\end{eqnarray*}
	i.e., $f^{\gamma,\alpha}$ is not weakly $\gamma$-$\alpha$-augmentable.
	Yet, picking elements in any order is obviously optimal. Thus, there
	exists a problem that is not $\gamma$-$\alpha$-augmentable, but
	where the greedy algorithm performs optimally.
\end{proof}
\begin{rem}
	Objective functions as in the proof of Proposition \ref{prop:greedy_optimal_not_gam-alph-aug}
	arise for example in the context of incremental maximum flows on a
	complete bipartite graph $G=(U\cup V,E)$ where we want to incrementally
	grow subsets of $U$ and of $V$ such that the flow from one of the
	subsets to the other (i.e., the cut size) is maximized.
\end{rem}

We leave it as an open problem to find a natural generalization of
weak $\gamma$-$\alpha$-augmentability that captures a larger set
of greedily approximable objectives. The challenge is to find a meaningful
generalization in terms of a natural definition that does not directly
depend on the behavior of the greedy algorithm, but rather enforces
some structural property of the objective function. In that sense,
the dependency of weak $\gamma$-$\alpha$-augmentability on the greedy
solutions $\greedyset_{0},\dots,\greedyset_{\lastk}$ is a significant
flaw. Note that we needed to introduce this dependency in order to
encompass settings with bounded (weak) submodularity ratios, since
the definition of the latter depends on the greedy solutions as well.
Importantly, our upper bound on the approximation ratio of the greedy
algorithm carries over to the stronger notion of $\gamma$-$\alpha$-augmentability
that requires the defining property to hold for \textit{all} sets $X$,
and not just the greedy solutions. Our tight lower bound does not
immediately translate to this, more restrictive, definition, and it
remains an open problem to construct a tight lower bound in this setting
as well.

\bibliographystyle{siamplain}
\bibliography{bibliography}
\end{document}